\newtheorem{definition}{Definition}
\DeclareMathOperator{\tr}{tr}
\newcommand{\dbtilde}[1]{\accentset{\approx}{#1}}
\title{Twice Upon a Time: Timelike-Separated Quantum Extremal Surfaces}
\author[1]{Netta Engelhardt,}
\author[2]{Geoff Penington,}
\author[3]{and Arvin Shahbazi-Moghaddam}
\affiliation[1]{Center for Theoretical Physics, Massachusetts Institute of Technology, \\Cambridge, MA 02139, USA}
\affiliation[2]{Center for Theoretical Physics and Department of Physics,\\
University of California, Berkeley, CA 94720, U.S.A.}
\affiliation[3]{Stanford Institute for Theoretical Physics,\\ Stanford University, Stanford, CA 94305 USA}
\emailAdd{engeln@mit.edu}
\emailAdd{geoffp@berkeley.edu}
\emailAdd{arvinshm@gmail.com}
\abstract{The Python's Lunch conjecture for the complexity of bulk reconstruction involves two types of nonminimal quantum extremal surfaces (QESs): bulges and throats, which differ by their local properties. The conjecture relies on the connection between bulk spatial geometry and quantum codes: a constricting geometry from bulge to throat encodes the bulk state nonisometrically, and so requires an exponentially complex Grover search to decode. However, thus far, the Python's Lunch conjecture is only defined for spacetimes where all QESs are spacelike-separated from one another. Here we explicitly construct (time-reflection symmetric) spacetimes featuring both timelike-separated bulges and timelike-separated throats. Interestingly, all our examples also feature a third type of QES, locally resembling a de Sitter bifurcation surface, which we name a bounce. By analyzing the Hessian of generalized entropy at a QES, we argue that this classification into throats, bulges and bounces is exhaustive. We then propose an updated Python's Lunch conjecture that can accommodate general timelike-separated QESs and bounces. Notably, our proposal suggests that the gravitational analogue of a tensor network is not necessarily the time-reflection symmetric slice, even when one exists.}
\begin{document}
\maketitle
\section{Introduction}

Over the past decade, quantum extremal surfaces (QESs) have come to play a central role in our understanding of quantum gravity and holography. The QES prescription~\cite{RyuTak06, HubRan07, FauLew13, EngWal14} relates the von Neumann entropy of a boundary, or boundary subregion, $B$ in AdS/CFT to the generalized entropy
\begin{align} \label{eq:sgen}
    S_\mathrm{gen}(\gamma_\mathrm{min}) = \frac{\mathrm{Area}(\gamma)}{4G} + S_\mathrm{bulk}(\gamma_\mathrm{min})
\end{align}
of a surface $\gamma_\mathrm{min}$ homologous to $B$. Here $\gamma$ denotes a codimension-two bulk surface, $\mathrm{Area}(\gamma)$ is its classical area, $G$ is Newton's constant, and $S_\mathrm{bulk}(\gamma)$ is the von Neumann entropy of bulk quantum fields on one side of the surface $\gamma$. The particular surface $\gamma_\mathrm{min}$ that appears in \eqref{eq:sgen} is required to be quantum extremal --  i.e. a critical point of the functional \eqref{eq:sgen} with respect to small perturbations in the location of $\gamma_\mathrm{min}$. If multiple QESs (homologous to a given boundary region) exist, $\gamma_\mathrm{min}$ is the minimal S$_{\rm gen}$ such surface. A closely related idea is the notion of entanglement wedge reconstruction~\cite{AlmDon14, JafLew15, DonHar16, FauLew17, CotHay17}, or subregion-subregion duality, which roughly identifies the exterior of this minimal QES -- the so-called entanglement wedge -- with the bulk information that can be reconstructed from the state $\rho_{B}$ on $B$ and the algebra of operators on $B$~\cite{Har16}.\footnote{For subtleties and qualifications regarding the meaning and regime of validity of the QES prescription and entanglement wedge reconstruction, see \cite{HayPen18, AkeLei19, AkePen20, AkePen21, AkeLev23}.}

\begin{figure}
    \centering
\includegraphics[width=0.5\textwidth]{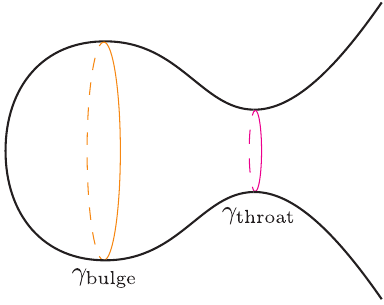}
    \caption{A spatial slice of a Python's lunch geometry. The boundary theory lives on a copy of  the asymptotic  boundary on the right. Figure reproduced from our previous work~\cite{EngPen21b}.}
    \label{fig:BulgevsThroat}
\end{figure}

In the last few years it has become clear that other ``nonminimal'' quantum extremal surfaces also play a crucial role in controlling the flow of information through the bulk-to-boundary map. In particular, the Python's lunch conjecture~\cite{BroGha19, EngPen21a, EngPen21b} proposes that the \emph{complexity} of reconstructing information within the entanglement wedge is exponential in the difference between the generalized entropies of two spacelike-separated \emph{nonminimal} QESs $\gamma_{\rm aptz}$ and $\gamma_{\rm main}$:
\begin{align} \label{eq:PLconjintro}
    C\sim \exp \left [\frac{1}{2} \left (S_{\rm gen}(\gamma_{\rm main})-S_{\rm gen}(\gamma_{\rm aptz}) \right) \right]
\end{align}
As shown schematically in Fig.~\ref{fig:BulgevsThroat}, these two QESs always have qualitatively different quasi-local properties. Specifically, the surface $\gamma_{\rm aptz}$, known as the appetizer surface, is always a local minimum of the generalized entropy on some partial Cauchy slice. We will more generally call any QES for which this quasi-local condition is true a ``throat''. The minimal QES is also always a throat -- indeed it is easy to construct continuous families of spacetimes where phase transitions exchange the minimal QES $\gamma_\mathrm{min}$ and the appetizer $\gamma_{\rm aptz}$. However, the QES $\gamma_{\rm main}$, which always has larger generalized entropy than $\gamma_{\rm aptz}$, is never a throat: there always exist local perturbations of $\gamma_{\rm main}$, within any Cauchy slice, which decrease its generalized entropy. Instead, it is a qualitatively distinct type of QES, that we call a bulge.

Much of the intuition behind the role of QESs in holography (and in particular for the Python's lunch conjecture) comes from tensor network toy models, which have provided valuable insights into AdS/CFT (see e.g.~\cite{Swi09, HaPPY, HayNez16}). In a tensor network version of a Python's lunch (shown in Fig.~\ref{fig:TNtoy}), the tensor network map from $\gamma_{\rm main}$ to $\gamma_\mathrm{aptz}$ features a large amount of postselection. As a result, the map from $\gamma_{\rm min}$ to $\gamma_{\rm aptz}$, which must be inverted in order to reconstruct the bulk behind $\gamma_{\rm aptz}$, is expected to have very high complexity. Indeed, the best known algorithm for such an inversion is Grover search~\cite{Gro96}, which has a complexity that is exponential in the amount of postselection. Since the amount of postselection in a tensor network is proportional  to $S_\mathrm{gen}(\gamma_{\rm main}) - S_\mathrm{gen}(\gamma_\mathrm{throat})$, \eqref{eq:PLconjintro} follows from the assumption that a) no algorithm faster than Grover search exists and b) that the gravitational bulk-to-boundary map has the same complexity as the analogous tensor network.

\begin{figure}
    \centering
\includegraphics[width=0.7\textwidth]{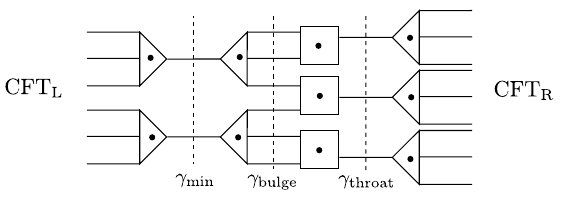}
    \caption{A tensor network representation of the python's lunch. The black dots are bulk legs, and the increase and decrease in the total bond dimension of the shown cuts introduces post-selection into the bulk-to-boundary map, from the bulk legs to the left of $\gamma_{\text{min}}$ to CFT$_R$. According to the Python's lunch conjecture, this results in an exponential enhancement in the map's complexity. Figure reproduced from our previous work~\cite{EngPen21b}.}
    \label{fig:TNtoy}
\end{figure}

An unfortunate feature of tensor networks as toy models of AdS/CFT is that they only describe an analogue of a single spatial slice rather than a fully covariant spacetime. Consequently, the emergence of bulk dynamical time in AdS/CFT remains deeply mysterious, with no clear information-theoretic interpretation. However, so long as all of the QESs homologous to a particular (complete) boundary $B$ are achronally-separated from one another, it is possible to find a Cauchy slice that contains all of those QESs simultaneously. Such Cauchy slices give a fairly direct analogue of a tensor network setup, as originally noted by~\cite{Swi09}. This achronal separation is often assumed in the literature -- implicitly or otherwise -- and occurs in many, or even most, situations of interest. In fact, to the authors' knowledge, no explicit examples of timelike-separated QESs homologous to the same boundary $B$ have so far appeared in the literature.\footnote{One construction of multiple \emph{classical} extremal surfaces anchored to a single boundary subregion that are not achronally separated from one another involves geodesics that wind around black holes in three spacetime dimensions~\cite{HubMax13}. However such surfaces are not individually achronal, which is a necessary condition for a surface to have well-defined generalized entropy, and hence to be potentially \emph{quantum} extremal.} However there is also no known general principle (such as the quantum focusing conjecture~\cite{BouFis15a}) that rules out such configurations.

In this paper, we demonstrate by explicit construction that timelike-separated QESs homologous to the same boundary connected component $B$ \emph{do} in fact exist. Our constructions of timelike-separated QESs have both the two qualitatively different types of QESs that appear in the Python's Lunch: ``throats'', which are locally minimally on some Cauchy slice and ``bulges'', which are not. Surprisingly, our constructions \textit{always} feature a third type of QES, which we call a ``bounce''. Heuristically, unlike both bulges and throats, bounces are locally minimal rather than maximal in time. Familiar examples of bounces are the bifurcation surface in de Sitter space and the inner bifurcation surface of a Kerr-Newman black hole (including the more symmetric situation of a Reissner-Nordstrom black hole). 
Bounces have so far not played a prominent role in AdS/CFT; e.g. the inner horizon of an AdS-Reissner-Nordstrom black hole is generally regarded as unphysical because it lies on a Cauchy horizon and is therefore not contained in the Wheeler-de Witt patch of any boundary slice. 

Since bounces appear in all of our explicit examples of timelike-separated bulges and throats, it is tempting to suggest that bounces are \emph{always} present in spacetimes with timelike-separated extremal surfaces. We show that this is true classically in spherically symmetric spacetimes; whether it is always true more generally is an interesting open question.

To be concrete, we construct time symmetric two-sided initial data for JT gravity minimally coupled to a massive scalar. Our initial data slices have three extremal surfaces: two throats on either side of a bounce (the slices are $\mathbb{Z}_{2}$-symmetric about the bounce) and asymptote to the usual two-sided black hole in pure JT. The resulting domain of dependence has two bulges or two throats in the past and future of the bounce depending on the choice of profile for the dilaton; the spacetime is illustrated in Fig.~\ref{fig:examplespacetime}. As semiclassical Lorentzian solutions, these spacetimes are well behaved.

\begin{figure}
    \centering
    \includegraphics[width=0.5\textwidth]{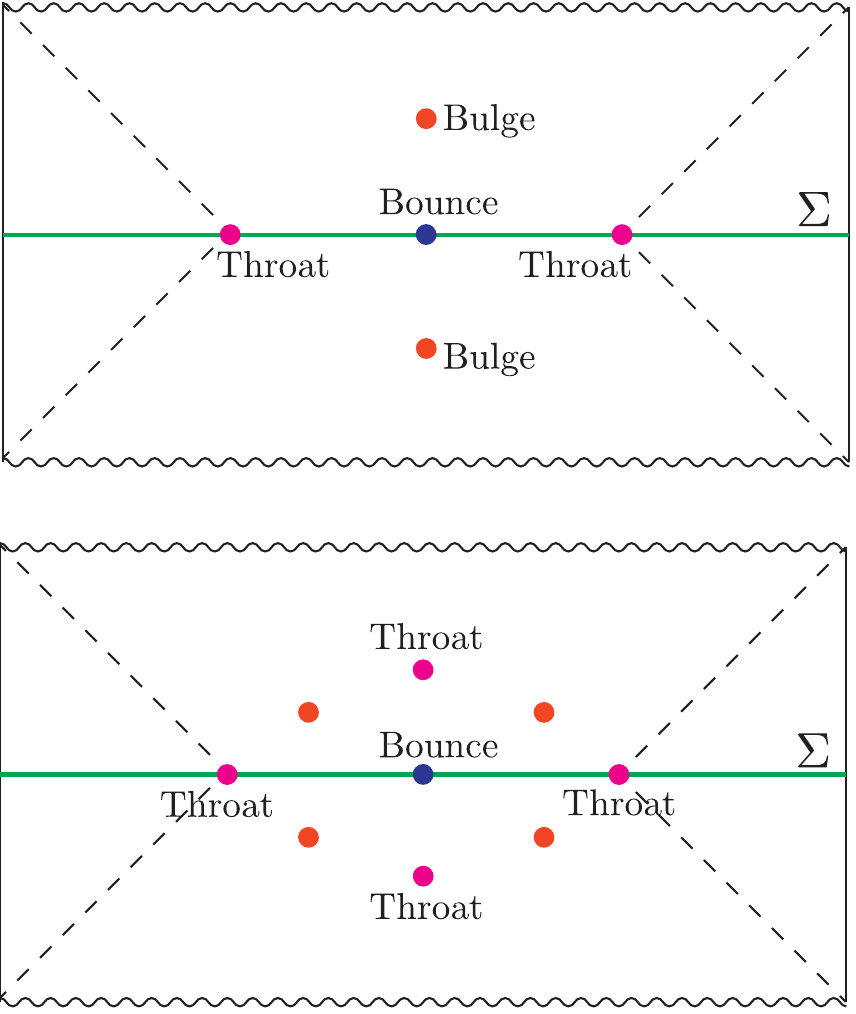}
    \caption{A schematic representation of our JT solutions with time-like separated extremal surfaces homologous to the right (or the left) boundary. Both solutions contain a time-symmetric slice $\Sigma$ with a bounce QES in the middle.}
    \label{fig:examplespacetime}
\end{figure}

The existence of such solutions poses a puzzle for our current, tensor-network-inspired understanding of the bulk-to-boundary map. In particular, the most general proposed version of the Python's lunch conjecture assumes that all of the QESs homologous to $B$ are spacelike-separated. We therefore propose a further refinement of the Python's lunch conjecture that is valid for completely general spacetimes, including those with timelike-separated QESs. This refined formula is compatible with earlier versions of the conjecture, and suggests necessary ingredients for constructing analogues of tensor networks in spacetimes where the natural tensor network picture fails to accurately represent bulk reconstruction.

An interesting feature of our refined conjecture is that, for spacetimes with time-reflection symmetry, the surfaces $\gamma_{\rm main}$ and $\gamma_\mathrm{aptz}$, which together compute the complexity of reconstruction, do not necessarily lie within the time-reflection symmetric slice. By contrast, the minimal QES $\gamma_\mathrm{min}$ does always lie within the time-reflection symmetric slice. It has traditionally been assumed that, for spacetimes with time reflection symmetry, the Cauchy slice most analogous to a tensor network is the time-reflection symmetric slice; our refined conjecture suggests that this is not the case.

The outline of the paper is as follows. In Sec.~\ref{sec:BBandT} we state our assumptions, give a rigorous definition of bounces, bulges, and throats, and prove a number of useful properties of these surfaces. We focus especially on a distinguished class of QESs, which we term `outer-minimal QESs'; these are particularly physically significant for motivating our newly refined Python's Lunch proposal. Sec.~\ref{sec:egs} presents our explicit examples of timelike-separated QESs. In Sec.~\ref{sec:proposal} we motivate and state our refined Python's Lunch proposal. The appendix is dedicated to technical details omitted from Sec.~\ref{sec:3-1}.

\section{Bounces, Bulges, and Throats} \label{sec:BBandT}

In this section, we give a natural classification of QESs into bounces, bulges, and throats based on their quasi-local properties. We then prove that a QES found by the maximin procedure is a throat while a QES found by the maximinimax procedure is a bulge. At the end of this section, we defined a distinguished class of throats -- outer-minimal QESs -- and discuss their relevant physical properties, which will turn out to be useful for the refined python's lunch conjecture in Sec.~\ref{sec:proposal}.

\paragraph{Assumptions:} Before we begin, let us first state state some basic assumptions about our setup and set some basic terminology and notation:
\begin{itemize}
\item The spacetime $\mathcal{M}$ is always asymptotically AdS and causally well-behaved. We also assume that the domain of dependence of any compact region is compact in time; this rules out e.g. de Sitter-type asymptotic infinities.

\item By a surface $\gamma$ we mean an achronal codimension-two embedded submanifold in the spacetime manifold $\mathcal{M}$. In this paper, we will further restrict surfaces to be homologous to some fixed boundary (partial) Cauchy slice $B$. Let $H$ be a spacelike homology slice of $\gamma$. Then the \emph{outer wedge}~\cite{EngWal17b} of $\gamma$, denoted by $\mathcal{W}_O[\gamma]$, is the domain of dependence $D(H)$.

\item Given a wedge $\mathcal{W} = D(H)$ for some partial Cauchy slice $H$, we denote by $\mathcal{W}'$ the \emph{complement wedge} defined by $D(H^c)$, where $H \cup H^c$ is a Cauchy slice for the entire spacetime. 

\item Given a surface $\gamma$, we define outwards (pointing towards $B$) orthogonal null vector fields $k^a$ and $\ell^a$ on it, which are respectively future and past directed and satisfy $k^a \ell_a = 1$. 

\item By a quantum extremal surface (QES) of $B$, we will mean a surface which is a stationary point of the generalized entropy functional 
\be\label{eq:sgen2}
\frac{\rm Area}{4 G} +S_{\rm ren}[\rho_{\rm out}]
\ee

and is homologous to $B$. Here $G$ is the renormalized Newton's constant and $S_{\rm ren}[\rho_{\rm out}]$ denotes the renormalized von Neumann entropy of the quantum fields on the partial Cauchy slice bounded by $\gamma$ and $B$.
\item Quantum Expansion: Given a surface $\gamma$, we choose coordinates $(u,v,y^i)$ on the normal bundle $N \gamma$ such that the induced metric on $N\gamma$ takes the form
\begin{align} \label{eq:metric}
    ds^2 = du dv + h_{ij}(y) dy^i dy^j
\end{align}
 where $y^i$ are coordinates on $\gamma$ and $k^\mu = (\partial_v)^{\mu}$ and $\ell^\mu = (\partial_u)^{\mu}$ are null vectors orthogonal to $\gamma$. The quantum expansions of the surface $\gamma$ are given by~\cite{BouFis15a}:\footnote{The original definition of the quantum expansion included additional factor of $4G$ for consistency with the classical expansions. We drop that here to make other formulas cleaner. Geometrically, the functional derivative of $S_\mathrm{gen}$ is a section of the conormal bundle $N^* \gamma$. $(\Theta_v, \Theta_u)$ is a section of $N \gamma$ that can be identified with this functional derivative via the inner product on sections of $N \gamma$ induced by \eqref{eq:metric}.}
 \begin{align}
\Theta_{(v)}(\gamma;y) = \frac{1}{\sqrt{h}} \frac{\delta S_{\text{gen}}}{\delta V(y)} \\
\Theta_{(u)}(\gamma;y) = \frac{1}{\sqrt{h}} \frac{\delta S_{\text{gen}}}{\delta U(y)}.
\end{align}
For an extremal surface, we have $\Theta_{(v)}(\gamma;y) = \Theta_{(u)}(\gamma;y) = 0$ everywhere.
 
 \item Small deformations of extremal surfaces: We can identify the bundle $N \gamma$ with a small neighbourhood of $\gamma$ via the exponential map $\mathrm{exp}_\varepsilon: N \gamma \to \mathcal{M}$. Given $p \in \gamma$ and orthogonal tangent vector $W \in N_p \gamma$, we have
 \begin{align}
  \mathrm{exp}_\varepsilon (p, W) =  \lambda_{p,W}(\varepsilon) 
 \end{align}
 where $\lambda_{p,W}(\varepsilon)$ is the unique affine geodesic satisfying $\lambda_{p,W}(0) = p$ and $\dot\lambda_{p,W}(0) = W$, and $\varepsilon$ is a formal small parameter.
 A local deformation of $\gamma$ can then be specified by a section $v=V(y^i)$ and  $u= U(y^i)$ of $N \gamma$. At first order in $\varepsilon$, the quantum expansions of the deformed surface are given by:
\begin{align}
\delta\Theta_{(v)}(U, V;y) =  \frac{\varepsilon}{\sqrt{h}} \left[\int dy'\frac{\delta S_{\text{gen}}}{\delta V(y)\delta V(y')} V(y') + \int dy'\frac{\delta S_{\text{gen}}}{\delta V(y)\delta U(y')} U(y')\right]\nonumber\\
\delta\Theta_{(u)}(U, V;y) =  \frac{\varepsilon}{\sqrt{h}}\left[ \int dy'\frac{\delta S_{\text{gen}}}{\delta U(y)\delta V(y')} V(y') + \int dy'\frac{\delta S_{\text{gen}}}{\delta U(y)\delta U(y')} U(y')\right].\label{eq:deltaTheta}\end{align}

\item We assume the quantum focusing conjecture (QFC)~\cite{BouFis15a}, which states that for any surface $\gamma$:
\begin{align}
\frac{\delta}{\delta V(y)} \Theta_{(v)}(U=0,V;y) \leq 0 \label{eq-QFCV}
\end{align}
The same relation holds with $V$ replaced by $U$.

Note that assuming the weaker, restricted quantum focusing conjecture~\cite{Sha22} which demands $\delta \Theta_{(v)}/\delta V \leq 0$ at points on the surface where $\Theta_{(v)}=0$ suffices for our purposes. This weaker conjecture was made and proved in holographic braneworld theories in~\cite{Sha22}.

\item Lastly, we assume a genericity condition: For any QES $\gamma$ considered in this paper, Eq. \eqref{eq-QFCV} for both U and V directions is not saturated everywhere on $\gamma$. Starting from any spacetime that does not satisfy the genericity condition on some QESs of interest, we expect that an arbitrarily small perturbation can induce genericity on the QESs.

\end{itemize}

We now proceed to give a precise definition of the three different classes of QESs. If $\gamma$ is a QES, then by definition, $\delta S_{\rm gen} / \delta V = \delta S_{\rm gen}/\delta U=0$ at $U=V=0$. To classify QESs, we therefore consider the second functional derivatives appearing in \eqref{eq:deltaTheta}. We will begin with the warm-up case of transverse symmetric QESs -- e.g. spherical symmetric QESs in spherically symmetric spacetimes.

\subsection{The transverse symmetric case}
By considering only transverse symmetric deformations we can construct a function $S_{\rm gen}(u,v)$ and effectively reduce the classification of QESs to its stationary points. Given a transverse-symmetric deformation
\begin{align}
W := \begin{pmatrix} u \\ v \end{pmatrix},
\end{align}
we can define a Hessian-type $2 \times 2$ matrix:
\begin{align}\label{eq-2dHessian}
\hat{L} =\left.\begin{pmatrix}
 \partial_u \partial_v S_{\rm gen} & \partial_v^2 S_{\rm gen}\\
\partial_u^2 S_{\rm gen} & \partial_u\partial_v S_{\rm gen}
\end{pmatrix}\right\rvert_{u=v=0}
\end{align}
such that
\begin{align}
 \begin{pmatrix} \delta \Theta_v \\ \delta\Theta_u \end{pmatrix} = \varepsilon \hat{L}  W,
\end{align}
The (pseudo-)inner product on spherically symmetric deformations induced by \eqref{eq:metric} is
\begin{align} \label{eq:syminner}
\langle W_1, W_2 \rangle :=  u_1^* v_2 +  v_1^* u_2.
\end{align}
In terms of this inner product, the change in generalized entropy created by the deformation is
\begin{align} \label{eq:symmdeltasgen}
\delta S_\mathrm{gen} = \frac{\varepsilon^2}{2} \langle W, \hat{L} W\rangle.
\end{align}

The matrix $\hat{L}$ always has two real eigenvalues; the smaller eigenvector is always spacelike, while the larger eigenvector is timelike. To see this, first note that, by the quantum focussing conjecture, $\hat{L}$ always has nonpositive off-diagonal terms, $\partial_v^2 S_{\rm gen} \leq 0$ and $\partial_u^2 S_{\rm gen}\leq 0$; generically these terms are strictly negative. We can therefore apply the Perron-Forbenius theorem to $\partial_u \partial_v S_{\rm gen} - \hat{L}$ to conclude that $\hat{L}$ has a spacelike real eigenvector with real eigenvalue $\lambda_1 \leq \partial_u \partial_v S_{\rm gen}$. The remaining eigenvalue must then be $\lambda_2 =2 \partial_u \partial_v S_{\rm gen} - \lambda_1 \geq \lambda_1$ so that $\tr[\hat{L}] = 2 \partial_u \partial_v S_{\rm gen}$. Since $\hat{L}$ and $\lambda_2$ are both real, the corresponding eigenvector must be real and hence, again by the Perron-Frobenius theorem, it must be timelike.

This result leads to the following natural definition of the three categories of QESs (under assumption of transverse symmetry): 
\begin{itemize} 
    \item A \emph{throat} is a QES with both positive eigenvalues;
    \item A \emph{bulge} is a QES such that the spacelike and timelike eigenvectors have negative and positive eigenvalues respectively;
    \item A \emph{bounce} is a QES with both negative eigenvalues.
\end{itemize}
Since we cannot have a spacelike and a timelike eigenvector with positive and negative eigenvalue respectively, this classification is exhaustive.\footnote{Here we are ignoring the possibility that one or both of the eigenvalues are zero.} In particular, throats are distinguished by the Hessian having positive smallest eigenvalue; as we shall describe below this distinguishing property has a particularly natural generalization to QESs without transverse symmetry. 

A spacelike eigenvector with positive eigenvalue means that perturbations in that direction increase $S_{\rm gen}$ at second order. By contrast, because timelike vectors have negative norm, a positive timelike eigenvector means that timelike perturbations \emph{decrease} $S_{\rm gen}$. As a result, throats are minima of $S_{\rm gen}$ in space and maxima in time; bulges are maxima in both space and time; and bounces are maxima in space and minima in time.

\subsection{The general case}

We would like to generalize the above categorization to general situations without transverse symmetry. This can be achieved with the following definitions:

\begin{definition}
A QES $\gamma$ is a $\emph{throat}$ if there exists a (partial) Cauchy slice containing $\gamma$ in its interior on which there are no surfaces homologous to $\gamma$ with smaller generalized entropy. 
\end{definition}

\begin{definition}
A QES $\gamma$ is a \emph{bounce} if, given any (partial) Cauchy slice $\Sigma$ containing $\gamma$, there exists a continuous 1-parameter family of (partial) Cauchy slices $\Sigma(\eta)$, where $\Sigma(\eta=0)=\Sigma$, with a continuous family of stationary surfaces $\gamma(\eta) \subset \Sigma(\eta)$ such that $\gamma(0) = \gamma$ and $S_{\text{gen}}(\gamma(\eta)) > S_{\text{gen}}(\gamma)$ for any $\eta \neq 0$.
\end{definition}

\begin{definition}
A \emph{bulge} is a QES which is neither a throat, nor a bounce.
\end{definition}

It is easy to see that these definitions reduce to the transverse symmetric definitions given above. However there is also an equivalent general definition of a throat QES that is a more direct generalization of the transverse symmetric definition. 

The natural generalization of \eqref{eq-2dHessian} is the linear operator that maps an infinitesimal deformation 
\begin{align}
W(y) := \begin{pmatrix} U(y) \\ V(y) \end{pmatrix}
\end{align} 
of the QES to the quantum expansions of the deformed surface:
\begin{align}\label{eq-Lhat}
\hat{L}_\gamma W(y) &= \frac{1}{\varepsilon}\begin{pmatrix} \delta\Theta_v(y) \\ \delta\Theta_u(y)) \end{pmatrix} \\[12pt]&= 
\begin{pmatrix}
\hat{D}_{+} U(y^i) +  \int dy' \frac{\delta^2 S_{\text{ren}}}{\delta U(y) \delta U(y')}\rvert_{\text{off-diag}}  U(y') + \int dy' \frac{\delta \Theta_v (y)}{ \delta V(y')}  V(y')\\[11pt]
 \hat{D}_{-} V(y) + \int dy' \frac{\delta^2 S_{\text{ren}}}{ \delta U(y) \delta V(y')}\rvert_{\text{off-diag}}  V(y') +  \int dy' \frac{\delta \Theta_u (y)}{ \delta U(y')}  U(y')
\end{pmatrix}\nonumber
\end{align}
where the derivatives are evaluated at $\gamma$ and $S_{\text{ren}}$ is the renormalized entropy. Here,
\begin{align}
    \frac{ \delta S_{\text{ren}}}{\delta U(y) \delta V(y')} = S_{vu}'' \delta^{d-1}(y-y')+ \left.\frac{\delta S_{\text{ren}}}{\delta U(y) \delta V(y')}\right\rvert_{\text{off-diag}}
\end{align}
where $S_{vu}$ is defined as the coefficient of the delta function piece of the LHS. Finally,
\begin{align}
    \hat{D}_{\pm}=\frac{1}{4G} \left[-\nabla^2 \mp \chi^i \nabla_i + R_{\mu\nu} k^\mu \ell^\nu -\frac{R}{2}\right]+ ~S_{vu}''
\end{align}
where $R_{\mu\nu}$ is the spacetime Ricci tensor and $\chi^i = k^\mu \nabla_i \ell_\mu$ is the twist. The classical limit, i.e. $G \to 0$, of $\hat{L}_\gamma$ was studied in~\cite{EngWal18} (See also~\cite{EngFis19, AndMar07}). The second line of \eqref{eq-Lhat} follows directly from plugging \eqref{eq:sgen2} into \eqref{eq:deltaTheta} and evaluating the classical piece explicitly. The quantum contributions to the semiclassical operator $\hat{L}_\gamma$ (henceforth referred to as the quantum stability operator) cannot be ignored as the quantum and classical terms can in principle be of the same order. By strong subadditivity, we have
\begin{align}\label{eq:ssaUV}
\frac{\delta^2 S_{\text{ren}}}{\delta V(y) U(y')}\rvert_{\text{off-diag}} \leq 0,
\end{align}
while
\begin{align} \label{eq:QFCUV}
\frac{\delta \Theta_v (y)}{\delta V(y')} \leq 0
\end{align}
by quantum focusing.

The (pseudo-)inner product on deformations $W(y)$ induced by \eqref{eq:metric} is
\begin{align} \label{eq:pseudoinnerproduct}
\langle W_1, W_2 \rangle := \int_\gamma dy \sqrt{h}\,\,\left[ U_1(y)^*V_2(y) +  V_1(y)^* U_2(y)\right].
\end{align}
The change in generalized entropy is given by
\begin{align} \label{eq:gendeltaSgen}
\delta S_\mathrm{gen} = \frac{\varepsilon^2}{2}\langle  W, \hat{L}_\gamma W \rangle_\gamma,
\end{align}
in close analogy with \eqref{eq:symmdeltasgen}.

$\hat{L}_\gamma$ satisfies the following theorem:
\begin{thm}
The operator $\hat{L}_\gamma$ in Eq. \eqref{eq-Lhat} with boundary conditions $\delta U\rvert_{\partial \gamma}= \delta V\rvert_{\partial \gamma}=0$  has a real eigenvalue $\lambda$ (called its principal eigenvalue) which is smaller than or equal to the real part of all other eigenvalues. Furthermore, the corresponding eigenvector $W(y) = (U(y), V(y))$,a vector field in the normal bundle of $\gamma$,  satisfies $U(y),V(y) \geq 0$ everywhere and hence describes an outwards achronal deformation of $\gamma$.
\end{thm}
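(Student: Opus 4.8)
The plan is to establish this as an infinite-dimensional generalization of the Perron–Frobenius theorem, applied to the operator $\hat L_\gamma$ which (by the sign conditions \eqref{eq:ssaUV}, \eqref{eq:QFCUV}, together with the sign of the off-diagonal operator content) is "order-reversing" in the appropriate sense on the cone of pairs $W(y) = (U(y),V(y))$ with $U,V \geq 0$. The subtlety is that the relevant cone is not the naive positive cone: because the inner product \eqref{eq:pseudoinnerproduct} is off-diagonal, the operator $\hat L_\gamma$ acting on $(U,V)$ mixes the components, and the "positivity-improving" structure must be read off from the off-diagonal block structure of $\hat L_\gamma$ in \eqref{eq-Lhat}. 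Concretely, the off-diagonal blocks $\delta\Theta_v/\delta V$ and $\delta\Theta_u/\delta U$ are negative operators (quantum focusing), and the cross terms $\delta^2 S_{\mathrm{ren}}/\delta U\delta V|_{\mathrm{off\text{-}diag}}$ are negative (strong subadditivity); meanwhile the diagonal blocks $\hat D_\pm$ are second-order elliptic operators of Schrödinger type. So I would first rewrite $\hat L_\gamma$ in a form that makes manifest that $-\hat L_\gamma$, up to adding a large multiple of the identity, maps the cone $\{U,V\geq 0\}$ into itself and is in fact positivity-improving.

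First I would set up the functional-analytic framework: impose Dirichlet boundary conditions $U|_{\partial\gamma} = V|_{\partial\gamma} = 0$, work on a suitable Sobolev/Hölder space on $\gamma$ (compact, since domains of dependence of compact regions are compact by assumption), and note that the diagonal elliptic operators $\hat D_\pm$ have compact resolvent. The strategy is then: (i) shift by a constant, $\hat M := c\,\mathbb{1} - \hat L_\gamma$ for $c$ large, so that $\hat M$ has positive spectral bound; (ii) show $\hat M$ is a positive operator with respect to the cone $\mathcal{C} = \{(U,V) : U\geq 0, V\geq 0\}$ — this uses that the elliptic operators $-\nabla^2$ generate positivity-preserving semigroups (heat kernel positivity), that adding the negative zeroth-order/off-diagonal terms keeps things in the cone after a further shift, and crucially that the off-diagonal blocks being negative operators means $-(\text{off-diag blocks})$ are positive, so they map $U\geq 0$ contributions into $V\geq 0$ contributions and vice versa; (iii) verify irreducibility/positivity-improving using the strict inequality guaranteed by the genericity assumption (QFC not saturated everywhere) plus connectedness of $\gamma$ — this is what upgrades "has a nonnegative eigenvector" to "the principal eigenvector is strictly positive in the interior and is the unique one, with eigenvalue strictly dominating the real parts of all others." Then apply the Krein–Rutman theorem to $\hat M$: its spectral radius is an eigenvalue with a positive eigenvector, and translating back, $\hat L_\gamma$ has a real principal eigenvalue $\lambda = c - r(\hat M)$ below the real part of the rest of the spectrum, with eigenvector $W = (U,V)$, $U,V\geq 0$. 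Finally, I would observe that $U,V \geq 0$ with both future- and past-pointing null displacements nonnegative is exactly the statement that the deformation moves $\gamma$ outward along an achronal direction (into the past of $\mathcal{W}_O[\gamma]$'s... rather, everywhere spacelike-or-null outward), giving the "outwards achronal deformation" conclusion.

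The main obstacle I anticipate is step (ii)–(iii): verifying that $\hat L_\gamma$, with its genuinely indefinite inner product and its mixed first-order twist term $\chi^i\nabla_i$ (which is antisymmetric and should not spoil positivity of the semigroup, but needs checking) and the nonlocal off-diagonal quantum pieces, really does fit the Krein–Rutman hypotheses. In particular the nonlocal terms $\delta^2 S_{\mathrm{ren}}/\delta U\delta V|_{\mathrm{off\text{-}diag}}$ are integral operators whose kernels I only know to be sign-definite (negative) — I need that $c\,\mathbb{1}$ minus them preserves the cone, which follows from negativity, but I also need enough regularity/compactness for Krein–Rutman, which should follow from the fact that these are variations of a renormalized entropy and hence (assumed) sufficiently regular. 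A secondary subtlety is making the classical-limit comparison with \cite{EngWal18} precise: there the analogous operator is the classical stability operator and the positivity of its principal eigenfunction is known; here I want to argue the quantum corrections, being sign-aligned (negative off-diagonal, controlled diagonal shift $S_{vu}''$), do not disrupt the structure. I would handle this by treating the whole semiclassical $\hat L_\gamma$ uniformly via Krein–Rutman rather than perturbatively, so the classical case is just a special instance. The genericity assumption is doing essential work in getting strict dominance of $\lambda$ and uniqueness; without it one only gets $\lambda \leq \mathrm{Re}(\text{other eigenvalues})$ with possible degeneracy, which is in fact all the theorem claims ("smaller than or equal to"), so even the non-generic case is covered by the basic Krein–Rutman conclusion.
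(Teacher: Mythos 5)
Your broad strategy is the same as the paper's: this is an infinite-dimensional Perron--Frobenius/Krein--Rutman argument, the relevant cone is $\mathcal{C}=\{(U,V): U\geq 0, V\geq 0\}$, the sign conditions \eqref{eq:ssaUV} and \eqref{eq:QFCUV} make the off-diagonal structure cone-friendly, and genericity gives you positivity-improvement. That much is right, and you have correctly located where the work is.

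The gap is in your step (ii): you propose applying Krein--Rutman to $\hat M := c\,\mathbb{1} - \hat L_\gamma$ directly. This cannot work for two independent reasons. First, $\hat M$ is an unbounded (integro-)differential operator, so it is not compact and has no finite spectral radius; KR requires a compact operator on a Banach space. Second, and more fatally, $\hat M$ does not preserve the cone $\mathcal{C}$: the diagonal blocks of $\hat L_\gamma$ contain $-\nabla^2$, so the corresponding part of $\hat M U$ is $cU + \nabla^2 U$, which can be made arbitrarily negative at an interior maximum of a sharply peaked $U>0$, for any fixed $c$. Elliptic operators themselves do not map positive functions to positive functions; only their resolvents (or semigroups) do. Your appeal to ``heat kernel positivity'' would be on point if you were applying KR to the semigroup $e^{-t\hat L_\gamma}$, but that is not the operator you built, and $e^{-t\hat M}=e^{-tc}e^{t\hat L_\gamma}$ is the ill-posed backward heat flow. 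The paper's version of your plan applies KR to the compact resolvent $(\hat L_\gamma + C)^{-1}$, and establishes that it maps the cone into its interior by a maximum-principle argument: assume a negative interior minimum of, say, $U$, and use that the first-derivative (twist) term vanishes there, the second-derivative term is nonpositive there, the nonlocal and off-diagonal kernels are nonpositive, and $C$ is large, to derive a contradiction with $f,g>0$. That comparison-principle step is exactly what is missing from your write-up, and it is the crux of the proof rather than a technicality. Once you have cone-preservation and compactness of the resolvent, the spectral transformation $\lambda \mapsto (\lambda+C)^{-1}$ (monotone decreasing, so smallest $\mathrm{Re}\,\lambda$ becomes largest magnitude) lets you read off the claim directly; your transformation $\lambda\mapsto c-\lambda$ is the right idea but attached to the wrong operator.

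One smaller point: you should be a bit careful with the twist term $\chi^i\nabla_i$. In the maximum-principle argument it is harmless because the first derivative vanishes at an interior extremum, but in your proposed semigroup framing you would need to check that the drift does not spoil positivity preservation (it does not, by standard Feynman--Kac reasoning for operators of the form $-\nabla^2 + b^i\nabla_i + V$, but it should be said). Likewise, showing that $\mathrm{Re}\bigl[(W,\hat L_\gamma W)\bigr]/(W,W)$ is bounded below (needed so that ``smallest real part eigenvalue'' exists) requires separating the symmetric second-order part from the antisymmetric first-order part; the paper does this explicitly, while your write-up takes it for granted.
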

\begin{proof}[Proof sketch]
Consider the operator $\hat{L}_\gamma + C$ for real constant
\begin{align}
  C > \sup_{\gamma} \left.(- 4G \int dy' \frac{\delta^2 S_{\text{ren}}}{\delta V(y) U(y')}\rvert_{\text{off-diag}} - \int dy' \frac{\delta \Theta_v (y)}{\delta V(y')}  -R_{\mu\nu} k^\mu \ell^\nu + R/2 - 4 G S_{uv}''\right).  
\end{align} The existence of a supremum is obvious for compact $\gamma$, and for non-compact $\gamma$, we expect that the asymptotic AdS boundary conditions guarantee its existence. As we are working at a physics level of rigor, we will assume (without proof) the existence, uniqueness and regularity of the solution to the equation
$$ (\hat L_\gamma + C) \begin{pmatrix} U(y) \\  V(y) \end{pmatrix} =  \begin{pmatrix}f(y) \\ g(y) \end{pmatrix}$$
with smooth functions $f(y), g(y)$ on $\gamma$ and boundary conditions $ U\rvert_{\partial \gamma} =  V\rvert_{\partial \gamma} = 0$.\footnote{In the absence of quantum terms, the above assumptions are standard facts about elliptic partial differential equations.} 

We first show that when $f(y), g(y) > 0$ we have $U(y), V(y) > 0$ everywhere on $\gamma-\partial \gamma$. We prove this by contradiction. Suppose $\inf_\gamma V \geq \inf_\gamma U = U_\mathrm{min} < 0$ (the proof for $\inf_\gamma V \leq \inf_\gamma U$ is identical). Since we have $U\rvert_{\partial \gamma} = V\rvert_{\partial\gamma} = 0$, the infinum must be achieved at some point  $y_\mathrm{min}$ in the interior of $\gamma$. We then have
\begin{align}
&0 < f(y_\mathrm{min}) = \hat{D}_{+} U\rvert_{y=y_\mathrm{min}} + 4G \int dy' \frac{\delta^2 S_{\text{ren}}}{\delta V(y_\mathrm{min}) U(y')}\rvert_{\text{off-diag}} U(y')  + \int dy' \frac{\delta \Theta_v (y_\mathrm{min})}{\delta V(y')} V(y') \nonumber\\
&+ C U(y_\mathrm{min})  \leq \hat{D}_{+} U(y_\mathrm{min}) +  U_\mathrm{min} \left[ 4G \int dy' \frac{\delta^2 S_{\text{ren}}}{\delta V(y_\mathrm{min}) U(y')}\rvert_{\text{off-diag}} + \int dy' \frac{\delta \Theta_v (y_\mathrm{min})}{\delta V(y')} + C\right]. \label{eq:contradiction}
\end{align}
In the second inequality we used \eqref{eq:ssaUV} and \eqref{eq:ssaUV}. Finally, we note that
\begin{align}
\hat{D}_{+} U\rvert_{y_\mathrm{min}} \leq \left[R_{\mu\nu} k^\mu \ell^\nu - \frac{R}{2} + 4 G S_{uv}'' \right] U(y_\mathrm{min})
\end{align}
since the first-derivative term in $\hat{D}_{+} U\rvert_{y_\mathrm{min}}$ must be zero and the second derivative term nonpositive at $y_\mathrm{min}$. Consequently, the right hand side of \eqref{eq:contradiction} for sufficiently large positive $C$ is negative, giving our desired contradiction.

From this, the result about the principal eigenvalue is obtained by applying the Krein-Rutman theorem (KR). KR states that a compact linear operator $T$ on a Banach space $X$ which maps any non-zero element of a closed cone $K \subset X$ (i.e., a topologically closed subset of $X$ closed under addition and multiplication by non-negative scalars) into the interior $K$, necessarily has a unique real positive eigenvalue, larger than the complex norm of any other eigenvalue, and whose corresponding eigenvector belongs to the interior of $K$. See~\cite{ProtterWeinberger} for a proof of KR.

Taking $K$ to be the space of pairs of positive functions $(f(y),g(y))$ (with appropriate smoothness conditions\footnote{We will not worry about exactly which Sobolev space is most appropriate to work with here.}), we see that $(\hat{L}_\gamma + C)^{-1}$ satisfies the conditions of KR.\footnote{The existence and compactness of $(\hat{L}_\gamma + C)^{-1}$ follows from the assumed existence, uniqueness and regularity of the solution $U, V$ for any $(f,g) \in L^2(\gamma) \oplus L^2(\gamma)$. For noncompact surfaces $\gamma$, you also have to worry about the details of the boundary conditions: asymptotically AdS boundary conditions do not spoil the compactness of $(\hat{L}_\gamma + C)^{-1}$ for the same reason that the Laplacian on hyperbolic space (unlike in flat space!) has discrete spectrum.}

Since $(\hat{L}_\gamma + C)^{-1}$ is compact, it (and hence also $\hat{L}_\gamma$) has completely discrete spectrum. Define the inner product\footnote{Unlike the pseudo-inner product $\langle W_1 ,  W_2\rangle_\gamma$ defined in \eqref{eq:pseudoinnerproduct}, $( W_1 ,  W_2)$ is a true, positive semi-definite inner product.}
\begin{align}
    (W_1 , W_2) := \int_\gamma dy \sqrt{h}\,\,\left[ U_1(y)^* U_2(y) + V_1(y)^* V_2(y)\right].
\end{align}
Since, for any $W$, the contribution to $(W , \hat L_\gamma W)$ from the second-derivative terms in $\hat L_\gamma$ is always nonnegative, the contribution from first derivative terms is always purely imaginary, and the zero-derivative terms are bounded, we find that $\mathrm{Re}[(W , \hat L_\gamma W)]/(W , W)$ is bounded from below. 

It follows that there exists an eigenvalue of $\hat{L}_\gamma$ with smallest real part; let $\lambda$ be that eigenvalue. Then, for sufficiently large $C$, $(\lambda + C)^{-1}$ is the eigenvalue of $(\hat{L}_\gamma + C)^{-1}$ with largest magnitude. KR then implies that $\lambda$ is real and that the corresponding eigenvector for $\hat L_\gamma$ is real and spacelike.\end{proof}

From here on, when discussing a particular QES $\gamma$, we will denote by $\lambda$ and $W$ its corresponding principal eigenvalue and eigenvector. 
\begin{cor}
    Assuming the genericity condition, the principal eigenvector $W = (U,V)$ satisfies $U(y),V(y) > 0$ everywhere in the interior of $\gamma$ and hence describes a spacelike deformation.
\end{cor}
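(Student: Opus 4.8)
The plan is to strengthen the strict-positivity conclusion already obtained in the proof-sketch of the theorem (where $f,g>0$ forced $U,V>0$) to the genericity-perturbed eigenvalue equation. The subtlety is that the principal eigenvector satisfies $(\hat L_\gamma + C)W = (\lambda+C)W$, and while Krein--Rutman guarantees $W = (U,V)$ lies in the \emph{closed} cone (so $U,V\geq 0$), it does not a priori guarantee the eigenvector lies in the \emph{interior} of the cone unless $(\hat L_\gamma+C)^{-1}$ maps nonzero elements of the \emph{closed} cone into the interior — which in turn is exactly the strict-positivity statement we want. So the first step is to re-run the strong-maximum-principle argument of \eqref{eq:contradiction} but now allowing $f,g\geq 0$ (not strictly positive), which is the relevant case since $(\lambda+C)W \geq 0$ componentwise but could vanish somewhere.

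First I would suppose, for contradiction, that $U$ attains the value $0$ at some interior point $y_{\mathrm{min}}$ (WLOG $\inf_\gamma U \le \inf_\gamma V$, so $U_{\mathrm{min}}=0$). Repeating the computation in \eqref{eq:contradiction} with $U_{\mathrm{min}}=0$ now gives $0 \le f(y_{\mathrm{min}}) = \hat D_+ U|_{y_{\mathrm{min}}} + (\text{terms})\cdot V(y') + 0$. At the interior minimum the gradient term vanishes and $-\nabla^2 U|_{y_{\mathrm{min}}} \le 0$, so $\hat D_+ U|_{y_{\mathrm{min}}} \le (R_{\mu\nu}k^\mu\ell^\nu - R/2 + 4GS''_{uv})\cdot 0 = 0$; meanwhile $\int dy'\,(\delta\Theta_v(y_{\mathrm{min}})/\delta V(y'))\,V(y') \le 0$ by quantum focusing \eqref{eq:QFCUV} and $V\geq 0$, and the $S''_{\mathrm{ren}}|_{\text{off-diag}}$ term is $\le 0$ by strong subadditivity \eqref{eq:ssaUV} since $U\geq 0$. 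Hence $f(y_{\mathrm{min}})\le 0$, and combined with $f=g=0$ (from $(\lambda+C)W\ge 0$ being zero exactly where $W$ is zero) we learn each of these nonpositive contributions must individually vanish at $y_{\mathrm{min}}$ — in particular $\nabla^2 U|_{y_{\mathrm{min}}}=0$ and the inequality \eqref{eq-QFCV} (equivalently its kernel version) is saturated there. I would then propagate this: the set $\{U=0\}$ is closed; using that $U\ge 0$ vanishes to second order at each of its interior points, a Hopf-lemma / unique-continuation style argument (applied to the elliptic operator $\hat D_+$ with the nonlocal pieces treated perturbatively, exactly as in the theorem's sketch where existence/regularity for elliptic PDE is assumed) shows $\{U=0\}$ is also open in $\gamma - \partial\gamma$, hence $U\equiv 0$ on all of $\gamma$. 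The same runs for $V$. But $U\equiv V\equiv 0$ would contradict the genericity assumption: with $U=V=0$ identically, the eigenvalue equation forces the saturation of \eqref{eq-QFCV} everywhere on $\gamma$ in both the $U$ and $V$ directions, which is precisely what genericity forbids. (Alternatively, and more simply, $W\equiv 0$ is not an eigenvector at all, so one only needs that $U\equiv 0$ plus the equation forces $V\equiv 0$ — the off-diagonal coupling — then invoke that $W$ was assumed nonzero.)

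The main obstacle, and the place I would be least cavalier, is the propagation step: turning ``$U\ge 0$ vanishes at an interior point'' into ``$U\equiv 0$''. For a pure second-order elliptic operator this is the classical strong maximum principle (Hopf), but $\hat L_\gamma$ has the nonlocal integral kernels $\delta^2 S_{\mathrm{ren}}/\delta U\delta U|_{\text{off-diag}}$ and $\delta\Theta_v/\delta V$, and the system couples $U$ to $V$. The saving grace is the sign structure: both nonlocal kernels act with a definite (nonpositive) sign against nonnegative functions, so they cannot destroy the maximum principle — they only help the inequality go the right way, exactly as in the theorem's proof. I would therefore phrase the propagation as: at any interior zero of $U$, all individually-signed contributions to $f=0$ must vanish, forcing $U$ to vanish identically on the connected component by the (local, classical) strong maximum principle for $-\nabla^2$ plus the observation that the remaining terms are controlled. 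A secondary subtlety is the noncompact (asymptotically AdS) case at $\partial\gamma$, but since the corollary's conclusion is ``in the interior of $\gamma$'' and the boundary conditions $U|_{\partial\gamma}=V|_{\partial\gamma}=0$ are already imposed, no new difficulty arises there beyond what the theorem already assumed.
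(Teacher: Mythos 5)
Your overall strategy is correct, but it takes a genuinely different (and longer) route than the paper. The paper gets the contradiction in one step at the zero: assuming $U(y_0)=0$ at an interior point while $V(y')>0$ somewhere, the eigenvalue equation reads $0=\lambda U(y_0)=\hat D_+U|_{y_0}+(\text{SSA term})+(\text{QFC term})$, and the genericity assumption together with $V>0$ makes the nonlocal QFC term \emph{strictly} negative, forcing $\hat D_+U|_{y_0}>0$; but at an interior minimum of a nonnegative function $\hat D_+U|_{y_0}=-\nabla^2U|_{y_0}\leq 0$, which is the contradiction. Ruling out the symmetric case ($V$ has an interior zero, $U>0$ somewhere) and invoking $W\neq 0$ then finishes the proof. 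You instead argue that at the interior zero every signed contribution must individually vanish, and then propagate the zero set via a strong-maximum-principle / unique-continuation argument to get $U\equiv 0$, finally contradicting $W\neq 0$.

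Both routes are valid, and your propagation step (which you flag as the weakest link) can actually be made tight: since the SSA and QFC kernels are $\leq 0$ and $U,V\geq 0$, the equation rearranges to $(\hat D_+-\lambda)U=-(\text{SSA}[U])-(\text{QFC}[V])\geq 0$, so $U\geq 0$ is a supersolution of the \emph{purely local} second-order elliptic operator $\hat D_+-\lambda$, and the classical Hopf strong maximum principle gives $U\equiv 0$ from any interior zero with no need to reason separately about the nonlocal pieces. That said, the paper's version is shorter and avoids the propagation entirely; yours has the mild compensating advantage of not needing pointwise strictness of the QFC kernel at the specific zero $y_0$.

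One genuine slip at the end of your argument: you claim that $U\equiv V\equiv 0$ would contradict genericity because it would "force the saturation of \eqref{eq-QFCV} everywhere," but the trivial solution $W\equiv 0$ solves $\hat L_\gamma W=\lambda W$ for any $\lambda$ regardless of whether QFC is saturated, so it imposes nothing on the kernel. Your "alternatively, and more simply" fallback — $W$ is by definition a nonzero eigenvector — is the correct way to close the argument, and is exactly what the paper does. To make that fallback complete you still need the off-diagonal step from $U\equiv 0$ to $V\equiv 0$; this is where genericity actually enters (it ensures the QFC kernel is not identically zero, so $(\text{QFC}[V])\equiv 0$ with $V\geq 0$ forces $V$ to vanish), which is again precisely the role genericity plays in the paper's one-step contradiction.
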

\begin{proof}
    Suppose $U(y_0) = 0$ for some $y_0$ in the interior of $\gamma$ and there exists $y'$ such that $V(y') > 0$. We have
    \begin{align}
        0 &= \lambda U(y_0) = \Theta_v(y_0) \nonumber\\&= \nonumber\hat{D}_{+} U\rvert_{y=y_0} + 4G \int dy' \frac{\delta^2 S_{\text{ren}}}{\delta V(y_0) U(y')}\rvert_{\text{off-diag}} U(y')  + \int dy' \frac{\delta \Theta_v (y_0)}{\delta V(y')} V(y')
        \\&< \hat D_+ U|_{y = y_0},
    \end{align}
    where the inequality follows from \eqref{eq:ssaUV} and \eqref{eq:QFCUV} together with the fact that $U,V\geq 0$ everywhere. It is strict by the genericity condition and the assumption that $V(y') > 0$ for some $y'$. But
    \begin{align}
        \hat D_+ U|_{y = y_0} = -\nabla^2 U(y_0) \leq 0
    \end{align}
    since $U \geq 0$ and $U(y_0) = 0$. This gives our desired contradiction. An identical argument rules out the case where $V(y_0) = 0$ for some $y_0$ in the interior of $\gamma$ and $U(y') > 0$ for some $y'$. Since the principal eigenvector $W$ is by definition nonzero, it follows that $U(y), V(y) > 0$ everywhere in the interior of $\gamma$.
    
\end{proof}

\begin{thm}\label{thm11}
The quantum stability operator of a \emph{throat} $\gamma$ has a non-negative principal eigenvalue, i.e. $\lambda \geq 0$.
\end{thm}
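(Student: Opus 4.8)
The plan is to argue by contrapositive: I will show that if the principal eigenvalue $\lambda$ of the quantum stability operator $\hat L_\gamma$ is strictly negative, then $\gamma$ cannot be a throat, i.e., on \emph{every} (partial) Cauchy slice $\Sigma$ containing $\gamma$ in its interior there is a nearby surface homologous to $\gamma$ with strictly smaller generalized entropy. The key input is the Corollary just proved: under the genericity assumption, the principal eigenvector $W=(U,V)$ has $U(y),V(y)>0$ everywhere in the interior of $\gamma$ and vanishes on $\partial\gamma$, so it describes a genuine spacelike outward deformation that can be realized within any Cauchy slice through $\gamma$ (after possibly rescaling, since any spacelike slice through $\gamma$ has a tangent normal direction interpolating between $k$ and $\ell$, and a positive multiple of $W$ can be arranged to lie in it near $\gamma$). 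Applying the deformation formula \eqref{eq:gendeltaSgen} to $W$ gives
\begin{align}
\delta S_\mathrm{gen} = \frac{\varepsilon^2}{2}\langle W, \hat L_\gamma W\rangle_\gamma = \frac{\varepsilon^2}{2}\,\lambda\,\langle W, W\rangle_\gamma,
\end{align}
and since the pseudo-inner product \eqref{eq:pseudoinnerproduct} evaluated on the positive eigenvector $W$ gives $\langle W,W\rangle_\gamma = 2\int_\gamma dy\sqrt h\, U V > 0$, a negative $\lambda$ forces $\delta S_\mathrm{gen}<0$ at second order. Because $\gamma$ is extremal, this second-order decrease is the leading behavior, so the deformed surface has strictly smaller $S_\mathrm{gen}$, contradicting the throat property.

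First I would make precise the claim that the eigenvector deformation can be embedded in an arbitrary Cauchy slice $\Sigma\supset\gamma$. The point is that a spacelike slice through $\gamma$ determines, at each $y\in\gamma$, a spacelike normal direction, i.e. a section of $N\gamma$ of the form $(U_\Sigma(y), V_\Sigma(y))$ with $U_\Sigma V_\Sigma>0$; deforming $\gamma$ within $\Sigma$ by an amount proportional to this section, with a $y$-dependent profile, realizes any deformation $(cU_\Sigma \phi, c V_\Sigma \phi)$. What I actually need is slightly weaker: I only need \emph{some} achronal (indeed spacelike) deformation supported near $\gamma$ within $\Sigma$ that decreases $S_\mathrm{gen}$. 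Since the quadratic form $W\mapsto \langle W,\hat L_\gamma W\rangle_\gamma$ restricted to the cone of positive sections has infimum bounded above by $\lambda \langle W,W\rangle<0$ (attained on $W$), and the slice $\Sigma$ near $\gamma$ contains positive sections of $N\gamma$, I can take the $\Sigma$-constrained deformation to be the positive section closest to $W$; a short argument using positivity of all the relevant off-diagonal and focusing terms (Eqs. \eqref{eq:ssaUV}, \eqref{eq:QFCUV}) shows the quadratic form stays negative on that section. Alternatively, and more cleanly, one shows that the second variation of $S_\mathrm{gen}$ restricted to deformations within $\Sigma$ is governed by a Schrödinger-type operator whose principal eigenvalue is bounded above by $\lambda$, using a min–max / domain-monotonicity comparison between the full normal-bundle problem and the one-dimensional-normal-direction problem inside $\Sigma$.

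The main obstacle I anticipate is precisely this last comparison: transferring the sign of the principal eigenvalue of the full operator $\hat L_\gamma$ on the rank-two normal bundle to the sign of the second variation of $S_\mathrm{gen}$ along deformations constrained to an arbitrary slice $\Sigma$. The subtlety is that $\hat L_\gamma$ is not self-adjoint with respect to the positive inner product $(\cdot,\cdot)$, so standard Rayleigh-quotient characterizations of $\lambda$ are unavailable; one must instead lean on the Krein–Rutman structure and the positivity of $W$. The cleanest route is: use $\langle W, W\rangle_\gamma > 0$ together with $\hat L_\gamma W = \lambda W$ and $\lambda<0$ to directly exhibit the decreasing deformation, then argue that for a suitable choice of parametrization this same $W$ (or a positive rescaling of it) is tangent to a deformation staying within $\Sigma$ — which is automatic because $W$ is spacelike and any spacelike direction at $\gamma$ lies in some Cauchy slice; then note the throat definition quantifies over \emph{all} slices, so it suffices to exhibit one slice on which $S_\mathrm{gen}$ fails to be minimized, and we may simply \emph{choose} the slice $\Sigma$ to contain the integral surface of $W$. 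With that choice the argument closes immediately, and the genericity-upgraded Corollary is what guarantees $W$ is spacelike rather than merely achronal, so that such a slice exists.
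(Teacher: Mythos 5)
Your opening move is correct and matches the paper: using the Corollary, $W = (U,V)$ is spacelike and outward-pointing with $\langle W, W\rangle_\gamma = 2\int\sqrt{h}\,UV > 0$, so $\lambda < 0$ gives $\delta S_\mathrm{gen}(\mathrm{exp}_\varepsilon(W)) = \tfrac{\varepsilon^2}{2}\lambda\langle W, W\rangle_\gamma < 0$, i.e.\ a nearby spacelike competitor with strictly smaller generalized entropy. The argument breaks down, however, at the step of relating this to the throat definition. You have the quantifier reversed: a throat is a QES for which there \emph{exists} a partial Cauchy slice on which $\gamma$ is minimal, so to rule it out you must exhibit a smaller homologous surface on \emph{every} slice $\Sigma$ containing $\gamma$, not merely on the one slice adapted to $W$. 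Your ``cleanest route,'' choosing $\Sigma$ to contain $\mathrm{exp}_\varepsilon(W)$, rules out only that single slice. Your alternative route, a min-max/domain-monotonicity comparison between $\hat L_\gamma$ and its restriction to deformations tangent to a fixed $\Sigma$, is aimed at the right quantifier but is never actually established, and the obstruction you flag yourself is real: $\hat L_\gamma$ is not self-adjoint in the positive inner product, so there is no Rayleigh-quotient monotonicity to invoke, and a given slice fixes a direction field $(U_\Sigma,V_\Sigma)$ that generically disagrees pointwise with $(U,V)$, so $W$ itself cannot be realized as a deformation within $\Sigma$.

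The missing idea that closes the gap is to transport $\mathrm{exp}_\varepsilon(W)$ onto an \emph{arbitrary} slice $\Sigma$ using the quantum focusing conjecture rather than by an eigenvalue comparison. Since $\hat L_\gamma W = \lambda W$ with $\lambda < 0$ and $U, V > 0$ everywhere, the deformed surface $\mathrm{exp}_\varepsilon(W)$ has $\Theta_v < 0$ and $\Theta_u < 0$, i.e.\ it is quantum anti-normal. Then $\gamma' := \partial\mathcal{W}_O[\mathrm{exp}_\varepsilon(W)] \cap \Sigma$, obtained by firing the outgoing null congruences from $\mathrm{exp}_\varepsilon(W)$ until they hit $\Sigma$, satisfies $S_\mathrm{gen}(\gamma') \leq S_\mathrm{gen}(\mathrm{exp}_\varepsilon(W)) < S_\mathrm{gen}(\gamma)$ by the QFC. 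This gives a smaller surface on any $\Sigma$ and hence shows $\gamma$ cannot be a throat. The QFC projection is precisely the step you identify as the ``main obstacle'' but do not supply.
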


\begin{proof}
Suppose $\lambda < 0$. Then, for small enough $\varepsilon$, $\mathrm{exp}_\varepsilon(W)$ is quantum anti-normal by \eqref{eq-Lhat} and $S_{\rm gen}(\mathrm{exp}_\varepsilon(W)) < S_{\rm gen}(\gamma)$ by \eqref{eq:gendeltaSgen}.  Let $\Sigma$ be a partial Cauchy slice containing $\gamma$. Let $\gamma' = \partial \mathcal{W}_{O}[\mathrm{exp}_\gamma(\epsilon X^a)] \cap \Sigma$. By the QFC, $S_{\rm gen}(\gamma') \leq S_{\rm gen}(\mathrm{exp}_\varepsilon(W))$: so $\gamma'$ is a surface in $\Sigma$ homologous to $\gamma$ with smaller $S_{\rm gen}$ than $\gamma$. It follows that whenever $\lambda < 0$ the QES $\gamma$ cannot be a throat.
\end{proof}
If the converse of the statement of this theorem were true, it would furnish an alternative definition of a throat. In fact, the converse indeed holds for a generic class of QESs which we term nondegenerate:

\begin{definition}
A QES $\gamma$ is called isolated if there exists a tubular neighborhood of $\gamma$ which does not completely contain another QES homologous to the same boundary region. We call an isolated QES with principal eigenvalue $\lambda \neq 0$ a nondegenerate QES.
\end{definition}

We expect that most QESs of interest are either nondegenerate or can be made nondegenerate with an arbitrarily small perturbation. An example of a QES which is not isolated is the bifurcation surface of pure de-Sitter spacetime, since continuous spatial rotations (which do not preserve the static patch) deform the QES into a neighboring QES. A small perturbation at the time-symmetric slice of global de-Sitter can break this symmetry and create a nondegenerate QES.

As part of the next subsection, we prove that a nondegenerate QES with $\lambda >0$ is a throat, providing an alternative definition of a (nondegenerate) throat.

\subsection{Maximin and Maximinimax}

Though QESs can be defined as stationary points of the $S_{\rm gen}$ functional, useful independent definitions, which make it easier to prove certain properties (e.g. existence) have been given in the literature. The definitions on which we focus are the maximin and maximinimax procedures~\cite{Wal12, MarWal19, AkeEng19b, BroGha19} which identify a certain surface through a search algorithm in a globally hyperbolic region of spacetime. Here we review the maximin and maximinimax procedures and prove that they find throats and bulges respectively.

\begin{definition}
(From ~\cite{Wal12, MarWal19, AkeEng19b}) Let $\gamma_1$ and $\gamma_2$ be spacelike-separated surfaces homologous to a boundary region $B$, and with non-positive null quantum expansions towards each other, we define another QES $\mathrm{maximin}(\gamma_1,\gamma_2)$ as follows: Let $\mathcal{W}$ indicate the wedge between $\gamma_1$ and $\gamma_2$ (i.e., the domain of dependence of a compact partial Cauchy slices with boundary $\gamma_1 \cup \gamma_2$). Then $\mathrm{maximin}(\gamma_1,\gamma_2)$ is the surface which achieves the following maximinimization:
\begin{align}\label{eq-maximin}
\max_{\{C\}} \min_{\{\gamma_C\}} S_{gen}(\gamma)
\end{align}
where $\{C\}$ indicates the set of all Cauchy slices of $\mathcal{W}$, and $\{\gamma_C\}$ indicates the set of all surfaces homologous to $\gamma_1$ (or equivalently, $\gamma_2$) on a given $C$. We further demand that the maximin surface satisfies the following stability condition: under any infinitesimal deformation of the surface $\gamma$ and a maximin slice $C \supset \gamma$ to a deformed surface $\gamma'$ and slice $C' \supset \gamma'$ such that $\gamma'$ remains stationary within $C'$, we have $S_{\rm gen}(\gamma') \leq S_{\rm gen}(\gamma)$.
A slight variant of this where instead of $\gamma_1$ and $\gamma_2$, we have a boundary region $B$ and a homologous surface $\gamma_1$ is defined in the obvious analogous way with $\mathcal{W} = \mathcal{W}_O[\gamma]$ and is denoted by $\text{maximin}~(\gamma_1, B)$.
\end{definition}
As in earlier work, we will assume without rigorous proof that stable maximin surfaces always exist. In~\cite{Wal12, MarWal19, BroGha19, AkeEng19b, Bousso:2021sji}, it was argued that the quantum expansion conditions on $\gamma_1$ and $\gamma_2$ ensure that $\mathrm{maximin}(\gamma_1, \gamma_2)$ is a QES contained in $\mathcal{W}$ (similarly, if the null quantum expansions of $\gamma_1$ towards $B$ is non-positive, then $\text{maximin}~(\gamma_1, B)$ is a QES).

For a thorough discussion of (quantum) maximin, including arguments in favor of the existence of maximin surfaces, see~\cite{Wal12, MarWal19, AkeEng19b}.

\begin{definition}\label{def:maximinimax}
(From \cite{BroGha19})
Given two spacelike-separated throats $\gamma_1$ and $\gamma_2$, homologous to the boundary region $B$, another QES homologous to B, denoted by $\mathrm{maximinimax}(\gamma_1, \gamma_2)$, located in the wedge $\mathcal{W}$ between them (i.e., the domain of dependence of a compact partial Cauchy slices with boundary $\gamma_1 \cup\gamma_2$) is defined as the surface achieving the following maximinimaximization: 
\begin{align}\label{eq-maximinimax}
\max_{\{C\}} \min_{\{f_C\}} \max_{0\leq\eta\leq 1} S_{gen}(f_C^{-1}(\eta))
\end{align}
where $\{C\}$ denotes the set of all partial Cauchy slices of $\mathcal{W}$, and $\{f_C\}$ indicates the set of all sweep-outs of C, i.e. smooth nondegenerate functions $f_{C}:C \to  [0,1]$, and such that $f_{C}(\gamma_1) = 0$ and $f_{C}(\gamma_2)=1$~\footnote{Intuitively, the level sets of $f$ define a foliation of $\Sigma$.} We further demand that the maximinimax surface satisfies the following stability condition: under any infinitesimal deformation of the surface $\gamma$ and a minimax slice $C \supset \gamma$ to a deformed surface $\gamma'$ and slice $C' \supset \gamma'$ such that $\gamma'$ remains stationary within $C'$, we have $S_{\rm gen}(\gamma') \leq S_{\rm gen}(\gamma)$.
\end{definition}
We will assume without rigorous proof that stable maximinimax surfaces always exist. In~\cite{BroGha19}, it was argued that a maximinimax  is a QES in the interior of $\mathcal{W}$. Note that the restriction to compact in time $D(H)$ for compact $H$ is important here, since otherwise the bulge may run off to an asymptotic region.

\begin{thm}\label{thm33}
A maximin QES is a throat.
\end{thm}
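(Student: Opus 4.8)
The plan is to show that a stable maximin QES $\gamma$ has nonnegative principal eigenvalue $\lambda \geq 0$, which by the discussion at the end of Sec.~2.2 (the claimed converse for nondegenerate QESs, to be established in the maximin/maximinimax subsection) is equivalent to being a throat; more directly, I would aim to exhibit a partial Cauchy slice containing $\gamma$ in its interior on which no homologous surface has smaller $S_{\rm gen}$, matching the Definition of throat verbatim. The natural candidate slice is a maximin slice $C$ for $\gamma$, i.e. a Cauchy slice of the relevant wedge $\mathcal{W}$ achieving the outer maximization in \eqref{eq-maximin}. By construction $\gamma$ is the minimal-$S_{\rm gen}$ surface homologous to $B$ (or to $\gamma_1$) on $C$, so on that slice there is tautologically no homologous surface with smaller generalized entropy. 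The remaining content is therefore (i) checking that $\gamma$ lies in the \emph{interior} of $C$, not on its boundary $\gamma_1\cup\gamma_2$, and (ii) reconciling the minimization over surfaces on a fixed slice with the fact that the Definition of throat quantifies over \emph{all} slices containing $\gamma$ — but since the Definition only requires the existence of \emph{one} good slice, (i) plus the tautology in fact suffices.

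The key steps, in order: First, recall from the cited maximin literature that the quantum-expansion hypotheses on $\gamma_1,\gamma_2$ (or on $\gamma_1$ and $B$) force $\mathrm{maximin}(\gamma_1,\gamma_2)$ to be a QES lying in the interior of $\mathcal{W}$; this is exactly the input needed to place $\gamma$ in the interior of a maximin slice $C$. Second, invoke the stability condition built into the definition of a stable maximin surface: under any infinitesimal deformation of $(\gamma,C)$ to $(\gamma',C')$ with $\gamma'$ stationary in $C'$, one has $S_{\rm gen}(\gamma')\le S_{\rm gen}(\gamma)$. Third, translate this into a statement about the quantum stability operator: a deformation within a fixed slice $C$ in an outward spacelike direction is realized by the principal eigenvector $W=(U,V)$ with $U,V\ge 0$ (Theorem/Corollary above); the minimization step of maximin says $\delta S_{\rm gen}\ge 0$ for such in-slice deformations, which via \eqref{eq:gendeltaSgen} gives $\langle W,\hat L_\gamma W\rangle\ge 0$, hence $\lambda\ge 0$ once one accounts for the sign of the pseudo-inner product on the spacelike eigenvector. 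Fourth, combine $\lambda \geq 0$ with nondegeneracy (or simply exhibit the maximin slice directly) to conclude $\gamma$ is a throat per the Definition.

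The main obstacle I anticipate is step three: carefully relating the two different optimization structures. The maximin minimization is over surfaces homologous to $\gamma_1$ on a \emph{fixed} Cauchy slice $C$, whereas $\hat L_\gamma$ and $\delta S_{\rm gen}$ in \eqref{eq:gendeltaSgen} govern deformations into the \emph{full normal bundle} $N\gamma$ (both $U$ and $V$ turned on), which generically leave the slice $C$. One must argue that an outward spacelike deformation staying within $C$ can be taken, up to the relevant order, to be the exponential map of the principal eigenvector $W$ — this uses that $W$ is spacelike (Corollary) so that $\mathrm{exp}_\varepsilon(W)$ sweeps out a spacelike surface that can be fitted inside a slight deformation of $C$, together with the QFC to push any nearby non-extremal comparison surface down in $S_{\rm gen}$ (exactly as in the proof of Theorem~\ref{thm11}). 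The timelike directions are controlled not by the in-slice minimization but by the outer maximization over slices, so one must be careful not to conflate them; the cleanest route is probably to prove $\lambda\ge 0$ (in-slice, spacelike data) and separately note that the maximin slice itself witnesses the throat Definition, avoiding any need to analyze timelike perturbations at all. A secondary subtlety is the boundary case $\lambda = 0$, which the genericity/nondegeneracy assumptions are designed to exclude but which should be remarked upon.
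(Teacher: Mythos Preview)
Your proposal is correct, and in fact you identify the paper's entire argument when you write that ``the maximin slice itself witnesses the throat Definition'': the paper's proof is a single sentence observing that the minimization step of maximin furnishes a partial Cauchy slice on which $\gamma$ has minimal $S_{\rm gen}$, which is exactly the definition of a throat. The interior condition you worry about is handled by the remark (already stated after the maximin definition) that the quantum-expansion hypotheses on $\gamma_1,\gamma_2$ force the maximin surface into the interior of $\mathcal{W}$.

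Everything else in your plan --- proving $\lambda\ge 0$ via the principal eigenvector, invoking the QFC as in Theorem~\ref{thm11}, the stability condition, the $\lambda=0$ boundary case --- is unnecessary for this theorem. In the paper's logic, the eigenvalue characterization of throats is assembled as a \emph{triality} in which Theorem~\ref{thm33} (maximin $\Rightarrow$ throat) is one of the three arrows; the other two are Theorem~\ref{thm11} (throat $\Rightarrow$ $\lambda\ge 0$) and Theorem~\ref{thm22} ($\lambda>0$ $\Rightarrow$ maximin). So attempting to prove Theorem~\ref{thm33} \emph{via} the eigenvalue picture would be circular relative to the paper's structure, and in any case the direct argument you already note is both simpler and complete on its own.
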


\begin{proof}
The minimization step in the quantum maximin procedure ensures that there exists a partial Cauchy slice on which $\gamma$ minimizes $S_{\text{gen}}$, i.e. $\gamma$ is a throat.
\end{proof}

We are now able to prove the theorem adertised in the previous subsection.

\begin{thm}\label{thm22}
A nondegenerate QES $\gamma$ with $\lambda>0$ is $\mathrm{maximin}(\gamma_1,\gamma_2)$ for some pair of surfaces $\gamma_1, \gamma_2$.
\end{thm}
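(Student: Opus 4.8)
The plan is to build the pair $(\gamma_1,\gamma_2)$ directly from the principal eigenvector of the quantum stability operator $\hat L_\gamma$, and then use the isolation half of nondegeneracy to force the resulting maximin surface to be $\gamma$ itself. First I would recall that, by the corollary above (which uses the genericity assumption), the principal eigenvector $W=(U,V)$ of $\hat L_\gamma$ has $U,V>0$ throughout the interior of $\gamma$, so it describes a spacelike deformation pointing outward towards $B$. Since here $\lambda>0$, I would then set $\gamma_\pm:=\mathrm{exp}_\gamma(\pm\varepsilon W)$ for a small parameter $\varepsilon>0$. For $\varepsilon$ small these are smooth achronal codimension-two submanifolds, spacelike-separated from one another and from $\gamma$, homologous to $B$ (being small deformations of $\gamma$), and $\gamma$ lies in the interior of a spacelike slice $H$ bounded by $\gamma_-\cup\gamma_+$, so that $\gamma\subset\mathcal{W}:=D(H)$.

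Next I would check that $\gamma_-$ and $\gamma_+$ satisfy the hypotheses needed to define $\mathrm{maximin}(\gamma_-,\gamma_+)$. From $\hat L_\gamma W=\lambda W$ and \eqref{eq-Lhat}, the deformed surfaces have null quantum expansions $\Theta_{(v)}(\gamma_\pm)=\pm\varepsilon\lambda\,U+O(\varepsilon^2)$ and $\Theta_{(u)}(\gamma_\pm)=\pm\varepsilon\lambda\,V+O(\varepsilon^2)$ relative to the outward null normals. Because $\lambda>0$, this makes $\gamma_-$ behave like an outer-trapped surface --- both of its outward null congruences, which point from $\gamma_-$ into $\mathcal{W}$, have strictly negative expansion --- while $\gamma_+$ behaves like an outer-antitrapped surface, so that the two null congruences of $\gamma_+$ directed into $\mathcal{W}$ (i.e.\ towards $\gamma_-$) are non-expanding. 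Hence $\gamma_-$ and $\gamma_+$ have non-positive null quantum expansions towards each other and $\mathrm{maximin}(\gamma_-,\gamma_+)$ is well defined; as a consistency check, \eqref{eq:gendeltaSgen} gives $S_{\rm gen}(\gamma_\pm)=S_{\rm gen}(\gamma)+\tfrac{\varepsilon^2}{2}\lambda\langle W,W\rangle_\gamma>S_{\rm gen}(\gamma)$, so $\gamma$ is indeed a candidate for the deepest throat in $\mathcal{W}$.

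Finally I would invoke the standing assumption that a stable maximin surface exists for this pair, together with the cited fact that the above expansion conditions force $\mathrm{maximin}(\gamma_-,\gamma_+)$ to be a QES homologous to $B$ contained in $\mathcal{W}$. By nondegeneracy, $\gamma$ is isolated, so there is a tubular neighborhood $\mathcal{N}\supset\gamma$ containing no other QES homologous to $B$; shrinking $\varepsilon$ so that $\overline{\mathcal{W}}\subset\mathcal{N}$, the only QES homologous to $B$ contained in $\mathcal{W}$ is $\gamma$. Therefore $\mathrm{maximin}(\gamma_-,\gamma_+)=\gamma$, which proves the theorem; combined with Theorem~\ref{thm33} it also establishes the statement promised above, that every nondegenerate QES with $\lambda>0$ is a throat.

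The step I expect to be the main obstacle is the sign bookkeeping in the second paragraph: one must match the inequalities forced by $\lambda>0$ to the precise ``non-positive null quantum expansions towards each other'' clause in the definition of maximin, carefully tracking which side of each $\gamma_\pm$ the wedge $\mathcal{W}$ lies on and using that the entropy appearing throughout the maximin functional is that of the $B$-side wedge. A secondary point is to exclude $\mathrm{maximin}(\gamma_-,\gamma_+)$ running onto $\partial\mathcal{W}$; but $\gamma_\pm$ are not stationary, and for small $\varepsilon$ all of $\overline{\mathcal{W}}$ lies inside the isolating neighborhood, so any QES it contains must be $\gamma$, which sits in the interior of $\mathcal{W}$.
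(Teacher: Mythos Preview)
Your proposal is correct and follows essentially the same route as the paper: deform along $\pm\varepsilon W$ for the principal eigenvector $W$, use $\lambda>0$ to obtain the required quantum-expansion signs so that maximin between $\gamma_-$ and $\gamma_+$ is well defined and returns a QES in the wedge, and then invoke isolation to force that QES to be $\gamma$. Your treatment is in fact more detailed than the paper's---you spell out the sign bookkeeping and the $S_{\rm gen}(\gamma_\pm)>S_{\rm gen}(\gamma)$ check---but the argument is the same; the only quibble is terminological (``outer-trapped/anti-trapped'' is nonstandard here; the paper would say $\gamma_-$ is quantum anti-normal and $\gamma_+$ quantum normal).
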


\begin{proof}
Let $W$ be the principal eigenvector of $\hat{L}_\gamma$ and define $\gamma^+ = \mathrm{exp}_\varepsilon(W)$ and $\gamma^- = \mathrm{exp}_\varepsilon(-W)$. Since the principal eigenvector $W$ is spacelike, for small enough $\epsilon>0$ there exists a partial Cauchy slice $\tilde{\Sigma}$ containing $\gamma$ such that $\partial \tilde{\Sigma} =  \gamma^+ \cup \gamma^-$. $\lambda>0$ implies that $\Theta_v (\gamma^+) > 0$, $\Theta_u (\gamma^+) > 0, \Theta_k (\gamma^-) < 0$ and $\Theta_{\ell} (\gamma^-) < 0$. Therefore, a restricted maximin process in $D(\tilde{\Sigma})$ returns a QES $\gamma'$. Since $\epsilon$ can be made arbitrarily small, the assumption that $\gamma$ is isolated implies that $\gamma=\gamma'$. Therefore, $\gamma$ is maximin.
\end{proof}

When restricting to nondegenerate QESs, the results of theorems~\ref{thm11}, \ref{thm33}, and~\ref{thm22} can be summarized in the following triality:
\vspace{0.6cm}

\begin{tikzpicture}[
    auto,
    block/.style={
        rectangle,
        draw,
        align=center,
        rounded corners
    },
    line/.style={
        double,
        draw,
        -implies,
        double distance=3pt,
        thick
    }
]

\node[block] (ndt) at (-2,{sqrt(12)}) {Nondegenerate Throat};
\node[block] (ndq) at (6,{sqrt(12)}) {Nondegenerate QES\\with $\lambda > 0$};
\node[block] (mq) at (2,0) {Nondegenerate maximin QES};

\draw [line] (ndt) -- (ndq) node[midway,above,sloped] {Thm 2};
\draw [line] (ndq) -- (mq) node[midway,above,sloped] {Thm 4};
\draw [line] (mq) -- (ndt) node[midway,above,sloped] {Thm 3};

\end{tikzpicture}\\

We will now shift gears to discuss the properties of bulges and bounces.

\begin{thm}
    A nondegenerate maximinimax QES $\gamma$ is a bulge.
\end{thm}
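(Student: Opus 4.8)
\emph{Overview.} A bulge is by definition a QES that is neither a throat nor a bounce, so the plan is to rule out both. The ``not a bounce'' half is immediate from the stability clause in Definition~\ref{def:maximinimax}: if $\gamma$ were a bounce, apply the definition of a bounce with the ambient slice taken to be a minimax slice $C$ of $\gamma$. This produces a continuous family of partial Cauchy slices $\Sigma(\eta)$, $\Sigma(0)=C$, carrying a continuous family of stationary surfaces $\gamma(\eta)\subset\Sigma(\eta)$ with $\gamma(0)=\gamma$ and $S_{\rm gen}(\gamma(\eta))>S_{\rm gen}(\gamma)$ for $\eta\neq0$. For small $\eta$ this is exactly an infinitesimal deformation of $\gamma$, and of the minimax slice, to a surface that stays stationary within the deformed slice, so the stability clause forces $S_{\rm gen}(\gamma(\eta))\le S_{\rm gen}(\gamma)$---a contradiction. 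So $\gamma$ is not a bounce; I expect this step to be routine.

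\emph{Not a throat.} By Theorem~\ref{thm11} a throat has $\lambda\ge0$, and nondegeneracy gives $\lambda\neq0$, so it suffices to prove that the principal eigenvalue of $\hat L_\gamma$ is strictly negative. The minimax construction already hands us a spacelike deformation along which $S_{\rm gen}$ strictly decreases at second order: on the minimax slice $C$ with optimal sweepout $f_C$, the surface $\gamma=f_C^{-1}(\eta_*)$ attains $\max_\eta S_{\rm gen}(f_C^{-1}(\eta))$ at an interior $\eta_*$, so along the sweepout direction $W_f$ (spacelike, since it lies inside the spacelike slice $C$) $\gamma$ is a local maximum of $S_{\rm gen}$, strict by genericity: $\langle W_f,\hat L_\gamma W_f\rangle<0$ with $\langle W_f,W_f\rangle>0$. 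The crux is to upgrade this to $\lambda<0$. For that I would use both layers of optimality---the minimization over sweepouts and the maximization over slices---via a mountain-pass / min-max argument: optimality should force $W_f$ at $\gamma$ to be proportional to an eigenvector of $\hat L_\gamma$ (otherwise one could rotate the foliation to lower the peak, or tilt $C$ to raise the minimax value), and, since $\hat L_\gamma$ is the Hessian of $S_{\rm gen}$ with respect to the pseudo-inner product \eqref{eq:pseudoinnerproduct}, that eigenvector must be the \emph{principal} one, any other spacelike-supported eigendirection necessarily having nonnegative eigenvalue. Then $\hat L_\gamma W_f=\lambda W_f$ gives $\lambda\langle W_f,W_f\rangle=\langle W_f,\hat L_\gamma W_f\rangle<0$, hence $\lambda<0$, and $\gamma$ is not a throat by Theorem~\ref{thm11}.

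\emph{Main obstacle and an alternative.} The min-max optimality step forcing $\lambda<0$ is the real work: a strict local maximum of $S_{\rm gen}$ along a single spacelike direction is not on its own enough, because---as the transverse-symmetric $2\times2$ analysis shows---even a genuine throat has both a positive and a negative spacelike direction, and only the sign of the \emph{smallest} eigenvalue separates throats from bulges; the argument must therefore genuinely exploit both the minimization over $f_C$ and the maximization over $C$. A route that sidesteps eigenvalues entirely is to argue directly from the definitions: a partial Cauchy slice witnessing $\gamma$ as a throat, once realized as (or extended to) a partial Cauchy slice of the wedge $\mathcal W$ with boundary $\gamma_1\cup\gamma_2$, would carry a sweepout every leaf of which has $S_{\rm gen}\ge S_{\rm gen}(\gamma)$, so $\min_{f}\max_\eta S_{\rm gen}>S_{\rm gen}(\gamma)$ on that slice, contradicting $S_{\rm gen}(\gamma)=\max_C\min_{f_C}\max_\eta S_{\rm gen}$; there the technical difficulty migrates to ensuring the witnessing slice really is a legitimate slice of $\mathcal W$ without producing surfaces of smaller generalized entropy far from $\gamma$.
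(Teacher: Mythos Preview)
Your ``not a bounce'' argument is exactly the paper's: apply the bounce definition to a minimax slice and invoke the stability clause of Definition~\ref{def:maximinimax}.

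For ``not a throat,'' the paper does \emph{not} go through eigenvalues at all. It takes essentially your alternative route, but resolves the technical difficulty you flag at the end by a simple trick: rather than extending the throat-witnessing slice $\tilde\Sigma$ to a full partial Cauchy slice of $\mathcal W$, it starts from the minimax slice $\Sigma$ and deforms it only in a small neighbourhood of $\gamma$ so that it locally agrees with $\tilde\Sigma$ there. The deformed slice is then automatically a legitimate partial Cauchy slice of $\mathcal W$ (it agrees with $\Sigma$ away from $\gamma$), and on it $\gamma$ is locally minimal. Hence the minimax value on the deformed slice is strictly larger than $S_{\rm gen}(\gamma)$, contradicting the outer maximization in \eqref{eq-maximinimax}. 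This bypasses both the eigenvalue analysis and the question of extending $\tilde\Sigma$ globally.

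Your principal-eigenvalue route, by contrast, has a genuine gap at the step you yourself single out: the claim that optimality of the sweepout and slice forces $W_f$ to be the principal eigenvector is not established, and it is not clear how to make the mountain-pass heuristic rigorous here (the operator $\hat L_\gamma$ is not self-adjoint in the ordinary sense, only with respect to the indefinite pairing \eqref{eq:pseudoinnerproduct}, so standard variational characterizations of the lowest eigenvalue do not apply directly). Since the direct argument is both short and complete, there is no need to pursue the eigenvalue route at all for this theorem; in the paper, the principal-eigenvalue statement $\lambda<0$ for bulges is instead derived \emph{afterwards} as a corollary of this theorem together with the throat triality.
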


\begin{proof} Let $\Sigma$ be a minimax slice of $\gamma$. We first show that $\gamma$ is not a throat. Suppose there exists a partial Cauchy slice $\tilde{\Sigma}$ on which $\gamma$ is a minimum $S_{\rm gen}$ surface. By deforming $\Sigma$ to locally align with $\tilde{\Sigma}$ in a neighborhood of $\gamma$, we conclude that the minimax surface on the deformed slice has a larger $S_{\rm gen}$ than $\gamma$, so $\gamma$ cannot be maximinimax. Therefore, $\gamma$ is not a throat.

We will now show that $\gamma$ is not a bounce. The defining property of a bounce would imply that a deformation of the minimax slice leads to a slice with a stationary surface $\gamma'$ such that $S_{\rm gen}(\gamma') > S_{\rm gen}(\gamma)$. This would contradict the stability requirement of a maximinimax surface (see Definition~\ref{def:maximinimax}). Therefore, $\gamma$ is not a bounce. 
We conclude that $\gamma$ is a bulge.
\end{proof}
We have seen that a nondegenerate throat is equivalent to a nondegenerate QES with $\lambda>0$. The following two theorems state the principal eigenvalue properties of bulges and bounces.

\begin{thm}
A nondegenerate bulge has a negative principle eigenvalue.
\end{thm}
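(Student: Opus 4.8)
## Proof proposal for: A nondegenerate bulge has a negative principal eigenvalue

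The plan is to argue by contradiction together with the trichotomy already established. Suppose $\gamma$ is a nondegenerate bulge with principal eigenvalue $\lambda \neq 0$. By Theorem~\ref{thm11} we cannot immediately conclude anything, so instead I would split on the sign of $\lambda$. If $\lambda > 0$, then since $\gamma$ is assumed nondegenerate (hence isolated, with $\lambda \neq 0$), Theorem~\ref{thm22} tells us that $\gamma$ is a maximin QES, and Theorem~\ref{thm33} then tells us $\gamma$ is a throat. But a bulge is by definition neither a throat nor a bounce, so this is a contradiction. Since nondegeneracy excludes $\lambda = 0$, the only remaining possibility is $\lambda < 0$.

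The only real content beyond invoking the trichotomy is to make sure the hypothesis classes line up: a ``nondegenerate bulge'' is in particular a nondegenerate QES, so the phrase ``nondegenerate QES with $\lambda > 0$'' in Theorem~\ref{thm22} applies verbatim once we are in the case $\lambda > 0$. I would state this explicitly so the reader sees that no additional genericity assumption is being smuggled in. One should also note that the Corollary's genericity condition is already part of the standing assumptions, so the principal eigenvector is genuinely spacelike and the maximin construction in Theorem~\ref{thm22} goes through without caveat.

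Concretely, I would write: \emph{By definition a nondegenerate QES has $\lambda \neq 0$, so either $\lambda > 0$ or $\lambda < 0$. If $\lambda > 0$, then by Theorem~\ref{thm22} the QES $\gamma$ is $\mathrm{maximin}(\gamma_1,\gamma_2)$ for some $\gamma_1,\gamma_2$, and by Theorem~\ref{thm33} it is therefore a throat. This contradicts the assumption that $\gamma$ is a bulge, which is by definition neither a throat nor a bounce. Hence $\lambda < 0$.}

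I do not anticipate a serious obstacle here: the theorem is essentially a corollary of the triality diagram, and the entire argument is a two-line deduction. The only subtlety worth double-checking is that the contradiction in the $\lambda>0$ case does not secretly require $\gamma$ to be a bounce rather than a throat — but since ``bulge'' explicitly excludes both, either horn of the trichotomy would do. If one wanted a fully self-contained argument avoiding Theorems~\ref{thm22} and~\ref{thm33}, one could instead argue directly: $\lambda > 0$ with spacelike principal eigenvector $W$ means $\mathrm{exp}_\varepsilon(\pm W)$ bound a partial Cauchy slice on which $\gamma$ is a local $S_{\rm gen}$-minimum, which combined with the QFC (as in the proof of Theorem~\ref{thm11}) upgrades to $\gamma$ being a throat; but routing through the already-proven theorems is cleaner and I would present it that way.
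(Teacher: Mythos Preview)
Your proposal is correct and follows exactly the paper's approach: the paper's proof is the one-liner ``a nondegenerate bulge is in particular not a nondegenerate throat, so by the triality $\lambda<0$,'' and you have simply unpacked the triality into its constituent Theorems~\ref{thm22} and~\ref{thm33}. There is no substantive difference.
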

\begin{proof}
A nondegenerate bulge is in particular not a nondegenerate throat. Therefore, by the triality established above, $\lambda <0$.
\end{proof}

A similar result can be proven for a nondegenerate bounce:

\begin{thm}A nondegenerate bounce $\gamma$ has a negative principle eigenvalue, i.e. $\lambda <0$.
\end{thm}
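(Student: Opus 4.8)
The plan is to argue by contradiction: suppose $\gamma$ is a nondegenerate bounce with principal eigenvalue $\lambda \geq 0$. By nondegeneracy, $\lambda \neq 0$, so in fact $\lambda > 0$. The strategy is then to show that $\lambda > 0$ forces $\gamma$ to be a throat, contradicting the fact that a bulge and a throat (equivalently, a bounce and a throat) are mutually exclusive categories by the classification. The key input is the triality established above for nondegenerate QESs: a nondegenerate QES with $\lambda > 0$ is maximin (Theorem~\ref{thm22}), and a maximin QES is a throat (Theorem~\ref{thm33}). Hence $\lambda > 0$ and nondegeneracy together imply $\gamma$ is a throat.

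It remains to observe that a throat cannot be a bounce. This follows directly from the definitions: a throat admits a partial Cauchy slice on which it is a strict local minimum of $S_{\rm gen}$ among homologous surfaces, whereas the defining property of a bounce is that \emph{every} partial Cauchy slice through $\gamma$ can be deformed to a nearby slice carrying a stationary surface of strictly larger $S_{\rm gen}$. Applying the bounce property to the throat's distinguished slice $\tilde\Sigma$: the deformed slice $\tilde\Sigma(\eta)$ would carry a stationary surface $\gamma(\eta)$ with $S_{\rm gen}(\gamma(\eta)) > S_{\rm gen}(\gamma)$, but this is consistent with — rather than contradicting — minimality on $\tilde\Sigma$ itself, so a slightly more careful argument is needed. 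Instead I would invoke the principal-eigenvector deformation directly: since $\lambda > 0$, the spacelike deformation $\mathrm{exp}_\varepsilon(\pm W)$ along the principal eigenvector strictly increases $S_{\rm gen}$ at second order, and (as in the proof of Theorem~\ref{thm22}) one builds a partial Cauchy slice $\tilde\Sigma$ bounded by $\gamma^\pm = \mathrm{exp}_\varepsilon(\pm W)$ on which a restricted maximin returns $\gamma$ itself (using isolatedness). But the defining property of a bounce, applied to this very slice $\tilde\Sigma$, produces a deformed slice with a stationary surface of \emph{larger} $S_{\rm gen}$, while maximin-ness of $\gamma$ on $\tilde\Sigma$ together with the maximin stability condition says that any such nearby stationary surface on a deformed slice has $S_{\rm gen}$ \emph{no larger} than $S_{\rm gen}(\gamma)$ — a contradiction.

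So the argument in order is: (1) assume $\lambda \geq 0$; use nondegeneracy to upgrade to $\lambda > 0$; (2) apply Theorem~\ref{thm22} to get that $\gamma$ is maximin, with an explicit maximin slice $\tilde\Sigma$ built from the principal eigenvector; (3) apply the bounce definition to $\tilde\Sigma$ to produce a nearby deformed slice with a stationary surface $\gamma'$ of strictly larger $S_{\rm gen}$; (4) note this violates the stability clause in the definition of a (stable) maximin surface, which requires $S_{\rm gen}(\gamma') \leq S_{\rm gen}(\gamma)$ for exactly such deformations; contradiction, hence $\lambda < 0$.

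The main obstacle is step (4): matching the deformations allowed in the bounce definition (arbitrary continuous one-parameter families of partial Cauchy slices through $\gamma$, carrying a continuous family of stationary surfaces) against the class of deformations controlled by the maximin stability condition (infinitesimal deformations of $\gamma$ and its maximin slice $C$ to $\gamma', C'$ with $\gamma'$ stationary in $C'$). One must check that the bounce's family $\Sigma(\eta)$, restricted to a neighborhood of $\gamma$ and reparametrized, gives an admissible infinitesimal deformation of the pair $(\gamma, \tilde\Sigma)$ in the sense of the stability condition; this is essentially a matter of taking $\eta$ infinitesimal and noting that $\gamma(\eta)$ is stationary in $\Sigma(\eta)$ by hypothesis. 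A secondary subtlety is ensuring the maximin slice $\tilde\Sigma$ produced in step (2) is genuinely a \emph{stable} maximin slice so that its stability clause is available; this should follow because $\gamma$ is a nondegenerate QES and the $\varepsilon \to 0$ construction pins the maximin surface to $\gamma$ uniquely, but it is the point that warrants the most care.
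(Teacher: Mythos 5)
Your final argument is exactly the paper's: assume $\lambda>0$, invoke Theorem~\ref{thm22} to conclude $\gamma$ is maximin, then apply the bounce definition to the maximin slice and derive a contradiction with the maximin stability condition. Your preliminary detour through ``throat and bounce are mutually exclusive'' was correctly abandoned (that exclusivity is a corollary proven only \emph{after} this theorem, so using it here would be circular), and the subtleties you flag in step (4) are real but are handled in the paper by the one-line observation that nondegeneracy makes $\gamma(\eta)$ the unique stationary surface on the deformed slice.
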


\begin{proof} Suppose $\lambda>0$. By the construction in the proof of Thm.~\ref{thm22}, we conclude that $\gamma$ is maximin. Suppose $\Sigma$ is a maximin slice. Then, by the defining property of a bounce, a small deformation of $\Sigma$, called $\Sigma'$, exists containing a stationary surface with $S_{\rm gen}$ larger than that of $\gamma$. Since this is the unique stationary surface on $\Sigma'$, this contradicts the stability condition of maximin. Therefore, $\lambda <0$.
\end{proof}

In particular, a nondegenerate bounce cannot be a throat by Theorem~\ref{thm11}. Considering the definitions of a throat, bulge, and a bounce, an obvious implication is the following:
\begin{cor}
A nondegenerate QES is (exclusively) either a throat, a bulge, or a bounce.
\end{cor}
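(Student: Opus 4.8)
The plan is to show that the three classes---throat, bulge, bounce---are pairwise disjoint and jointly exhaustive on the set of nondegenerate QESs, so that each such QES falls in exactly one. Exhaustiveness is essentially free: by definition a bulge is a QES which is neither a throat nor a bounce, so every QES is a throat, a bounce, or (by the definition of bulge) a bulge. The real content is pairwise disjointness, and here I would lean entirely on the principal-eigenvalue characterizations already established. The key bookkeeping fact is that for a \emph{nondegenerate} QES the principal eigenvalue $\lambda$ is by definition nonzero, so it is either strictly positive or strictly negative; this dichotomy is what makes the trichotomy clean.

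First I would dispose of throat-versus-bounce and throat-versus-bulge. By the triality diagram (Theorems~\ref{thm11}, \ref{thm33}, \ref{thm22}), a nondegenerate QES is a throat if and only if $\lambda > 0$. By the last theorem above, a nondegenerate bounce has $\lambda < 0$, and by the second-to-last theorem a nondegenerate bulge has $\lambda < 0$. Hence no nondegenerate bounce or bulge can be a throat, since $\lambda$ cannot be simultaneously positive and negative. This already gives throat $\cap$ bounce $= \emptyset$ and throat $\cap$ bulge $= \emptyset$. The remaining case, bulge versus bounce, does not follow from the eigenvalue alone (both have $\lambda<0$), but it is immediate from the definitions: a bulge is \emph{defined} to be a QES that is not a throat and not a bounce, so the bulge and bounce classes are disjoint by fiat. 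Thus all three pairwise intersections are empty.

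Putting the two halves together: given a nondegenerate QES $\gamma$, if $\lambda > 0$ then $\gamma$ is a throat and, by the disjointness just shown, neither a bulge nor a bounce. If $\lambda < 0$, then $\gamma$ is not a throat; it is then either a bounce or, failing that, a bulge by the definition of bulge---and these two options are mutually exclusive. So $\gamma$ lies in exactly one of the three classes, which is the assertion of the Corollary.

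The only place where one must be slightly careful---and the ``main obstacle,'' such as it is---is the implicit assumption that a nondegenerate QES really does have $\lambda \ne 0$; but this is built into the definition of nondegeneracy, so there is nothing to prove. (One might worry about the degenerate boundary case $\lambda = 0$ where a QES could fail to be cleanly classified, but the corollary is explicitly restricted to nondegenerate QESs precisely to exclude this.) In short, the proof is a one-line deduction from the preceding theorems, and I would present it as such: combine the triality (throat $\Leftrightarrow \lambda>0$), the fact that nondegenerate bulges and bounces both have $\lambda<0$, and the definitional disjointness of bulges from bounces, with exhaustiveness following from the definition of a bulge.
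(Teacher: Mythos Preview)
Your proposal is correct and follows essentially the same approach as the paper: use the eigenvalue characterizations (throat $\Leftrightarrow \lambda>0$ via the triality, bounce and bulge have $\lambda<0$) to separate throats from the other two, and invoke the definition of a bulge for the remaining disjointness and exhaustiveness. The paper's own argument is terser---it just notes that a nondegenerate bounce cannot be a throat and then declares the rest an ``obvious implication'' of the definitions---but the logical content is identical to what you have spelled out.
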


\subsection{Outer-minimal QESs} \label{sec:outerminimal}

We have hithero focused on local properties of QESs. In this section, we identify a class of particularly significant QESs based on a \textit{global} property which will turn out to be critical for our discussion of reconstruction complexity in Sec.~\ref{sec:proposal}. While the focus on this particular class of QESs may not prima facie appear particularly well-motivated, `outer minimality' is a property that in an intuitive sense controls the constriction of information flow from bulk to boundary: we will indeed find that an analysis of these surfaces is a sine qua non for a reformulated Python's Lunch. We therefore dedicate some time to studying their properties prior to proposing the new Python's Lunch prescription. 

\begin{definition} A throat $\gamma$ is \emph{outer minimal} if there is no other QES with smaller generalized entropy that is contained in $W_O[\gamma]$.
\end{definition}

It is natural to ask how this definition is related to previous notions of quasi-minimal QESs. While this reformulation is more convenient for our purposes, it turns out that our outer-minimal QESs are in fact identical to~\cite{EngWal18}'s `minimar' surfaces and \cite{AkeLev23}'s `(vN-)accessible' surfaces when the latter two categories are quantum extremal.\footnote{At least one of the authors of~\cite{EngWal18} feels that our terminology, at least for the specific case when the minimar surface is quantum extremal, is an improvement.}

\begin{lem}
An outer-minimal QES $\gamma$ is a throat.
\end{lem}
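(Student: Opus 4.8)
The statement to prove is: an outer-minimal QES $\gamma$ is a throat. Recall that $\gamma$ is \emph{outer minimal} means it is already a throat (as stated in the definition) \emph{and} has no QES of smaller generalized entropy in $\mathcal{W}_O[\gamma]$ --- wait, looking again, the definition says ``A throat $\gamma$ is \emph{outer minimal} if...''. So strictly this lemma is trivial by the definition as written. But I suspect the intended content is that the definition should be read with ``QES'' in place of ``throat,'' and the lemma recovers the throat property; alternatively, the intended content is the converse-flavored statement that a QES which is minimal-$S_{\rm gen}$ within its own outer wedge is automatically a throat (i.e. automatically $\lambda \geq 0$, indeed $\lambda > 0$ generically). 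I will write the proof for this nontrivial reading, since that is what makes the lemma worth stating.

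The plan is to argue by contradiction using the principal eigenvalue $\lambda$ of the quantum stability operator $\hat{L}_\gamma$ together with the maximin construction from Theorem~\ref{thm22}. First, suppose $\gamma$ is not a throat. By Theorem~\ref{thm11} (contrapositive) a throat has $\lambda \geq 0$; more usefully, for a nondegenerate QES the triality shows non-throat $\Rightarrow$ $\lambda < 0$. So assume $\lambda < 0$. Let $W = (U,V)$ be the principal eigenvector, which by the Corollary is strictly positive in the interior of $\gamma$, hence describes a spacelike, outward, achronal deformation. Then, exactly as in the proof of Theorem~\ref{thm11}, $\gamma^- := \mathrm{exp}_\varepsilon(-W)$ is an inward spacelike deformation with $S_{\rm gen}(\gamma^-) < S_{\rm gen}(\gamma)$ for small $\varepsilon$, since $\delta S_{\rm gen} = \tfrac{\varepsilon^2}{2}\langle W, \hat{L}_\gamma W\rangle$ and the timelike-norm contribution of a spacelike eigenvector with $\lambda<0$ lowers $S_{\rm gen}$ --- more carefully, $-W$ is the same spacelike direction, and $\langle -W, \hat L_\gamma(-W)\rangle = \lambda \langle W, W\rangle < 0$ because $W$ is spacelike so $\langle W,W\rangle > 0$. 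Crucially $\gamma^-$ lies in the \emph{interior} of $\mathcal{W}_O[\gamma]$ (the deformation is inward).

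The next step is to promote $\gamma^-$, which is merely a nearby surface of smaller $S_{\rm gen}$, to an actual \emph{QES} of smaller $S_{\rm gen}$ inside $\mathcal{W}_O[\gamma]$, contradicting outer minimality. For this I would run a restricted maximin (or just a min-over-Cauchy-slices-then-max) inside $\mathcal{W}_O[\gamma]$ anchored between $B$ and $\gamma^-$: since $\gamma^-$ has strictly negative outward null expansions by the $\lambda<0$ condition (this is the same computation giving $\Theta_k(\gamma^-)<0$, $\Theta_\ell(\gamma^-)<0$ that appears in the proof of Theorem~\ref{thm22}), the maximin prescription yields a QES $\gamma'$ homologous to $B$ contained in the wedge between $B$ and $\gamma^-$, hence contained in $\mathcal{W}_O[\gamma]$, with $S_{\rm gen}(\gamma') \leq S_{\rm gen}(\gamma^-) < S_{\rm gen}(\gamma)$. (The minimization step is over Cauchy slices of that wedge, each of which contains a surface homologous to $B$ of $S_{\rm gen}$ at most that of $\gamma^-$, namely $\gamma^-$ itself; the maximization can only increase things but the stability condition keeps it below $S_{\rm gen}(\gamma)$.) This directly contradicts outer minimality, so $\gamma$ must be a throat.

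The main obstacle is the second step: ensuring the maximin surface $\gamma'$ genuinely stays inside $\mathcal{W}_O[\gamma]$ and does not coincide with $\gamma$ or escape to the boundary. The key inputs are (i) the sign of the null expansions of $\gamma^-$, which follows from $\lambda < 0$ exactly as in Theorem~\ref{thm22}, guaranteeing maximin produces a QES trapped between $B$ and $\gamma^-$; and (ii) the assumption (used throughout the paper) that stable maximin surfaces exist. One should also handle the degenerate case $\lambda = 0$ separately --- but since the definition of outer-minimal already stipulates $\gamma$ is a throat (or one restricts to nondegenerate QESs, where $\lambda=0$ is excluded), this is not an issue; if $\gamma$ is assumed only to be a QES, the genericity condition together with the strict inequality in the Corollary gives $\lambda \neq 0$, so $\lambda < 0$ is the only case to rule out. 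I would present the argument cleanly as: outer minimality forbids the existence of the smaller-$S_{\rm gen}$ QES $\gamma'$ just constructed, hence $\lambda \geq 0$, hence (by the triality / Theorem~\ref{thm22}) $\gamma$ is a throat.
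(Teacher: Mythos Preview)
Your proposal contains a directional error that breaks the argument as written. You define $\gamma^- = \exp_\varepsilon(-W)$ as the ``inward'' deformation and then assert it lies in the interior of $\mathcal{W}_O[\gamma]$. But $\mathcal{W}_O[\gamma]$ is the \emph{outer} wedge, the region between $\gamma$ and $B$; moving inward (away from $B$) along $-W$ takes you into $\mathcal{W}_O[\gamma]'$, not $\mathcal{W}_O[\gamma]$. Moreover, with $\lambda<0$ the expansions of $\gamma^-$ are $(\delta\Theta_v,\delta\Theta_u)=\varepsilon\hat L_\gamma(-W)=-\varepsilon\lambda(U,V)$, which are \emph{positive}; the computation in Theorem~\ref{thm22} you cite was for $\lambda>0$, and the signs flip here. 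The surface you actually want is $\gamma^+=\exp_\varepsilon(W)$: it lies in $\mathcal{W}_O[\gamma]$ (outward deformation) and is quantum anti-normal ($\Theta_v,\Theta_u<0$), so it can serve as the inner boundary for a restricted maximin producing a smaller-$S_{\rm gen}$ QES inside $\mathcal{W}_O[\gamma]$. With this swap your argument goes through for nondegenerate $\gamma$.

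Even after the fix, the paper's proof is more direct and sidesteps the eigenvalue analysis entirely. It runs the restricted maximin procedure in $\mathcal{W}_O[\gamma]$ with $\gamma$ itself as the inner boundary (which trivially has zero, hence nonpositive, outward expansions). By Theorem~\ref{thm33} a maximin QES is always a throat; since $\gamma$ is assumed not to be a throat, it cannot be the maximin surface, so maximin returns a distinct QES in the interior of $\mathcal{W}_O[\gamma]$ with smaller $S_{\rm gen}$, contradicting outer minimality. Your route via $\lambda<0$ and the auxiliary surface $\gamma^\pm$ reaches the same contradiction but with extra machinery and an appeal to nondegeneracy; the paper needs only Theorem~\ref{thm33} and the standing existence assumption for stable maximin surfaces.
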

\begin{proof}
Suppose $\gamma$ is not a throat. Then, by Thm.~\ref{thm33}, it cannot be maximin. Therefore, a restricted maximin procedure in $\mathcal{W}_O[\gamma]$ would find a QES with smaller $S_{\text{gen}}$ in the interior of $\mathcal{W}_O[\gamma]$.
\end{proof}

Examples of outer-minimal QESs include the minimal QES and the outermost QES $\gamma_{\rm outer}$~\footnote{The outermost QES is a QES which is fully contained in the outer wedge of every other QES in the entanglement wedge; such a QES always exists~\cite{EngPen21a}.}. We will now show that an outer-minimal QESs homologous to $B$ will necessarily be contained in the entanglement wedge of $B$, i.e., the outer wedge of the minimal QES homologous to $B$. First, a precursor Lemma follows:

\begin{lem}\label{lem:ssa} Let $\gamma_a$ and $\gamma_b$ be quantum extremal surfaces homologous to the same boundary region $B$. Then there exist QESs $\gamma_c \subseteq W_O[\gamma_a] \cap W_O[\gamma_b]$ and $\gamma_d$ such that $\gamma_a, \gamma_b \subseteq W_O[\gamma_d]$ such that
\begin{align}
    S_\mathrm{gen}(\gamma_c) + S_\mathrm{gen}(\gamma_d) \leq S_\mathrm{gen}(\gamma_a) + S_\mathrm{gen}(\gamma_b).
\end{align}
\end{lem}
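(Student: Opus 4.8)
\textbf{Proof proposal for Lemma \ref{lem:ssa}.}

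The plan is to run a pair of simultaneous maximin-type constructions: one ``outer'' construction that produces $\gamma_c$ inside $\mathcal{W}_O[\gamma_a]\cap\mathcal{W}_O[\gamma_b]$, and one ``inner'' construction that produces $\gamma_d$ containing both $\gamma_a$ and $\gamma_b$ in its outer wedge, and then to use strong subadditivity of the generalized entropy to compare the sum $S_\mathrm{gen}(\gamma_c)+S_\mathrm{gen}(\gamma_d)$ against $S_\mathrm{gen}(\gamma_a)+S_\mathrm{gen}(\gamma_b)$. Concretely, first I would choose a Cauchy slice $\Sigma$ of the whole spacetime that contains both $\gamma_a$ and $\gamma_b$; this requires that $\gamma_a$ and $\gamma_b$ be achronally separated, so the first technical point to address is whether that can be assumed or arranged. (If $\gamma_a,\gamma_b$ are timelike-separated one instead works with a maximin-over-slices version: maximize over Cauchy slices and on each slice form the relevant unions/intersections of homology regions; I would set up the argument so that it reduces to the achronal case on the optimal slice.) On $\Sigma$, let $H_a, H_b$ be the homology regions (partial Cauchy slices) bounded by $\gamma_a$ and $\gamma_b$ respectively, both with $B$ as their other boundary component. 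Set $H_\cap = H_a\cap H_b$ and $H_\cup = H_a\cup H_b$, with $\partial H_\cap = \gamma_{a\wedge b}\cup B$ and $\partial H_\cup = \gamma_{a\vee b}\cup B$ where $\gamma_{a\wedge b}$, $\gamma_{a\vee b}$ are the ``inner'' and ``outer'' surfaces obtained by cutting and gluing pieces of $\gamma_a$ and $\gamma_b$ along their intersection.

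The key inequality is the standard one: because $\gamma_a$ and $\gamma_b$ are homologous to the same $B$ and achronal, the areas satisfy $\mathrm{Area}(\gamma_{a\wedge b})+\mathrm{Area}(\gamma_{a\vee b}) = \mathrm{Area}(\gamma_a)+\mathrm{Area}(\gamma_b)$ (the cut-and-glue is area-preserving up to the measure-zero intersection locus, modulo the usual care about whether the intersection is transverse — a genericity/perturbation argument handles non-transverse cases), while the bulk von Neumann entropies obey strong subadditivity, $S_\mathrm{ren}[\rho_{H_\cap}] + S_\mathrm{ren}[\rho_{H_\cup}] \le S_\mathrm{ren}[\rho_{H_a}] + S_\mathrm{ren}[\rho_{H_b}]$. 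Adding these gives $S_\mathrm{gen}(\gamma_{a\wedge b}) + S_\mathrm{gen}(\gamma_{a\vee b}) \le S_\mathrm{gen}(\gamma_a) + S_\mathrm{gen}(\gamma_b)$. However $\gamma_{a\wedge b}$ and $\gamma_{a\vee b}$ are in general only piecewise-smooth and not stationary; they have nonpositive quantum expansion towards each other and the appropriate nonpositive expansions towards $B$ (this follows because at a generic intersection point the cut-and-glue surface has a ``corner'' bending the right way, and where it coincides with $\gamma_a$ or $\gamma_b$ the expansions vanish — the sign of the corner contribution is the crucial geometric input). I would therefore feed $\gamma_{a\wedge b}$ into a restricted maximin in $\mathcal{W}_O[\gamma_{a\wedge b}] \subseteq \mathcal{W}_O[\gamma_a]\cap\mathcal{W}_O[\gamma_b]$, obtaining a genuine QES $\gamma_c$ with $S_\mathrm{gen}(\gamma_c)\le S_\mathrm{gen}(\gamma_{a\wedge b})$; and feed $\gamma_{a\vee b}$ into the analogous ``inner'' maximin (maximin between $\gamma_{a\vee b}$ and a deep surface, or directly the construction producing a QES whose outer wedge contains $\mathcal{W}_O[\gamma_{a\vee b}]$), obtaining a QES $\gamma_d$ with $\gamma_a,\gamma_b\subseteq\mathcal{W}_O[\gamma_{a\vee b}]\subseteq\mathcal{W}_O[\gamma_d]$ and $S_\mathrm{gen}(\gamma_d)\le S_\mathrm{gen}(\gamma_{a\vee b})$. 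Chaining the inequalities yields the claim.

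The main obstacle I expect is twofold and geometric rather than quantum. First, establishing that the cut-and-glue surfaces have the correct sign of quantum expansion towards each other and towards $B$: one must argue that the ``inner'' surface $\gamma_{a\wedge b}$ genuinely has nonpositive expansion towards $B$ and nonpositive expansion away from it on the side facing $\gamma_{a\vee b}$, and vice versa, including handling the corner/kink contributions and the non-transverse-intersection degeneracies by a limiting or genericity argument — this is exactly the place where QFC and the corner inequality of the usual RT/maximin ``entanglement wedge nesting'' proofs get used, lifted to the quantum setting via \eqref{eq-QFCV} and \eqref{eq:ssaUV}. Second, ensuring the inner maximin that produces $\gamma_d$ is well-posed: its outer wedge must be taken large enough to contain $\mathcal{W}_O[\gamma_{a\vee b}]$ but the construction must still terminate at a QES rather than running off to infinity; the assumption that domains of dependence of compact regions are compact in time, together with the homology constraint to $B$, should control this. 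Everything else — the area additivity, SSA, and the monotonicity of $S_\mathrm{gen}$ under the restricted maximin step (which is just Theorem \ref{thm33} plus the minimization step) — is routine once these two points are settled.
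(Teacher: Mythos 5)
Your overall strategy — form the lattice surfaces $\gamma_{a\wedge b}$ (boundary of $H_a \cap H_b$) and $\gamma_{a\vee b}$ (boundary of $H_a \cup H_b$) on a common Cauchy slice, invoke strong subadditivity plus exact agreement of the area terms, and then run a restricted maximin-type procedure to upgrade the resulting quantum (anti-)normal surfaces to genuine QESs $\gamma_c, \gamma_d$ with only smaller $S_\mathrm{gen}$ — is exactly the paper's proof for the achronal case, and you have correctly identified all the ingredients that matter there (area matching, SSA, and the need to check the signs of the quantum expansions before handing off to maximin).

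The gap is in your handling of the non-achronal case, which is the one case that actually matters for this paper. You propose a ``maximin-over-slices'' version that ``reduces to the achronal case on the optimal slice.'' But if $\gamma_a$ and $\gamma_b$ are timelike-separated, there simply is no Cauchy slice containing both of them, so there is no slice on which to form $H_a\cap H_b$ and $H_a\cup H_b$ at all; maximizing over slices does not make the obstruction go away. The paper's fix is a specific geometric pre-processing step: replace $\gamma_a$ and $\gamma_b$ by deformations $\tilde\gamma_a, \tilde\gamma_b$ obtained by flowing the offending portions of each surface along the appropriate (future/past, inward/outward) orthogonal null congruences until they land on the boundary of the other surface's outer wedge (or its complement). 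This construction is designed so that (i) $\mathcal{W}_O[\tilde\gamma_a]\cap\mathcal{W}_O[\tilde\gamma_b] = \mathcal{W}_O[\gamma_a]\cap\mathcal{W}_O[\gamma_b]$ and likewise for the union of complements, (ii) $\tilde\gamma_a$ and $\tilde\gamma_b$ do lie on a common Cauchy slice, and (iii) by the QFC, $S_\mathrm{gen}(\tilde\gamma_a) + S_\mathrm{gen}(\tilde\gamma_b) \leq S_\mathrm{gen}(\gamma_a) + S_\mathrm{gen}(\gamma_b)$. Only after this reduction do the SSA and area arguments apply. So the QFC is used crucially in this null-flow step (and in the cited result that a quantum normal/anti-normal surface bounds a QES), not primarily in a corner inequality as you suggest. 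You also need to allow surfaces with null segments for this to work cleanly, which the paper does explicitly; a genericity/transversality argument does not substitute for this, since the issue is not a non-transverse intersection but the possibility that two timelike-separated QESs do not intersect at all.
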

\begin{proof}
Throughout this proof, we will relax the definition of a surface, allowing it to contain null segments. Suppose $\gamma_a$ and $\gamma_b$ can be contained in a single Cauchy slice $\Sigma$ and further that $\gamma_a \cap \gamma_b$ does not contain a co-dimension two spacelike piece.  Define partial Cauchy slices $A = \mathcal{W}_{O}[\gamma_a] \cap \Sigma$ and $B = \mathcal{W}_O[\gamma_b]\cap \Sigma$. Then, $C = A \cap B$ and $D = A \cup B$ are also partial Cauchy slices. Furthermore, it is easy to see that $S_\mathrm{gen}(\partial C) + S_\mathrm{gen}(\partial D) \leq S_\mathrm{gen}(\partial A) + S_\mathrm{gen}(\partial B)$. To see this, note that the area terms on both sides of the inequality agree (except that the left-hand side excludes pieces where $A$ and $B$ touch but do not intersect) and the entropy terms obey the inequality by strong subadditivity, i.e. $S(C) + S(D) \leq S(A) + S(B)$.

Now, consider the general case where $\gamma_a$ and $\gamma_b$ may not live on a single Cauchy slice. Let $\tilde{\gamma_a}$ be the surface obtained evolving the portions of $\gamma_a$ to the future (past) of $\gamma_b$ along outward past (future)-directed null geodesics orthogonally fired from $\gamma_a$ until they intersect $\mathcal{W}_{O}[\gamma_b]$, and let $\tilde{\gamma_b}$ be obtained by evolving the portions of $\gamma_b$ to the future (past) of $\gamma_a$ along inward future (past)-directed null geodesics orthogonally fired from $\gamma_b$ until they intersect $\mathcal{W}_{O}[\gamma_a]'$. By the QFC, $S_\mathrm{gen}(\tilde{\gamma_a}) + S_\mathrm{gen}(\tilde{\gamma_b}) \leq S_\mathrm{gen}(\gamma_a) + S_\mathrm{gen}(\gamma_b)$. By construction, $\mathcal{W}_{O}[\tilde{\gamma_a}] \cap \mathcal{W}_{O}[\tilde{\gamma_b}]=\mathcal{W}_{O}[\gamma_a] \cap \mathcal{W}_{O}[\gamma_b]$ and $(\mathcal{W}_{O}[\tilde{\gamma_a}]' \cap \mathcal{W}_{O}[\tilde{\gamma_b}]')'=(\mathcal{W}_{O}[\gamma_a]' \cap \mathcal{W}_{O}[\gamma_b]')'$, but there exists a Cauchy slice $\Sigma$ that contains both $\tilde{\gamma_a}$ and $\tilde{\gamma_b}$. Therefore, $S_\mathrm{gen}(\partial C) + S_\mathrm{gen}(\partial D) \leq S_\mathrm{gen}(\gamma_a) + S_\mathrm{gen}(\gamma_b)$ where $C = A \cap B$, $D = A \cup B$, with $A = \mathcal{W}_{O}[\tilde{\gamma_a}] \cap \Sigma$ and $B = \mathcal{W}_O[\tilde{\gamma_b}]\cap \Sigma$. By construction, $\partial C$ and $\partial D$ are quantum anti-normal and normal respectively. As proved in~\cite{EngPen21a, EngPen21b}, this implies the existence of a QES $\gamma_c \subset \mathcal{W}_{O}[\gamma_a] \cap \mathcal{W}_{O}[\gamma_b]$ and $\gamma_d \subset (\mathcal{W}_{O}[\gamma_a]' \cap \mathcal{W}_{O}[\gamma_b]')'$ such that $S_\mathrm{gen}(\gamma_c) \leq S_\mathrm{gen}(\partial C)$ and $S_\mathrm{gen}(\gamma_d) \leq S_\mathrm{gen}(\partial D)$. Putting everything together, we have $S_\mathrm{gen}(\gamma_c) + S_\mathrm{gen}(\gamma_d) \leq S_\mathrm{gen}(\gamma_a) + S_\mathrm{gen}(\gamma_b)$.
\end{proof}

\begin{thm} \label{thm:outermininEW} Let $\gamma_a$ be an outer-minimal QES homologous to some boundary region $B$. Then $\gamma_a$ lives in the entanglement wedge of $B$.
\end{thm}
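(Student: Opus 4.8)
The plan is to argue by contradiction: suppose the outer-minimal QES $\gamma_a$ is not contained in the entanglement wedge of $B$, i.e.\ it is not contained in $\mathcal{W}_O[\gamma_{\rm min}]$, where $\gamma_{\rm min}$ is the minimal QES homologous to $B$. I would then apply Lemma~\ref{lem:ssa} to the pair $\gamma_a$ and $\gamma_{\rm min}$, obtaining QESs $\gamma_c \subseteq \mathcal{W}_O[\gamma_a] \cap \mathcal{W}_O[\gamma_{\rm min}]$ and $\gamma_d$ with $\gamma_a, \gamma_{\rm min} \subseteq \mathcal{W}_O[\gamma_d]$ satisfying
\begin{align}
S_\mathrm{gen}(\gamma_c) + S_\mathrm{gen}(\gamma_d) \leq S_\mathrm{gen}(\gamma_a) + S_\mathrm{gen}(\gamma_{\rm min}).
\end{align}
The key point is that $\gamma_c$ is homologous to $B$ (this should follow from the homology properties of the intersection wedge, as in~\cite{EngPen21a, EngPen21b}) and is contained in $\mathcal{W}_O[\gamma_a]$, so by outer-minimality of $\gamma_a$ we must have $S_\mathrm{gen}(\gamma_c) \geq S_\mathrm{gen}(\gamma_a)$. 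Combining with the displayed inequality yields $S_\mathrm{gen}(\gamma_d) \leq S_\mathrm{gen}(\gamma_{\rm min})$.

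Next I would use the minimality of $\gamma_{\rm min}$: since $\gamma_d$ is a QES homologous to $B$, we have $S_\mathrm{gen}(\gamma_d) \geq S_\mathrm{gen}(\gamma_{\rm min})$, hence $S_\mathrm{gen}(\gamma_d) = S_\mathrm{gen}(\gamma_{\rm min})$ and therefore also $S_\mathrm{gen}(\gamma_c) = S_\mathrm{gen}(\gamma_a)$ and the inequality in Lemma~\ref{lem:ssa} is saturated. Now $\gamma_{\rm min} \subseteq \mathcal{W}_O[\gamma_d]$ with $S_\mathrm{gen}(\gamma_d) = S_\mathrm{gen}(\gamma_{\rm min})$; I would argue that this forces $\gamma_d = \gamma_{\rm min}$ (or at least that $\gamma_{\rm min}$ can be taken to be $\gamma_d$), using that $\gamma_{\rm min}$ is the unique minimal QES — or, if there are degenerate minimal QESs, replacing $\gamma_{\rm min}$ by $\gamma_d$ throughout and noting $\gamma_d$ is then also minimal. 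But $\gamma_a \subseteq \mathcal{W}_O[\gamma_d] = \mathcal{W}_O[\gamma_{\rm min}]$, contradicting the assumption that $\gamma_a$ lies outside the entanglement wedge. This closes the argument.

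The main obstacle I anticipate is the handling of the saturation/uniqueness step: when the Lemma~\ref{lem:ssa} inequality is saturated, I need to extract genuine geometric information — namely that $\gamma_d$ coincides with (a) minimal QES whose outer wedge contains $\gamma_a$ — rather than just an entropy equality. This requires being careful about the strong subadditivity saturation condition and the relation $\gamma_a, \gamma_{\rm min} \subseteq \mathcal{W}_O[\gamma_d]$; the cleanest route is probably to observe that $\gamma_d$ is itself a QES homologous to $B$ with $S_\mathrm{gen}(\gamma_d) = S_\mathrm{gen}(\gamma_{\rm min})$, so $\gamma_d$ is \emph{also} a minimal QES, and then conclude directly that $\gamma_a \subseteq \mathcal{W}_O[\gamma_d]$ places $\gamma_a$ in \emph{an} entanglement wedge of $B$ — which, if the minimal QES is assumed unique (the generic case), is the entanglement wedge. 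A secondary subtlety is verifying the homology condition for $\gamma_c$ and $\gamma_d$, but this is already implicitly handled by the construction in Lemma~\ref{lem:ssa} and the cited results of~\cite{EngPen21a, EngPen21b}.
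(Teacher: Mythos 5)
Your proof is correct and takes essentially the same route as the paper's: both apply Lemma~\ref{lem:ssa} to the pair $(\gamma_a,\gamma_{\rm min})$ and then combine outer-minimality of $\gamma_a$ with minimality of $\gamma_{\rm min}$ to force the resulting $\gamma_d$ to place $\gamma_a$ inside the entanglement wedge. You are in fact slightly more careful than the paper, which asserts strict inequalities $S_{\rm gen}(\gamma_c)>S_{\rm gen}(\gamma_a)$ and $S_{\rm gen}(\gamma_d)<S_{\rm gen}(\gamma_{\rm min})$ that do not literally follow from the definitions; your observation that saturation simply makes $\gamma_d$ itself a minimal QES with $\gamma_a\subseteq\mathcal{W}_O[\gamma_d]$ is the clean way to close the argument.
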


\begin{proof}
Let $\gamma_b$ be the minimal QES and let $\gamma_c$ and $\gamma_d$ be defined according to the construction of Lemma~\ref{lem:ssa}. In particular, $\gamma_a$, $\gamma_b$, $\gamma_c$, and $\gamma_d$ are all homologous to $B$.  By the outer-minimal property of $\gamma_a$, we have $S_{\rm gen}(\gamma_c) > S_{\rm gen}(\gamma_a)$. By the inequlaity proved in Lemma~\ref{lem:ssa}, $S_{\rm gen}(\gamma_d) < S_{\rm gen}(\gamma_d)$ contradicting that $\gamma_b$ is the minimal QES.
\end{proof}

As we will see in Sec.~\ref{sec:egs}, QESs homologous to the same boundary region may be timelike-separated from each other. This makes it particularly interesting that outer-minimal QESs are necessarily not timelike-separated from the minimal QES. This relation between the outer-minimal and minimal QES motivates an alternate definition of an outer-minimal QES:

\begin{thm}\label{thm:outermindefviaEW}
Let $\gamma$ be a QES of a boundary region $B$ and $W_{O}[\gamma]$ the corresponding outer wedge. Then $\gamma$ is outer minimal if and only if for all semiclassical spacetimes $(M',g',\ket{\psi'})$ that are identical to $(M,g,\ket{\psi})$ on $W_{O}[\gamma]$, the entanglement wedge of $B$ contains $\gamma$.
\end{thm}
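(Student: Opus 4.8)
The plan is to prove the two directions separately, using the following guiding principle: the outer-minimality of $\gamma$ is an intrinsic property of the data on $\mathcal{W}_O[\gamma]$ (it only refers to QESs and their generalized entropies inside the outer wedge), while ``$\gamma$ lies in the entanglement wedge of $B$'' is a statement about whether the \emph{global} minimal QES is outer to $\gamma$. So the content of the theorem is that outer-minimality is precisely the condition that forces $\gamma \subseteq \mathcal{W}_O[\gamma_{\min}]$ no matter how the spacetime is completed behind $\gamma$.

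\textbf{($\Rightarrow$) Suppose $\gamma$ is outer minimal.} Since outer-minimality only involves QESs contained in $\mathcal{W}_O[\gamma]$ and their generalized entropies — all of which are determined by the restriction of $(M,g,\ket\psi)$ to $\mathcal{W}_O[\gamma]$ — any spacetime $(M',g',\ket{\psi'})$ agreeing with $(M,g,\ket\psi)$ on $\mathcal{W}_O[\gamma]$ has $\gamma$ as an outer-minimal QES as well. Then I apply Theorem~\ref{thm:outermininEW} in the primed spacetime: an outer-minimal QES homologous to $B$ lies in the entanglement wedge of $B$. Hence $\gamma$ is in the entanglement wedge of $B$ in every such $(M',g',\ket{\psi'})$. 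This direction is essentially immediate given the earlier results; the only thing to be careful about is that the homology class and the boundary region $B$ are part of the data on $\mathcal{W}_O[\gamma]$, so ``outer minimal homologous to $B$'' really does transfer.

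\textbf{($\Leftarrow$) Suppose $\gamma$ is not outer minimal; construct a completion in which $\gamma$ is excluded from the entanglement wedge.} By hypothesis there is a QES $\gamma^* \subset \mathcal{W}_O[\gamma]$ with $S_{\rm gen}(\gamma^*) \le S_{\rm gen}(\gamma)$ (strictly less, after an arbitrarily small generic perturbation; if $\gamma$ is not even a throat, use Lemma on outer-minimal$\Rightarrow$throat and the restricted maximin procedure in $\mathcal{W}_O[\gamma]$ to produce such a $\gamma^*$ of strictly smaller $S_{\rm gen}$). The idea is to build $(M',g',\ket{\psi'})$ that agrees with the original on $\mathcal{W}_O[\gamma]$ but whose completion of the complement wedge $\mathcal{W}_O[\gamma]'$ is chosen so that (i) no QES homologous to $B$ in the new region has $S_{\rm gen}$ smaller than $S_{\rm gen}(\gamma^*)$, and (ii) $\gamma^*$ is outer-minimal, hence the minimal QES of $B$ in the new spacetime. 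Concretely, one can glue in a ``long throat'' / large-area collar just outside $\gamma$ (in the complement wedge direction), e.g. an appropriately capped piece of a two-sided black hole interior, so that every surface outside $\gamma$ has large area; then the global minimal QES is $\gamma^*$ (or some QES in $\mathcal{W}_O[\gamma^*]$), which is strictly interior to $\mathcal{W}_O[\gamma]$, so $\gamma \not\subseteq \mathcal{W}_O[\gamma_{\min}]$, i.e. $\gamma$ is outside the entanglement wedge of $B$. I should verify that such a gluing can be done compatibly with the standing assumptions (asymptotically AdS, causally well-behaved, compact-in-time domains of dependence) and with the QFC; this is where the real work lies.

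\textbf{Main obstacle.} The hard part is the ($\Leftarrow$) direction: making the gluing argument precise — exhibiting an explicit, physically admissible completion of $\mathcal{W}_O[\gamma]'$ that pushes the minimal QES strictly inside $\mathcal{W}_O[\gamma]$ while matching the original data (including extrinsic curvature and quantum state) across $\gamma$, and checking that the new QES $\gamma^*$ really becomes global-minimal rather than merely a local throat. I expect one can avoid constructing the completion in full detail by instead arguing abstractly: use Lemma~\ref{lem:ssa} applied to $\gamma^*$ and the (hypothetical) minimal QES $\gamma_{\min}$ of the completion to show that if $\gamma_{\min}$ were outer to $\gamma$ then $S_{\rm gen}(\gamma^*) < S_{\rm gen}(\gamma)$ would already be enough to derive a contradiction with minimality — i.e. reduce the existence-of-a-bad-completion claim to the existence of \emph{some} completion in which the outer region contributes only large $S_{\rm gen}$, which is a mild genericity-type assumption on the space of allowed completions rather than an explicit construction.
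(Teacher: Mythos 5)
Your forward direction ($\Rightarrow$) is exactly the paper's argument: outer-minimality is determined entirely by data on $\mathcal{W}_O[\gamma]$, so it persists in any $(M',g',\ket{\psi'})$ agreeing there, and Theorem~\ref{thm:outermininEW} then puts $\gamma$ inside the entanglement wedge.

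For the converse, however, your approach has a genuine gap precisely where you suspect. You propose to glue a ``long throat''/large-area collar onto $\mathcal{W}_O[\gamma]'$ and then argue the minimal QES is forced into the interior of $\mathcal{W}_O[\gamma]$; but you give no construction of such a collar compatible with the constraint equations, the quantum state matching across $\gamma$, and the paper's standing assumptions (asymptotically AdS, compact-in-time domains of dependence, QFC). Your fallback suggestion to ``avoid constructing the completion in full detail'' via Lemma~\ref{lem:ssa} is a dead end: that lemma is an $S_\mathrm{gen}$ inequality among QESs in a \emph{given} spacetime, whereas what you need here is an existence statement about spacetime \emph{completions}. The paper sidesteps the whole matching problem by a different and much cleaner construction: it builds $(M',g',\ket{\psi'})$ by gluing $\mathcal{W}_O[\gamma]$ to its CPT conjugate across $\gamma$ (following \cite{EngWal17b, EngWal18}), for which the intrinsic metric, extrinsic curvature, and quantum state match automatically by the $\mathbb{Z}_2$ symmetry. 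If $\gamma'$ denotes the minimal-$S_\mathrm{gen}$ QES in $\mathcal{W}_O[\gamma]$ (which by non-outer-minimality is strictly smaller than $\gamma$), its CPT image $\gamma''$ lies on the far side with equal $S_\mathrm{gen}$, and a small perturbation localized in $\mathcal{W}_O[\gamma]'$ breaks the degeneracy so that the global minimal QES lies strictly inside $\mathcal{W}_O[\gamma]$, excluding $\gamma$ from the entanglement wedge. Your instinct that one needs a completion with ``no small surfaces outside $\gamma$'' is correct, but the CPT-double plus perturbation is the mechanism that actually delivers it rigorously; without it, or some equally explicit construction, the converse direction is not proved.
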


\begin{proof}
It follows directly from Theorem \ref{thm:outermininEW} that an outer-minimal QES $\gamma$ homologous to $B$ lies in the entanglement wedge of $B$ in $(M',g',\ket{\psi'})$. Conversely, suppose $\gamma$ is not outer minimal. We can construct a spacetime $(M',g',\ket{\psi'})$ by gluing $W_{O}[\gamma]$ to its CPT conjugate across $\gamma$~\cite{EngWal17b, EngWal18}. Let $\gamma'$ be the minimal S$_{\rm gen}$ surface in $W_{O}[\gamma]$. Then, $(M',g',\ket{\psi'})$ contains the CPT conjugate of $\gamma'$, called $\gamma''$ such that $\gamma \subset \mathcal{W}_{O}[\gamma'']$. By a slight change of $(M',g',\ket{\psi'})$ in $\mathcal{W}_{O}[\gamma]'$ we can arrange for a  $S_{\rm gen} (\gamma'')<S_{\rm gen} (\gamma')$, while maintaining that $\gamma \subset \mathcal{W}_{O}[\gamma'']$. It then cannot be the case that the entanglement wedge of $B$ contains $\gamma$.
\end{proof}

Theorem \ref{thm:outermindefviaEW} suggests that all operators in $\mathcal{W}_{O}[\gamma]$ can be reconstructed from $B$ by someone \emph{knowing only about $\mathcal{W}_{O}[\gamma]$} if and only if $\gamma$ is outer minimal. Indeed, the `only if' direction here follows directly from combining Theorem \ref{thm:outermindefviaEW} with standard facts about entanglement wedge reconstruction~\cite{DonHar16}: we cannot reconstruct operators knowing only about $\mathcal{W}_{O}[\gamma]$ if those operators are not even guaranteed to be in the entanglement wedge.
On the other hand, to show the `only if' direction, and prove that such reconstructions are possible whenever $\gamma$ is outerminimal, we will need to take advantage of the following boundary density matrix associated to $\gamma$.

\begin{definition}Given an outer-minimal QES $\gamma$ homologous to boundary region $B$, define the boundary density matrix $\rho^{\gamma}_B$ associated to it via the CPT-conjugation protocol of~\cite{EngWal17b}: Let $\tilde{\mathcal{W}}_O [\gamma]$ be the CPT-conjugate of $\mathcal{W}_O [\gamma]$. A Cauchy slice of each wedge can be glued at $\gamma$ to construct a complete Cauchy slice whose domain of dependence is the full asymptotically AdS spacetime with boundary which is two copies of $B$. The corresponding CFT state restricted to $B$, then defines $\rho^\gamma_B$.
\end{definition}

Crucially, the density matrix $\rho^{\gamma}_B$ depends only on the bulk state restricted to $\mathcal{W}_O (\gamma)$, and not on the bulk state, or even the spacetime geometry outside $\mathcal{W}_O (\gamma)$. In the special case where $\gamma$ is the minimal QES, $\rho^{\gamma}_B$ is simply the boundary density matrix in the existing state. Then standard boundary reconstruction formulae~\cite{FauLew17, CotHay17, ChePen19, PenShe19} can be used to construct boundary duals of local bulk operators in the entanglement wedge of $B$, i.e. $\mathcal{W}_O[\gamma]$. It is an easy exercise to show that when $\gamma$ is a more general outer-minimal QES, if we replace the boundary density matrix with $\rho^\gamma_B$ in e.g. modular flow reconstructions ~\cite{FauLew17} or Petz map reconstructions \cite{CotHay17, ChePen19, PenShe19}, we can construct boundary duals to bulk operators localized to $\mathcal{W}_O[\gamma]$ that depend only on $\mathcal{W}_O[\gamma]$.

\section{Construction of Timelike-Separated Extremal Surfaces} \label{sec:egs}
In this section, we will explicitly construct spacetime solutions with timelike-separated bulges and throats in JT gravity with a minimally coupled massive scalar field. All of our examples feature a bounce between the timelike-separated bulges or throats. For simplicity, we work in the classical limit where, in particular, $S_{\rm gen}=A/4G$. 

\subsection{Time-like separated bulges}\label{sec:3-1}

In this subsection we construct time-like separated bulges. Though our explicit solutions are in JT gravity (or equivalently, in a higher dimensional near-extremal Reissner-Nordstrom black hole), the idea behind the construction is general and can in principle be implemented in other (e.g. higher dimensional) gravitating systems. We construct a single time-symmetric Cauchy slice $\Sigma$, with two asymptotically AdS boundaries, containing three extremal surfaces $\gamma_1$, $\gamma_2$, and $\gamma_3$ (ordered from left to right) homologous to one of the boundaries. Each boundary has a corresponding outermost extremal surface which (by a simple restricted maximin argument) always lies on the time-symmetric slice. Consequently, $\gamma_1$ and $\gamma_3$ are the outermost extremal surfaces of the left and right boundaries respectively (and are therefore throats). Let $H_{13}$ be the portion of $\Sigma$ between $\gamma_1$ and $\gamma_3$. Then, maximinimax within $D(H_{13})$ guarantees the existence of a bulge~\footnote{This assumes that maximinimax in $D(H_{13})$ does not run off to asymptotic infinity in the time directions. This could happen for instance in a spacetime where one glues a portion of de Sitter-Schwarzschild spacetime behind the horizons of two AdS-Schwarzschild spacetimes~\cite{FisMar14}. However, we expect (an expectation which is realized in our explicit construction below) a spacetime where D($H_{13}$) contains no asymptotic region.}. Now, suppose $\gamma_2$ is a bounce. Then, the $D(H_{13})$ maximinimax along with the time-symmetry of $\Sigma$ implies the existence of at least two bulge extremal surfaces which are timelike-separated from each other.

In the remainder of this subsection, we demonstrate an explicit example of such a $\Sigma$ in JT gravity minimally coupled to a massive scalar field. 

We work with the AdS$_2$ metric in the following coordinates (hereafter, we use units where the radius of curvature is unity)
\begin{align}
ds^2=-(1+x^2) dt^2 + (1+x^2)^{-1} dx^2.
\end{align}
The equations of motion for JT gravity coupled to a massive scalar field are given by
\begin{align}\label{eq-JTEOM}
-\nabla_\mu \nabla_\nu \phi + g_{\mu\nu} \nabla^2 \phi - g_{\mu\nu} \phi = \kappa~T_{\mu\nu},
\end{align}
where $\phi$ is the dilaton, $\kappa = 8\pi G$, and $T_{\mu\nu}$ is the matter stress energy tensor given by
\begin{align}
    T_{\mu\nu}  = \nabla_\mu \psi \nabla_\nu \psi - \frac{1}{2} g_{\mu\nu}(m^2 \psi^2 + \nabla_\sigma \psi \nabla^\sigma \psi),
\end{align}
and where the scalar field's equation of motion is
\begin{align}
\nabla^2 \psi + m^2 \psi = 0.
\end{align}
For simplicity, we will construct a solution which is reflection-symmetric across $\gamma_2$ (where we set $x=0$). Furthermore, we pick $\psi=0$ in the exteriors of the outermost extremal surfaces, i.e. to the left of $\gamma_1$ and to the right of $\gamma_3$. The task is then to construct a solution of $H_{13}$ which satisfies the constraint equation\footnote{Both the $tt$ and $tx$ components of Eq. \eqref{eq-JTEOM} are constraints, but the $tx$ component is automatically satisfied for time-symmetric initial data.}:
\begin{align}\label{eq-tt}
\phi - x \partial_x \phi - (1+x^2) \partial_x^2 \phi = \frac{\kappa}{2} \left(m^2 \psi^2 + (1+x^2) (\partial_x \psi)^2\right),
\end{align}
and such that $\phi_2$ is a bounce and $\phi\rvert_{\gamma_1} >0$ and $\psi\rvert_{\gamma_1}=0$ (by reflection symmetry, $\phi\rvert_{\gamma_3} >0$ and $\psi\rvert_{\gamma_3}=0$). Then, by gluing $\gamma_1$ and $\gamma_3$ to exteriors of a vacuum JT solution we can complete the initial data. The vacuum solution is
\begin{align}\label{eq-vac}
\phi = \phi_h \sqrt{1+x^2} \cos t,
\end{align}
where we set $\phi_h = \phi\rvert_{\gamma_1}$ and glue the $x>0$ part of the this solution to the right of $\gamma_3$ and the $x<0$ piece to the left of $\gamma_1$.
\begin{figure}[ht]
\begin{center}
\includegraphics[width=0.5\textwidth]{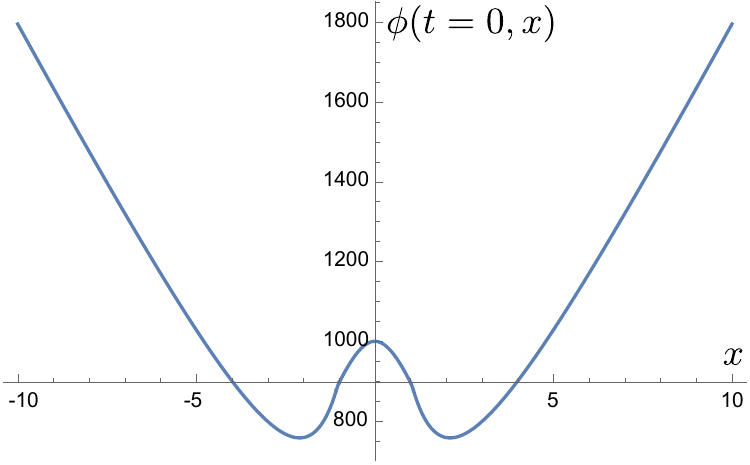}
\caption{The profile of $\phi$ on the slice $\Sigma$ ($t=0$ axis) shows three extremal surfaces, at $x=0$ and $x \approx \pm 2$.}\label{fig-Phi1}
\end{center}
\end{figure}

The only remaining challenge is to build a $H_{13}$ with a bounce. Here, a bounce means $\partial_t^2 \phi\rvert_{\gamma_2} >0$. From the $xx$ component of Eq. \eqref{eq-JTEOM}, we have
\begin{align}
\partial_t^2 \phi(t,x=0)\rvert_{t=0} = \frac{\kappa}{2} m^2 \psi(t=0,x=0)^2 - \phi(t=0,x=0).
\end{align}
We choose a profile of $\psi$ which is constant in an interval around $x=0$ and linearly decreases to zero at $\gamma_1$ and $\gamma_3$:
\begin{align}
\psi =     \begin{cases}
       \psi_0, |x| \leq x_0,\\
        \psi_0 -a |x-x_0|, x_0 <|x| \leq x_0+\frac{\psi_0}{a},\\
        0, x>x_1.
    \end{cases}
\end{align}
We can pick $\phi(t=0,x=0)$ freely, and let the constraint equation \eqref{eq-tt} fix $\phi(t=0,x)$. For the following choice of parameters, the solution will have the desired throats with an exterior given by the the vacuum solution \eqref{eq-vac}.
\begin{align}
\begin{cases}
\kappa=0.01,\\
m=100,\\
\psi_0=50,\\
\phi(t=0,x=0)=100,\\
a=600.
    \end{cases}
\end{align}
Figure~\ref{fig-Phi1} shows the resulting profiles of $\phi$ on $\Sigma$.

We can explicitly check the existence of the bulges by evolving the data in $D(H_{13})$. We do so by numerically solving for the solution off of the $t=0$ slice. Since the equations are linear, we can also solve analytically the solution for $\psi$ using the appropriate Green's functions, and use that to solve for the value of $\phi$ off of the $t=0$ slice. We do so in Appendix~\ref{app-1} and find perfect agreement with the results here.

\begin{figure}[ht]
\begin{center}
\includegraphics[width=0.5\textwidth]{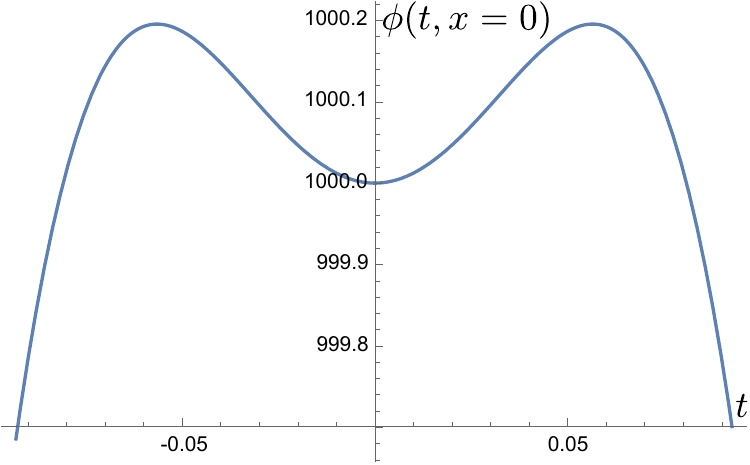}
\caption{By evolving the dilaton off of the time-symmetric initial data, we explicitly find bulges on the $x=0$ axis at times $t \approx \pm 0.06$.}\label{fig-Phi11}
\end{center}
\end{figure}

The result is shown in Fig.~\ref{fig-Phi11} where at $x=0$ and $t\approx 0.06$ we have bulge extremal surfaces. We can confirm that the surface is a bulge because $\partial_t^2 \phi <0$ and $\partial_x^2 \phi<0$ at the surface. We schematically depict the entire solution in Fig.~\ref{fig-Phi11b}.

\begin{figure}[ht]
\begin{center}
\includegraphics[width=0.5\textwidth]{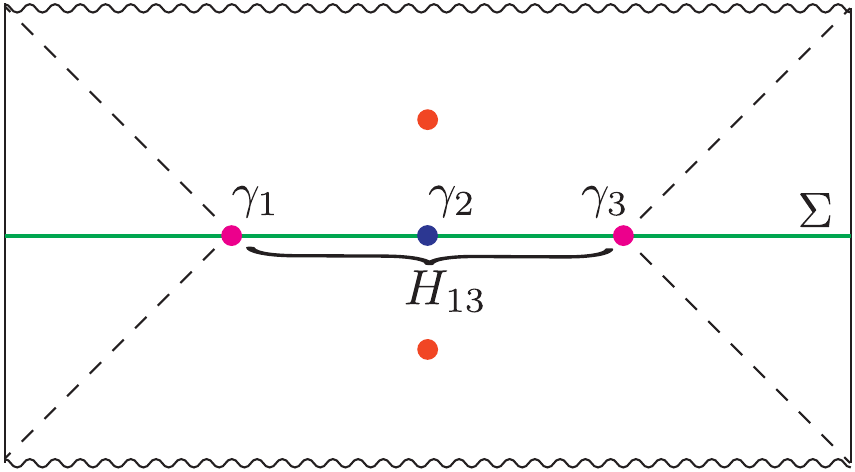}
\caption{A Penrose diagram of our JT gravity plus massive scalar solution with timelike separated bulges. The precise location of the singularity (which in JT gravity plus matter is just a region where the dilaton diverges to negative infinity) is shown schematically and not derived explicitly. A Cauchy slice $\Sigma$ with time-symmetric data is prepared containing a bounce ($\gamma_2$), two throats ($\gamma_1$ and $\gamma_3$), and two asymptotic AdS regions. By maximinimax and time-symmetry, $D(H_{13})$ contains at least two bulges (shown in orange) that are time-like separated from each other.}\label{fig-Phi11b}
\end{center}
\end{figure}

\subsection{Timelike-Separated Throats}\label{sec:3-2}

The strategy here is again to provide complete initial data on a time-symmetric and reflection symmetric (across $x=0$) slice $\Sigma$ with two asymptotically AdS$_2$ boundary conditions of the fields. The implementation in this case is more brute-force. We will construct a time interval data, on $\{x=0, -t_0<t<t_0\}$, such that at the boundaries of the interval we have throat extremal surfaces. By time symmetry, this means at $t=x=0$ we will have another extremal surface which is a bounce. We will then take advantage of the simplicity of the two-dimensional setup to evolve this data sideways to construct a domain of dependence with a time-symmetric spatial slice $\Sigma'$ (see Fig.~\ref{fig:sliceschematic}). The rest of the task is then to complete $\Sigma'$ into full initial data on both sides with AdS$_2$ asymptotics. We do not  generate the complete Cauchy evolution of this initial data; nevertheless we may still immediately make some conclusions about the location of other extremal surfaces. Specifically, as illustrated in Fig.~\ref{fig:sliceschematic}, there must exist four bulges away from the $x=0$ slice as a consequence of maximinimax between one of the timelike-separated throats and one of the outer throats.

Explicitly, we choose the following profile for the time interval:
\begin{align}
\psi(t) = \psi_0 \left(2+t -\frac{2}{a} \log \left(\frac{1+e^{a t}}{2}\right)\right)
\end{align}
with a=0.5 and further
\begin{align}
\begin{cases}
\kappa=0.01,\\
m=10,\\
\psi_0=50.
    \end{cases}
\end{align}
We then demand that $\phi$ solves the $xx$ component of its equation of motion on the time interval:
\begin{align}
- \phi(t) - \partial_t^2 \phi(t) = \frac{1}{2}\kappa \left(-m^2  \psi(t)^2 + (\partial_t\psi(t))^2\right)
\end{align}
such that $\partial_t\phi(t,x=0)\rvert_{t=t_0}=\partial_t\phi(t,x=0)\rvert_{t=0}=0$ for $t_0 = 0.15$. The resulting $\phi(t,x=0)$ profile is depicted in Fig.~\ref{fig-timelikethroat1}. We can directly compute $\partial_x^2 \phi$ at $t=t_0$ using the dilaton equations of motion and confirm that the extremal surfaces at $x=0$, $t=\pm t_0$ are indeed throats, whereas the positive $\partial_t^2 \phi \rvert_{t=x=0}$ implies that the $t=x=0$ extremal surface is a bounce. We then numerically evolve this data ``sideways''to derive the spatial initial data on $\Sigma'$ from which evolves to this solution. The $\phi$ and $\psi$ profiles are plotted in Fig.~\ref{fig-timelikethroat2}.

Next, to complete $\Sigma'$, to a full initial data set with AdS$_2$ asymptotics, we extend the $\psi$ data with a linearly decreasing profile from the boundary of $\Sigma'$ until we reach zero past which we fully turn off $\psi$. The constrains Eq. \eqref{eq-tt} then fixes the corresponding $\phi$. Fig.~\ref{fig-timelikethroat2} depicts the corresponding profiles on the $t=0$ slice. The linear growth of $\phi$ is the asympotic region guarantees AdS$_2$ asymptotics. 

\begin{figure}[ht]
\begin{center}
\includegraphics[width=0.5\textwidth]{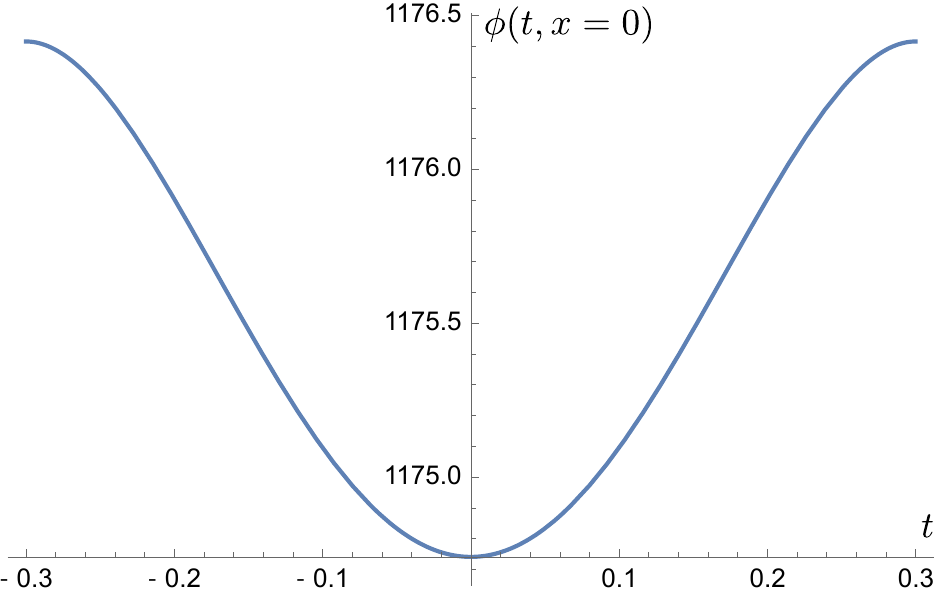}
\caption{The profile of $\phi$ on the $x=0$ axis with $t \in [-t_0,t_0]$ shows three extremal surfaces, at $t=-t_0,t=0$, and $t=t_0$.}\label{fig-timelikethroat1}
\end{center}
\end{figure}

\begin{figure}[ht]
\begin{center}
\includegraphics[width=0.5\textwidth]{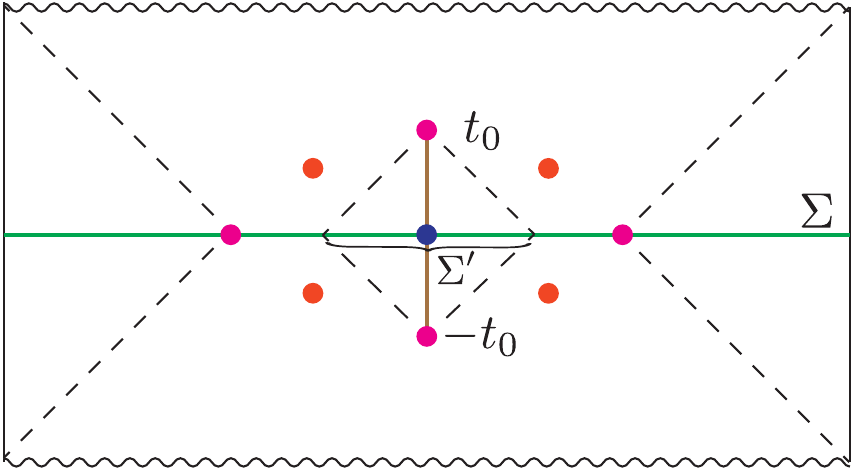}
\caption{A schematic Penrose diagram of our JT gravity plus massive scalar solution with timelike separated throats. First, we prepare data on the timelike interval from $-t_0$ to $t_0$ (shown in brown), with two throats on each end and a bounce (shown in blue) in the middle. The data is then evolved side ways to obtain data on $\Sigma'$, which we then complete to a full Cauchy slice $\Sigma$ with two asymptotic AdS boundary. Then, it is manifest that the spacetime arising from $\Sigma$ contains two time-like separated throats. In addition, there are four bulges (shown in orange) as a consequence of a maximinimax procedure between the throats at $t=\pm t_0$ and ones at $t=0$. }\label{fig:sliceschematic}
\end{center}
\end{figure}

\begin{figure}[ht]
\begin{center}
\includegraphics[width=0.6\textwidth]{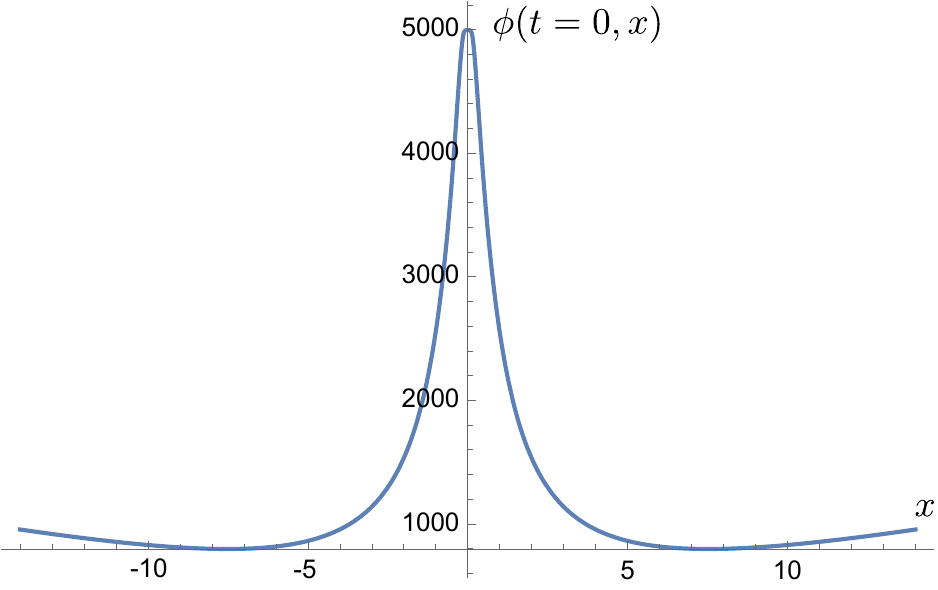}
\caption{The profile of $\phi$ at $t=0$ shows three classical extremal surface at $x=0$ and $x \approx \pm 7$ connected to an asymptotic region. The asymptotic region is characterized by a linearly growing $\phi$.}\label{fig-timelikethroat2}
\end{center}
\end{figure}

\subsection{Remarks}

Given the above examples, timelike-separated QESs of different types are allowed, and we find it highly plausible timelike-separation between any kind of QESs can be realized. An intriguing feature of the above solutions is the presence of a bounce in the region $I^+(\gamma_1) \cap I^-(\gamma_2)$ where $\gamma_1$ and $\gamma_2$ are the early and late throat/bulge surfaces respectively. In fact, in the classical limit with spherical symmetry it is easy to show that $I^+(\gamma_1) \cap I^-(\gamma_2)$, if non-empty, always contains a bounce if $\gamma_1$ and $\gamma_2$ are bulges or throats (not necessarily the same type). Since the problem is effectively 2d in this case, we can simply reverse time and space in the region between $\gamma_1$ and $\gamma_2$ and apply a standard restricted maximin procedure. The outcome will be a bounce extremal surface (in the original spacetime). It is tempting to speculate whether more generally, time-like separated bulges and/or throats imply the existence of a bounce QES, but we leave this to future work.

It would be interesting to find the boundary states dual to the bulk geometries presented here. Since our constructions are Lorentzian, they do not explicitly reveal asymptotic boundary conditions (Euclidean or complex) which would result in these bulk solutions. When such boundary conditions are known, they provide an explicit realization of the boundary state dual to the bulk solution. Various such solutions are discussed in the literature (see~\cite{Goel:2018ubv} for example), but to our knowledge these solutions only contain spacelike separated bulges and throats.

\section{A Refined Python's Lunch Proposal}\label{sec:proposal}

Having established the existence of timelike-separated QESs, we need to understand their role in controlling information flow in the bulk-to-boundary map. The QES prescription~\cite{EngWal14} and entanglement wedge reconstruction~\cite{DonHar16} as usually stated can accommodate timelike-separated QESs without any issue. However, the same cannot be said for the Python's lunch conjecture.

In its original formulation~\cite{BroGha19}, the Python's lunch conjecture assumed an entanglement wedge that contained only two throats. These two throats are necessarily a) the minimal QES $\gamma_{\rm min}$ (for consistency with later notation we shall also call this the dessert surface $\gamma_{\rm dessert}$) and b) the outermost QES (we shall call this the appetizer surface $\gamma_{\rm aptz}$). There always exists at least one additional QES in the entanglement wedge that is not a throat, namely  $\mathrm{maximinimax}(\gamma_{\rm min}, \gamma_{\rm aptz})$. We shall call this surface $\gamma_\mathrm{main}$ to complete the culinary theme; in previous work it was normally called the bulge surface. The wedge $\mathcal{W}_O[\gamma_{\rm dessert}] \cap \mathcal{W}'_O[\gamma_{\rm aptz}]$ is the eponymous Python's lunch.

\paragraph{The Python's Lunch conjecture for spacetimes with two throats \cite{BroGha19}:} the complexity $C$ of reconstructing bulk operators in the lunch $\mathcal{W}_O[\gamma_{\rm dessert}] \cap \mathcal{W}'_O[\gamma_{\rm aptz}]$ using only the boundary state on $B$ is\footnote{Here, and in all other versions of the Python's lunch conjecture, we are assuming that the spacetime volume/action is $O(1)$. Wormholes with parametrically large volume, for example, have a complexity that grows linearly with their volume, even in the absence of a Python's lunch~\cite{Stanford:2014jda, Susskind:2018pmk}. However unless the wormhole is exponentially long, such effects are subleading compared to the contribution to the reconstruction complexity from a Python's lunch.}
\begin{align} 
    \log C = \frac{1}{2} (S_\text{gen}(\gamma_{\rm main}) - S_\text{gen}(\gamma_{\rm aptz})) + O(1).
\end{align}
\vspace{0.5cm}

This conjecture was generalized in~\cite{EngPen21b} to situations with a set of nested, spacelike-separated bulges and throats, with a bulge sandwiched between each pair of throats and vice versa. To state this version of the conjecture, we first recall from Section \ref{sec:outerminimal} that operators outside of an outer-minimal QES $\gamma_{\rm dessert}$ can be reconstructed by someone knowing only about $W_{O}[\gamma_{\rm dessert}]$. This suggests that only QESs contained within $W_{O}[\gamma_{\rm dessert}]$ should be relevant to the reconstruction complexity of operators in $W_{O}[\gamma_{\rm dessert}]$. Since all the QESs are assumed to be nested, we can label the extremal surfaces within $W_{O}[\gamma_{\rm dessert}: = \gamma_0]$ by $\gamma_i$ for $i> 0$ such that $$W_{O}[\gamma_{i}] \subseteq W_{O}[\gamma_{j}]$$ whenever $i > j$. For this version of the conjecture, we also assume that there exists a single Cauchy slice $\Sigma$, containing all the extremal surfaces $\gamma_i$, such that $\Sigma$ is simultaneously a maximal Cauchy slice for all relevant maximin or maximinimax constructions. The slice $\Sigma$ can be thought of as the gravitational analogue of a tensor network. (As with all other assumptions, the assumption of the existence of the slice $\Sigma$ will be removed in the fully general prescription we propose in Section \ref{sec:newconj}.)

\paragraph{The Python's Lunch conjecture for multiple spacelike-separated extremal surfaces \cite{EngPen21b}:} the restricted complexity $C$ of decoding bulk operators that lie outside the outer-minimal QES $\gamma_{\rm dessert}$ (but not outside any outer-minimal QES $\gamma_j$ for $j > 0$) satisfies
\begin{align} \label{eq:multilunch}
    \log C = \max_{j > i:} \left [\frac{1}{2} (S_\text{gen}(\gamma_i) - S_\text{gen}(\gamma_j))\right] + O(1)= \frac{1}{2} (S_\text{gen}(\gamma_{\rm main}) - S_\text{gen}(\gamma_{\rm aptz})) + O(1),
\end{align}
where $\gamma_{\rm main}$ and $\gamma_{\rm aptz}$ are defined as the surfaces in which the maximum is attained.
\vspace{0.5cm}

Since a bulge always has larger generalized entropy than the neighboring throats, and a throat smaller generalized entropy than the neighbouring bulges, the surface $\gamma_{\rm main}$ is always a bulge and the surface $\gamma_{\rm aptz}$ is always a throat. In fact, if we know one of $\gamma_{\rm aptz}$ and $\gamma_{\rm main}$, the other can be found by a maximinimax or maximin prescription respectively:

\begin{lem}
It is always the case that $\gamma_{\rm aptz} = \mathrm{maximin}(\gamma_\mathrm{main}, B)$.
\end{lem}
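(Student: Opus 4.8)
The plan is to show that $\gamma_{\rm aptz}$ arises from the maximin construction applied to the pair $(\gamma_{\rm main}, B)$, exploiting the structure of nested spacelike-separated QESs together with the stability/extremality properties established earlier. First I would recall the setup: $\gamma_{\rm main}$ is a bulge, it lies inside $\mathcal{W}_O[\gamma_{\rm dessert}]$, and $\gamma_{\rm aptz}$ is one of the throats $\gamma_j$ with $j$ larger than the index $i$ of $\gamma_{\rm main}$ — in particular $\gamma_{\rm aptz} \subseteq \mathcal{W}_O[\gamma_{\rm main}]$, and by the nesting hypothesis $\mathcal{W}_O[\gamma_{\rm main}]$ is precisely the wedge in which a $\mathrm{maximin}(\gamma_{\rm main}, B)$ search takes place. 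The key input is that $\gamma_{\rm main}$, being a QES, has nonpositive null quantum expansions pointing inward (towards $B$); this is exactly the hypothesis needed for $\mathrm{maximin}(\gamma_{\rm main}, B)$ to be well-defined and to return a QES in the interior of $\mathcal{W}_O[\gamma_{\rm main}]$.

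The core steps are then: (1) argue that $\mathrm{maximin}(\gamma_{\rm main}, B)$ returns some QES $\gamma_*$ in $\mathcal{W}_O[\gamma_{\rm main}]$, which by Theorem~\ref{thm33} is a throat; (2) observe that, among the nested QESs $\gamma_j$ with $j > i$ inside $\mathcal{W}_O[\gamma_{\rm main}]$, exactly one is distinguished as the outermost throat of that sub-wedge, and that the maximin procedure — which maximizes over Cauchy slices the minimal-$S_{\rm gen}$ surface — must land on this outermost throat rather than any deeper one, since a deeper throat could always be ``cut around'' on a suitable slice to expose the outermost one with smaller $S_{\rm gen}$ (this is the standard restricted-maximin-finds-the-outermost-accessible-throat argument); (3) identify this outermost throat in $\mathcal{W}_O[\gamma_{\rm main}]$ with $\gamma_{\rm aptz}$. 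For step (3), I would use the defining property of $\gamma_{\rm aptz}$ from \eqref{eq:multilunch}: it is the throat minimizing $S_{\rm gen}$ among all throats strictly outside (in index, hence in wedge) $\gamma_{\rm main}$ that realize the max, and combine this with the fact that in the nested chain the generalized entropies alternate (throat $<$ bulge $>$ throat $<\cdots$), so the first throat encountered moving outward from $\gamma_{\rm main}$ is automatically the relevant appetizer. Here it helps to invoke the alternating structure together with the observation, via Lemma~\ref{lem:ssa} or a direct restricted-maximin argument, that no QES with smaller $S_{\rm gen}$ sits between $\gamma_{\rm main}$ and $\gamma_{\rm aptz}$.

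The main obstacle I anticipate is step (2): cleanly showing that the maximin output is the \emph{outermost} throat in $\mathcal{W}_O[\gamma_{\rm main}]$ and not some other QES — the argument must handle the possibility of multiple throats at various depths and must use the stability condition built into the maximin definition to exclude spurious stationary surfaces, as well as the nesting hypothesis to guarantee a single Cauchy slice $\Sigma$ containing all of them simultaneously (which is where the standing assumption of a common maximal slice $\Sigma$ does real work). A secondary subtlety is verifying that $\mathrm{maximin}(\gamma_{\rm main}, B)$ does not ``run off'' — but since we are inside $\mathcal{W}_O[\gamma_{\rm main}] \subseteq \mathcal{W}_O[\gamma_{\rm dessert}]$, which by the assumptions on compact-in-time domains of dependence is well-behaved, this should follow from the same reasoning already used for maximinimax in Section~\ref{sec:egs}. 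I would therefore structure the proof so that the alternating-$S_{\rm gen}$ nesting structure does most of the bookkeeping, reducing the claim to the single clean assertion that restricted maximin between a bulge and the boundary finds the adjacent (outermost) throat.
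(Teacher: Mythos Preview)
Your proposal contains a conceptual error in both step (2) and step (3). The surface $\mathrm{maximin}(\gamma_{\rm main}, B)$ is not the \emph{outermost} throat in $\mathcal{W}_O[\gamma_{\rm main}]$, nor is it the throat \emph{adjacent} to $\gamma_{\rm main}$: by the standard characterization it is the QES in $\mathcal{W}_O[\gamma_{\rm main}]$ with the \emph{smallest} generalized entropy. Your claim that ``a deeper throat could always be cut around on a suitable slice to expose the outermost one with smaller $S_{\rm gen}$'' presupposes precisely what is false in general --- there is no reason the outermost (or adjacent) throat should have smaller $S_{\rm gen}$ than throats lying between it and $\gamma_{\rm main}$. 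Likewise, in step (3) the alternating inequalities throat $<$ bulge $>$ throat are only \emph{local}; they do not force the first throat outward of $\gamma_{\rm main}$ to be the global minimizer among all $\gamma_j$ with $j>i$. A simple counterexample is a chain $\gamma_{\rm main}, \gamma_{i+1}, \gamma_{i+2}, \gamma_{i+3}$ (throat, bulge, throat) with $S_{\rm gen}(\gamma_{i+3}) < S_{\rm gen}(\gamma_{i+1})$: here $\gamma_{\rm aptz}=\gamma_{i+3}$, not $\gamma_{i+1}$, and maximin likewise returns $\gamma_{i+3}$.

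The paper's proof bypasses both issues in two sentences. Maximizing over $j>i$ in \eqref{eq:multilunch} with $i$ fixed at the index of $\gamma_{\rm main}$ is the same as minimizing $S_{\rm gen}(\gamma_j)$ over $j>i$; hence $\gamma_{\rm aptz}$ is \emph{by definition} the smallest-$S_{\rm gen}$ QES in the exterior of $\gamma_{\rm main}$, and that is exactly what restricted maximin returns. You actually state this characterization of $\gamma_{\rm aptz}$ correctly at the start of step (3) before immediately undercutting it with the adjacent-throat claim --- had you matched it directly to the correct ``smallest-$S_{\rm gen}$'' characterization of maximin, you would have had the paper's one-line proof without any of the bookkeeping about outermost or adjacent surfaces.
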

\begin{proof}
By standard arguments, the maximin surface is the QES in the exterior of $\gamma_{\rm main}$ with smallest generalized entropy. But this is exactly the surface $\gamma_{\rm aptz}$ found by maximizing over $j$ in \eqref{eq:multilunch}.
\end{proof}

\begin{lem}
It is always the case that $\gamma_{\rm main} = \mathrm{maximinimax}(\gamma_{\rm dessert},\gamma_{\rm aptz})$.
\end{lem}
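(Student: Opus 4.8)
The plan is to show that each of $\gamma_{\rm main}$ and $\mathrm{maximinimax}(\gamma_{\rm dessert},\gamma_{\rm aptz})$ is the bulge of largest generalized entropy contained in the wedge $\mathcal{W}:=\mathcal{W}_O[\gamma_{\rm dessert}]\cap\mathcal{W}'_O[\gamma_{\rm aptz}]$ between the two throats, and hence that they agree. First I would check that the construction is well posed: $\gamma_{\rm dessert}$ is a throat because it is outer-minimal, $\gamma_{\rm aptz}$ is a throat by the remark preceding the lemma (the surface attaining the maximum over $j$ in \eqref{eq:multilunch} is always a throat), these two QESs are spacelike-separated by the standing assumption of this version of the conjecture, and the region between them is compact so $\mathcal{W}$ is compact in time; moreover every QES homologous to $B$ inside $\mathcal{W}$ is one of the $\gamma_i$, since the $\gamma_i$ exhaust the QESs in $\mathcal{W}_O[\gamma_{\rm dessert}]$.

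For the first identification, fix $j=j^\ast$ in \eqref{eq:multilunch} so that $\gamma_{j^\ast}=\gamma_{\rm aptz}$; the maximization defining $\gamma_{\rm main}=\gamma_{i^\ast}$ then reduces to maximizing $S_{\rm gen}(\gamma_i)$ over bulges $\gamma_i$ with $\mathcal{W}_O[\gamma_{\rm aptz}]\subseteq\mathcal{W}_O[\gamma_i]\subseteq\mathcal{W}_O[\gamma_{\rm dessert}]$, equivalently over bulges contained in $\mathcal{W}$. If some bulge in $\mathcal{W}$ had strictly larger generalized entropy than $\gamma_{\rm main}$, pairing it with $\gamma_{\rm aptz}$ would strictly beat the optimal pair $(i^\ast,j^\ast)$, a contradiction; hence $\gamma_{\rm main}$ is precisely the largest-$S_{\rm gen}$ bulge in $\mathcal{W}$.

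For the second identification I would work on the distinguished slice $\Sigma$, which by hypothesis is maximal for every relevant maximin/maximinimax construction and contains all the $\gamma_i$. By the genericity and nondegeneracy assumptions, the $\gamma_i$ in $\Sigma\cap\mathcal{W}$ are the relevant isolated critical points of $S_{\rm gen}$ along sweep-outs, the throats acting as wells and the bulges as barriers, with $S_{\rm gen}$ monotone along a monotone sweep-out between consecutive ones. An explicit monotone sweep-out of $\Sigma\cap\mathcal{W}$ passing successively through every $\gamma_i$ therefore has leafwise maximum of $S_{\rm gen}$ attained at the bulges, namely $S_{\rm gen}(\gamma_{\rm main})$, giving the upper bound $\min_{\{f_\Sigma\}}\max_\eta S_{\rm gen}\leq S_{\rm gen}(\gamma_{\rm main})$. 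For the matching lower bound, a standard mountain-pass argument (as in~\cite{BroGha19, EngPen21b}) shows that every sweep-out from $\gamma_{\rm dessert}$ to $\gamma_{\rm aptz}$ has a leaf with $S_{\rm gen}\geq S_{\rm gen}(\gamma_{\rm main})$, since $\gamma_{\rm main}$ is the highest barrier on $\Sigma$ separating the two throats. Hence $\min_{\{f_\Sigma\}}\max_\eta S_{\rm gen}=S_{\rm gen}(\gamma_{\rm main})$, attained by a sweep-out whose maximizing leaf is $\gamma_{\rm main}$ itself; maximality of $\Sigma$ promotes this to $\max_{\{C\}}\min_{\{f_C\}}\max_\eta S_{\rm gen}=S_{\rm gen}(\gamma_{\rm main})$, and genericity makes the maximizing bulge unique, so $\mathrm{maximinimax}(\gamma_{\rm dessert},\gamma_{\rm aptz})=\gamma_{\rm main}$. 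The stability clause of Definition~\ref{def:maximinimax} is automatic for the isolated bulge QES $\gamma_{\rm main}$, or alternatively follows from the assumed existence of a stable maximinimax surface together with the uniqueness just established.

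The genuine difficulty lies in the lower-bound step together with the outer $\max$ over Cauchy slices: rigorously excluding a sweep-out that ``tunnels past'' $\gamma_{\rm main}$ below $S_{\rm gen}(\gamma_{\rm main})$, and excluding that some other Cauchy slice of $\mathcal{W}$ yields a larger minimax value. This is precisely the machinery that~\cite{BroGha19, EngPen21b} develop for the maximinimax prescription; the only thing to verify is that their arguments carry over with the sub-wedge $\mathcal{W}$ in place of a full entanglement wedge and in the presence of the intervening throats. Everything else — matching the definition of $\gamma_{\rm main}$ and verifying genericity and stability — is routine.
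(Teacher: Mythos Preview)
Your reduction of the lemma to the claim ``both $\gamma_{\rm main}$ and $\mathrm{maximinimax}(\gamma_{\rm dessert},\gamma_{\rm aptz})$ equal the largest-$S_{\rm gen}$ bulge in $\mathcal{W}$'' is correct, and your first identification is fine. The upper bound on the maximinimax value is also fine, although your sweep-out construction with ``$S_{\rm gen}$ monotone between consecutive $\gamma_i$'' is unjustified and unnecessary: once one knows the maximinimax surface is a bulge QES in $\mathcal{W}$, it is one of the $\gamma_i$ and hence has $S_{\rm gen}\leq S_{\rm gen}(\gamma_{\rm main})$ automatically.

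The genuine gap is precisely where you flag it: the lower bound $\min_{\{f_\Sigma\}}\max_\eta S_{\rm gen}\geq S_{\rm gen}(\gamma_{\rm main})$. Calling this a ``standard mountain-pass argument'' and citing \cite{BroGha19,EngPen21b} is not enough. In one dimension every path between two wells must climb the highest barrier, but here a leaf of a sweep-out is a codimension-one surface in $\Sigma$ that need never coincide with $\gamma_{\rm main}$; nothing you have said rules out a sweep-out that ``slips around'' $\gamma_{\rm main}$ with every leaf satisfying $S_{\rm gen}<S_{\rm gen}(\gamma_{\rm main})$. The references do not contain a black-box lemma of this form that one can simply invoke here.

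The paper supplies exactly the missing idea, and it is not a mountain-pass argument: it is induction on the number of QESs between the two throats, together with strong subadditivity. Assuming the maximinimax surface were some smaller bulge $\gamma_d$ (with, say, the largest bulge $\gamma_c$ lying to its interior), one takes the minmax foliation $\gamma(t)$ on $\Sigma$ from $\gamma_a=\gamma_{\rm dessert}$ up to $\gamma_d$, picks the smallest throat $\gamma_e$ strictly between $\gamma_c$ and $\gamma_d$, and forms $\tilde\gamma(t)$, $\dbtilde\gamma(t)$ by taking unions and intersections of homology regions with $\gamma_e$. Strong subadditivity gives $S_{\rm gen}(\gamma(t))+S_{\rm gen}(\gamma_e)\geq S_{\rm gen}(\tilde\gamma(t))+S_{\rm gen}(\dbtilde\gamma(t))$, and minimality of $\gamma_e$ on $\Sigma$ (this is where the standing assumption on $\Sigma$ enters) forces $S_{\rm gen}(\dbtilde\gamma(t))\geq S_{\rm gen}(\gamma_e)$, hence $S_{\rm gen}(\tilde\gamma(t))\leq S_{\rm gen}(\gamma(t))\leq S_{\rm gen}(\gamma_d)$. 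But $\tilde\gamma(t)$ is a foliation from $\gamma_a$ to $\gamma_e$, so this contradicts the inductive hypothesis that $\gamma_c=\mathrm{maximinimax}(\gamma_a,\gamma_e)$. This SSA-cutting step is the actual content of the lemma; without it your argument is incomplete.
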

\begin{proof}
Since the maximinimax surface is always a bulge QES, it must trivially be $\gamma_k$ for some odd $k$ with $0 < k < j$. To prove that it is the surface $\gamma_i$ in that range with largest generalized entropy, we show by induction that (subject to the asssumptions above) the maximinimax surface between two throats $\gamma_a$ and $\gamma_b$  with $a < b$ is the intermediate QES $\gamma_c$ with largest generalized entropy. The base case where $b = a+2$ is trivial. Now assume that the result holds whenever $b - a < n$. For $b - a = n$, let $\gamma_d$ be the maximinimax surface and let $S_\mathrm{gen}(\gamma_d) < S_\mathrm{gen}(\gamma_c)$. By the symmetry of the problem, it is sufficient to check the case where $c < d$. 

Let $\gamma_e$ be the smallest generalized entropy QES with $c < e < d$. By construction, there exists a foliation $\gamma(t)$ of the Cauchy slice $\Sigma$ defined above from $\gamma_a$ to $\gamma_d$ with $S_\mathrm{gen}(\gamma(t)) \leq S_\mathrm{gen}(\gamma_d)$ everywhere. We define a new foliation $\tilde\gamma(t)$ from $\gamma_a$ to $\gamma_e$ by $W_{O}[\tilde\gamma(t)] = (W_{O}[\tilde\gamma(t)]' \cap W_{0}[\gamma_e]')'$. Similarly, the foliation $\dbtilde{\gamma}(t)$ from $\gamma_e$ to $\gamma_d$ is defined by $W_{O}[\dbtilde{\gamma}(t)] = W_{O}[\dbtilde{\gamma}(t)] \cap W_{0}[\gamma_e]$. 

By strong subadditivity, we have
\begin{align}
    S_\mathrm{gen}[\gamma(t)] + S_\mathrm{gen}[\gamma_e] \geq S_\mathrm{gen}[\tilde\gamma(t)] + S_\mathrm{gen}[\dbtilde{\gamma}(t)].
\end{align}
But, by assumption, $\gamma_e$ is minimal in $\Sigma$ between $\gamma_e$ and $\gamma_d$ and so $S_\mathrm{gen}[\dbtilde{\gamma}(t)] \geq S_\mathrm{gen}[\gamma_e]$. It follows that for all $t$, we have
\begin{align}
    S_\mathrm{gen}[\tilde\gamma(t)] \leq S_\mathrm{gen}[\gamma(t)] \leq S_\mathrm{gen}[\gamma_d].
\end{align}
Since by assumption $\Sigma$ is a maximinimax slice between $\gamma_a$ and $\gamma_e$, the existence of the foliation $\tilde\gamma(t)$ means that $\gamma_c$ cannot be maximinimax between $\gamma_a$ and $\gamma_e$. But this gives us our desired contradiction since $e - a < n$.
\end{proof}

\subsection{A conjecture} \label{sec:newconj}
The original Python's lunch conjecture~\cite{BroGha19} and the generalization from \cite{EngPen21b} were both primarily justified by an appeal to tensor network toy models. Such models, however, are only analogous to a single Cauchy slice of a holographic spacetime and hence cannot contain any analogue of two timelike-separated extremal surfaces. If we want to construct a completely general version of the Python's lunch conjecture in light of our results in the previous section, the only remaining tools that we have to guide us are general covariance and consistency with the previous conjectures in their domains of validity. There does appear to be one particularly natural conjecture consistent with these requirements, which we describe below. As we shall see, it has the nice properties that $\log C$ continues to be determined by the difference between the generalized entropies of a bulge $\gamma_{\rm main}$ and a throat $\gamma_{\rm aptz}$, and that $\gamma_{\rm aptz}$ can be found from $\gamma_{\rm main}$, and $\gamma_{\rm main}$ from $\gamma_{\rm aptz}$, by maximin and maximinimax prescriptions respectively.

\paragraph{A general Python's Lunch conjecture:} the restricted complexity $C$ of decoding a bulk operator at a point $p$ is given by:
\begin{align} \label{eq:PLtotallgeneral}
\log C & =\min_{\gamma_0} \max_{\gamma_1} \left[S_\mathrm{gen}(\mathrm{maximinimax}(\gamma_0, \gamma_1)) - S_\mathrm{gen}(\gamma_1)\right] + {\cal O}(1)\\ &=\left[S_\mathrm{gen}(\gamma_{\rm main}) - S_\mathrm{gen}(\gamma_{\rm aptz})\right] + {\cal O}(1).
\end{align}
Here the minimization is over outer-minimal QESs $\gamma_0$ such that $p \in W_{O}[\gamma_{0}]$, and the minimizing surface is $\gamma_{\rm dessert}$. The maximization is over QESs $\gamma_\mathrm{1} \subseteq W_{O}[\gamma_{0}]$, and the maximizing surface is $\gamma_{\rm aptz}$. Finally the surface $\gamma_\mathrm{main} = \mathrm{maximinimax}(\gamma_{\rm dessert}, \gamma_\mathrm{aptz})$.
\vspace{0.5cm}

By construction, the surface $\gamma_\mathrm{main}$ is maximinimax between $\gamma_{\rm dessert}$ and $\gamma_\mathrm{aptz}$. What is not immediately obvious, but is vitally important for consistency with the intuition of tensor networks, is that $\gamma_\mathrm{aptz}$ is maximin in the exterior of $\gamma_{\rm main}$. In fact the proof that this is indeed always the case is somewhat involved, so we will break it down into a few lemmas.

\begin{lem}
There does not exist a QES $\gamma_\mathrm{ext} \subseteq W_O[\gamma_\mathrm{aptz}]$ such that $S_\mathrm{gen}(\gamma_\mathrm{ext}) \leq S_\mathrm{gen}(\gamma_\mathrm{aptz})$.
\end{lem}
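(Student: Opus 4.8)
The statement says that $\gamma_{\rm aptz}$ has minimal generalized entropy among all QESs in its own outer wedge $W_O[\gamma_{\rm aptz}]$ — i.e., $\gamma_{\rm aptz}$ is itself outer-minimal. The plan is to argue by contradiction: suppose some QES $\gamma_{\rm ext} \subseteq W_O[\gamma_{\rm aptz}]$ has $S_\mathrm{gen}(\gamma_{\rm ext}) \leq S_\mathrm{gen}(\gamma_{\rm aptz})$. I want to produce from this a competitor to $\gamma_{\rm aptz}$ in the max-min defining the conjecture, contradicting the maximality of $\gamma_{\rm aptz}$ in the $\max_{\gamma_1}$ step of \eqref{eq:PLtotallgeneral}.

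First I would recall the structure of the definition: $\gamma_{\rm aptz}$ is the QES $\gamma_1 \subseteq W_O[\gamma_{\rm dessert}]$ that maximizes $S_\mathrm{gen}(\mathrm{maximinimax}(\gamma_{\rm dessert}, \gamma_1)) - S_\mathrm{gen}(\gamma_1)$, where $\gamma_{\rm dessert}$ is the outer-minimal minimizer. Note that since $\gamma_{\rm ext} \subseteq W_O[\gamma_{\rm aptz}] \subseteq W_O[\gamma_{\rm dessert}]$, the surface $\gamma_{\rm ext}$ is itself an admissible candidate for $\gamma_1$. The key technical input I expect to need is a monotonicity/comparison statement for the maximinimax quantity: if $\gamma_{\rm ext} \subseteq W_O[\gamma_{\rm aptz}]$, then $S_\mathrm{gen}(\mathrm{maximinimax}(\gamma_{\rm dessert}, \gamma_{\rm ext})) \geq S_\mathrm{gen}(\mathrm{maximinimax}(\gamma_{\rm dessert}, \gamma_{\rm aptz})) = S_\mathrm{gen}(\gamma_{\rm main})$ — roughly, pushing the outer anchor of the maximinimax further out cannot decrease the generalized entropy of the bulge it finds, since any sweep-out between $\gamma_{\rm dessert}$ and $\gamma_{\rm ext}$ can be extended/restricted to compare against sweep-outs reaching only to $\gamma_{\rm aptz}$. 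Combining this with the assumed inequality $S_\mathrm{gen}(\gamma_{\rm ext}) \leq S_\mathrm{gen}(\gamma_{\rm aptz})$ gives
\begin{align}
S_\mathrm{gen}(\mathrm{maximinimax}(\gamma_{\rm dessert}, \gamma_{\rm ext})) - S_\mathrm{gen}(\gamma_{\rm ext}) \geq S_\mathrm{gen}(\gamma_{\rm main}) - S_\mathrm{gen}(\gamma_{\rm aptz}),
\end{align}
so $\gamma_{\rm ext}$ does at least as well as $\gamma_{\rm aptz}$ in the maximization. If the inequality on $S_\mathrm{gen}(\gamma_{\rm ext})$ is strict, this is an immediate contradiction; if it is an equality, I would need a short additional argument — e.g., invoke Lemma~\ref{lem:ssa} or an outer-minimality/genericity argument to rule out a distinct QES of equal generalized entropy sitting strictly inside $W_O[\gamma_{\rm aptz}]$, or else observe that $\gamma_{\rm ext}$ being weakly better already forces $\gamma_{\rm aptz}$ to not be the (presumably unique, or outermost) maximizer.

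The main obstacle I anticipate is establishing the comparison inequality for maximinimax under enlarging the outer anchor rigorously — maximinimax is a delicate $\max$-$\min$-$\max$ over Cauchy slices and sweep-outs, and one has to check that restricting a sweep-out on a slice between $\gamma_{\rm dessert}$ and $\gamma_{\rm ext}$ to the sub-slice between $\gamma_{\rm dessert}$ and $\gamma_{\rm aptz}$ (or conversely extending) behaves correctly with respect to the nested inner $\max$ over level sets. I would handle this by using the outer-wedge/strong-subadditivity machinery already deployed in Lemma~\ref{lem:ssa} and in the proof of the preceding maximinimax lemma: given an optimal sweep-out realizing $\mathrm{maximinimax}(\gamma_{\rm dessert}, \gamma_{\rm aptz})$, one builds a sweep-out toward $\gamma_{\rm ext}$ by appending level sets obtained from intersecting outer wedges, using the QFC to control the generalized entropy of the appended portion, exactly as in the $\tilde\gamma, \dbtilde\gamma$ construction earlier. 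A secondary subtlety is the possibility that $\gamma_{\rm ext}$ is not spacelike-separated from $\gamma_{\rm dessert}$, so that $\mathrm{maximinimax}(\gamma_{\rm dessert}, \gamma_{\rm ext})$ is not literally defined; in that case I would instead argue directly that $\gamma_{\rm ext}$ (or a QES between it and $\gamma_{\rm dessert}$ produced by the lemmas of Section~\ref{sec:BBandT}) serves as an equally good or better appetizer, again contradicting maximality.
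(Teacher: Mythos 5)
Your overall strategy matches the paper's: derive a contradiction by showing that $\gamma_{\rm ext}$ is a competitor to $\gamma_{\rm aptz}$ in the $\max_{\gamma_1}$ step of the general conjecture, using strong subadditivity applied to wedge intersections/unions (the $\tilde\gamma,\dbtilde\gamma$ machinery from Lemma~\ref{lem:ssa}). You also correctly anticipate the two secondary subtleties (the equality case, and the possibility that maximinimax$(\gamma_{\rm dessert},\gamma_{\rm ext})$ is not well defined on a common Cauchy slice).

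There is, however, a substantive issue in how you factor the argument. The ``monotonicity'' you propose as the key intermediate lemma,
\begin{equation*}
  S_\mathrm{gen}(\mathrm{maximinimax}(\gamma_{\rm dessert}, \gamma_{\rm ext})) \;\stackrel{?}{\geq}\; S_\mathrm{gen}(\mathrm{maximinimax}(\gamma_{\rm dessert}, \gamma_{\rm aptz})),
\end{equation*}
is \emph{not} what the SSA/wedge construction actually delivers, and you should not expect to prove it in that form. Carrying out the $\tilde\gamma,\dbtilde\gamma$ decomposition (with $\Sigma$ taken as a maximinimax slice from $\gamma_{\rm dessert}$ to $\gamma_{\rm aptz}$ glued to a slice from $\gamma_{\rm aptz}$ to $\gamma_{\rm ext}$ on which $\gamma_{\rm ext}$ is minimal) only yields the SSA-corrected bound
\begin{equation*}
  S_\mathrm{gen}(\mathrm{maximinimax}(\gamma_{\rm dessert}, \gamma_{\rm ext})) \;\geq\; S_\mathrm{gen}(\gamma_{\rm main}) + S_\mathrm{gen}(\gamma_{\rm ext}) - S_\mathrm{gen}(\gamma_{\rm aptz}),
\end{equation*}
which is strictly weaker than your monotonicity claim whenever $S_\mathrm{gen}(\gamma_{\rm ext}) < S_\mathrm{gen}(\gamma_{\rm aptz})$. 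Fortunately, the correction term is exactly the one needed: subtracting $S_\mathrm{gen}(\gamma_{\rm ext})$ from both sides already gives the required lunch-size comparison. So rather than splitting the argument into ``maximinimax monotonicity'' plus ``$S_\mathrm{gen}$ comparison'', you should directly bound the combination $S_\mathrm{gen}(\mathrm{maximinimax}(\gamma_{\rm dessert}, \gamma_{\rm ext})) - S_\mathrm{gen}(\gamma_{\rm ext})$.

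Two further remarks. First, the strictness that handles the equality case does not come from genericity or from Lemma~\ref{lem:ssa}; it comes from the fact that $S_\mathrm{gen}(\dbtilde\gamma(t)) \geq S_\mathrm{gen}(\gamma_{\rm ext})$ is saturated only at $\gamma(t) = \gamma_{\rm ext}$, which cannot be where $S_\mathrm{gen}(\tilde\gamma(t))$ attains the $\gamma_{\rm main}$ bound (since there $\tilde\gamma = \gamma_{\rm aptz}$ and $S_\mathrm{gen}(\gamma_{\rm aptz}) < S_\mathrm{gen}(\gamma_{\rm main})$). Second, the cleanest way to address your timelike-separation worry is the paper's opening WLOG step: replace $\gamma_{\rm ext}$ with $\mathrm{maximin}(\gamma_{\rm aptz}, B)$, which preserves the hypotheses and guarantees a suitable Cauchy slice containing both $\gamma_{\rm aptz}$ and (the new) $\gamma_{\rm ext}$.
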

\begin{proof}
Suppose there exists such a QES $\gamma_\mathrm{ext}$. Without loss of generality we may assume that $\gamma_\mathrm{ext}$ is maximin between $\gamma_\mathrm{aptz}$ and $B$. To show a contradiction, we need to show that the size of the lunch between $\gamma_{\rm dessert}$ and $\gamma_\mathrm{ext}$ is larger than that between $\gamma_{\rm dessert}$ and $\gamma_\mathrm{ext}$ i.e. that
\begin{align} \label{eq:desiredbound}
    S_\mathrm{gen}(\mathrm{maximinimax}(\gamma_{\rm dessert}, &\gamma_\mathrm{ext})) - S_\mathrm{gen}(\gamma_\mathrm{ext}) \stackrel{?}{>} \nonumber\\&S_\mathrm{gen}(\mathrm{maximinimax}(\gamma_{\rm dessert}, \gamma_\mathrm{aptz})) - S_\mathrm{gen}(\gamma_\mathrm{aptz}).
\end{align}
To do so we pick a particular Cauchy slice $\Sigma$ and choose a minmax foliation $\gamma(t)$ of that slice between $\gamma_{\rm dessert}$ and $\gamma_\mathrm{ext}$. For any $t$, we must have
\begin{align} \label{eq:gammabound}
   S_\mathrm{gen}(\gamma(t)) \leq S_\mathrm{gen}(\mathrm{maximinimax}(\gamma_{\rm dessert}, \gamma_\mathrm{ext})).
\end{align}
Let us take $\Sigma$ to be the union of a maximinimax partial Cauchy slice between $\gamma_{\rm dessert}$ and $\gamma_\mathrm{aptz}$ and a partial Cauchy slice between $\gamma_\mathrm{aptz}$ and $\gamma_\mathrm{ext}$ on which $\gamma_\mathrm{ext}$ is minimal. We then define two new foliations $\tilde\gamma(t)$ and $\dbtilde\gamma(t)$ by $W_O[\tilde\gamma(t)] = (W_O[\gamma(t)]' \cap W_O[\gamma_\mathrm{aptz}]')'$ and $W_O[\dbtilde\gamma(t)] = W_O[\gamma(t)] \cap W_O[\gamma_\mathrm{aptz}]$ respectively. The surfaces $\tilde\gamma(t)$ foliate between $\gamma_{\rm dessert}$ and $\gamma_\mathrm{aptz}$ in a maximinimax slice. Hence there must exist $t$ with
\begin{align} \label{eq:gamma'bound}
    S_\mathrm{gen}(\tilde \gamma(t)) \geq S_\mathrm{gen}(\mathrm{maximinimax}(\gamma_{\rm dessert}, \gamma_\mathrm{aptz}).
\end{align}
On the other hand, for all $t$, $\dbtilde\gamma(t)$ is contained in a partial Cauchy slice within which $\gamma_\mathrm{ext}$ has minimal generalized entropy. So
\begin{align}\label{eq:gamma''bound}
    S_\mathrm{gen}(\dbtilde \gamma(t)) \geq S_\mathrm{gen}(\gamma_\mathrm{ext}),
\end{align}
with equality only when $\gamma(t) = \gamma_{\rm ext}$. But by strong subadditivity we have for all $t$
\begin{align} \label{eq:ssagammas}
    S_\mathrm{gen}(\gamma(t)) + S_\mathrm{gen}(\gamma_\mathrm{aptz}) \geq S_\mathrm{gen}(\tilde\gamma(t)) + S_\mathrm{gen}(\dbtilde\gamma(t)).
\end{align}
Substituting the inequalities \eqref{eq:gammabound}, \eqref{eq:gamma'bound} and \eqref{eq:gamma''bound} into \eqref{eq:ssagammas} gives exactly the inequality \eqref{eq:desiredbound} that we needed to show to prove our desired contradiction.
\end{proof}

\begin{lem}
There does not exist a QES $\gamma_\mathrm{ext} \subseteq W_O[\gamma_\mathrm{aptz}]' \cap W_O[\gamma_{\rm dessert}]$ such that $S_\mathrm{gen}(\gamma_\mathrm{ext}) < S_\mathrm{gen}(\gamma_\mathrm{aptz})$.
\end{lem}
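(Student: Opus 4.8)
The plan is to mimic the argument of the previous lemma, but now the putative surface $\gamma_\mathrm{ext}$ lies in the \emph{complement} wedge of $\gamma_\mathrm{aptz}$ (while still inside $W_O[\gamma_\mathrm{dessert}]$) rather than in its outer wedge, so I need to redo the bookkeeping of which piece of each foliation gets absorbed into which intersection/union wedge. Concretely, I would again argue by contradiction: suppose such a $\gamma_\mathrm{ext}$ exists. Since the previous lemma already rules out $\gamma_\mathrm{ext} \subseteq W_O[\gamma_\mathrm{aptz}]$, the new content here is exactly the case where $\gamma_\mathrm{ext}$ is timelike- or spacelike-separated from $\gamma_\mathrm{aptz}$ but not in its outer wedge; without loss of generality I can take $\gamma_\mathrm{ext}$ to be maximin between $\gamma_\mathrm{aptz}$'s region and $B$ within $W_O[\gamma_\mathrm{dessert}]$, or more precisely a QES achieving the minimal $S_\mathrm{gen}$ in the relevant region. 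The goal is once more to show that the lunch between $\gamma_\mathrm{dessert}$ and $\gamma_\mathrm{ext}$ is strictly larger than that between $\gamma_\mathrm{dessert}$ and $\gamma_\mathrm{aptz}$, contradicting the definition of $\gamma_\mathrm{aptz}$ as the \emph{maximizer} in \eqref{eq:PLtotallgeneral}.

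The key steps, in order: (1) Pick a Cauchy slice $\Sigma$ realized as the union of a maximinimax partial slice between $\gamma_\mathrm{dessert}$ and $\gamma_\mathrm{ext}$ — wait, rather I would build $\Sigma$ so that it restricts to a maximinimax slice between $\gamma_\mathrm{dessert}$ and $\gamma_\mathrm{aptz}$ on one piece and, on the other piece, to a slice on which $\gamma_\mathrm{ext}$ is minimal (using Lemma~\ref{lem:ssa}-style cut-and-paste of outer wedges to make $\gamma_\mathrm{ext}$ and $\gamma_\mathrm{aptz}$ co-slice-able, since they need not be achronally separated). (2) Take a minmax foliation $\gamma(t)$ of the appropriate portion of $\Sigma$, giving the bound $S_\mathrm{gen}(\gamma(t)) \leq S_\mathrm{gen}(\mathrm{maximinimax}(\gamma_\mathrm{dessert},\gamma_\mathrm{ext}))$ for all $t$. (3) Define the two auxiliary foliations $\tilde\gamma(t)$ and $\dbtilde\gamma(t)$ by intersecting/unioning $W_O[\gamma(t)]$ with $W_O[\gamma_\mathrm{aptz}]$ appropriately — here is where the sign of the separation matters, so I must be careful whether $\tilde\gamma$ foliates toward $\gamma_\mathrm{dessert}$ or toward $B$. (4) Use that $\tilde\gamma(t)$ sweeps a maximinimax slice between $\gamma_\mathrm{dessert}$ and $\gamma_\mathrm{aptz}$ to get $S_\mathrm{gen}(\tilde\gamma(t)) \geq S_\mathrm{gen}(\mathrm{maximinimax}(\gamma_\mathrm{dessert},\gamma_\mathrm{aptz}))$ for some $t$, and use minimality of $\gamma_\mathrm{ext}$ on its slice to get $S_\mathrm{gen}(\dbtilde\gamma(t)) \geq S_\mathrm{gen}(\gamma_\mathrm{ext})$ for all $t$. (5) Combine with strong subadditivity $S_\mathrm{gen}(\gamma(t)) + S_\mathrm{gen}(\gamma_\mathrm{aptz}) \geq S_\mathrm{gen}(\tilde\gamma(t)) + S_\mathrm{gen}(\dbtilde\gamma(t))$ to reach the desired strict inequality and hence the contradiction; the strictness comes from the genericity assumption and the hypothesis $S_\mathrm{gen}(\gamma_\mathrm{ext}) < S_\mathrm{gen}(\gamma_\mathrm{aptz})$ being strict.

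The main obstacle I anticipate is step (1)/(3): because $\gamma_\mathrm{ext}$ now lies in $W_O[\gamma_\mathrm{aptz}]'$, it is \emph{not} in general achronally separated from $\gamma_\mathrm{aptz}$ (indeed this whole paper exists because of timelike-separated QESs), so the naive "union of two partial Cauchy slices glued at $\gamma_\mathrm{aptz}$" construction of $\Sigma$ need not make sense. I expect to need the same null-evolution trick as in Lemma~\ref{lem:ssa} — sliding the offending pieces of $\gamma_\mathrm{ext}$ (or of the foliation surfaces) along orthogonal null congruences until everything is co-slice-able, paying for it with a QFC inequality — and then carefully checking that the wedge identities $W_O[\tilde\gamma(t)] = (W_O[\gamma(t)]' \cap W_O[\gamma_\mathrm{aptz}]')'$ and $W_O[\dbtilde\gamma(t)] = W_O[\gamma(t)] \cap W_O[\gamma_\mathrm{aptz}]$ still define genuine foliations with the right endpoints. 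A secondary subtlety is confirming that $\gamma_\mathrm{ext} \subseteq W_O[\gamma_\mathrm{dessert}]$ guarantees the maximinimax between $\gamma_\mathrm{dessert}$ and $\gamma_\mathrm{ext}$ does not run off to an asymptotic region, which should follow from the compactness-in-time assumption stated at the start of Section~\ref{sec:BBandT}.
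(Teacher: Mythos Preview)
Your proposal has a genuine structural gap: the direct adaptation of the previous lemma's argument does not survive the reversal of geometry. In the previous lemma $\gamma_\mathrm{ext}$ sat \emph{outside} $\gamma_\mathrm{aptz}$, so one could glue a maximinimax slice (dessert-to-aptz) to a minimal slice (aptz-to-ext) and foliate the union. Here $\gamma_\mathrm{ext} \subseteq W_O[\gamma_\mathrm{aptz}]'$ means $\gamma_\mathrm{ext}$ lies \emph{between} $\gamma_\mathrm{dessert}$ and $\gamma_\mathrm{aptz}$, so there is no ``other piece'' to glue on; your step~(1) has no coherent geometric meaning. Worse, your steps (3)--(4) trivialize: if $\gamma(t)$ foliates only between $\gamma_\mathrm{dessert}$ and $\gamma_\mathrm{ext}$, then $W_O[\gamma(t)] \supseteq W_O[\gamma_\mathrm{ext}] \supseteq W_O[\gamma_\mathrm{aptz}]$ for every $t$, so intersecting or unioning with $W_O[\gamma_\mathrm{aptz}]$ gives back $\gamma_\mathrm{aptz}$ or $\gamma(t)$ identically. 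The auxiliary foliations carry no information, and you cannot produce a sweep of a maximinimax slice between $\gamma_\mathrm{dessert}$ and $\gamma_\mathrm{aptz}$ this way. Also, the obstacle you flag as primary --- achronality of $\gamma_\mathrm{ext}$ and $\gamma_\mathrm{aptz}$ --- is not actually present: by hypothesis $\gamma_\mathrm{ext}$ lives in the complement wedge $W_O[\gamma_\mathrm{aptz}]'$, so the two are automatically achronally separated.

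The paper's proof supplies the missing idea. The real difficulty is arranging a \emph{single} Cauchy slice of $W_O[\gamma_\mathrm{aptz}]' \cap W_O[\gamma_\mathrm{dessert}]$ on which both $\gamma_\mathrm{main}$ is the minmax maximum \emph{and} some QES $\gamma_\mathrm{min}$ between $\gamma_\mathrm{main}$ and $\gamma_\mathrm{aptz}$ is minimal. The paper achieves this by maximizing, over slices, the weighted combination $A\, S_\mathrm{gen}(\gamma(t_\mathrm{max})) + B\, S_\mathrm{gen}(\gamma_\mathrm{min})$ with $A/B$ large; this forces $\gamma(t_\mathrm{max}) = \gamma_\mathrm{main}$ while simultaneously making $\gamma_\mathrm{min} = \mathrm{maximin}(\gamma_\mathrm{main},\gamma_\mathrm{aptz})$. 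One then foliates the \emph{full} slice from $\gamma_\mathrm{dessert}$ to $\gamma_\mathrm{aptz}$ and takes intersections and unions with $W_O[\gamma_\mathrm{min}]$ (not $W_O[\gamma_\mathrm{aptz}]$). A dichotomy follows: either the minmax between $\gamma_\mathrm{dessert}$ and $\gamma_\mathrm{min}$ on $\Sigma$ is strictly below $S_\mathrm{gen}(\gamma_\mathrm{main})$, which concatenated with $\dbtilde\gamma(t)$ contradicts $\gamma_\mathrm{main}$ being minmax on $\Sigma$; or it is at least $S_\mathrm{gen}(\gamma_\mathrm{main})$, which makes the lunch with $\gamma_\mathrm{min}$ as appetizer strictly larger and contradicts the maximality of $\gamma_\mathrm{aptz}$. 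The weighted-optimization slice is the new ingredient your sketch is missing.
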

\begin{proof}
We construct a Cauchy slice $\Sigma$ for $W_O[\gamma_\mathrm{aptz}]' \cap W_O[\gamma_{\rm dessert}]$ that maximizes
\begin{align}
    \min_{\gamma(t)} \left[A S_\mathrm{gen}(\gamma(t_\mathrm{max})) + B S_\mathrm{gen}(\gamma_\mathrm{min})\right].
\end{align}
Here $A$ and $B$ are positive constants, $t_\mathrm{max}$ maximizes $S_\mathrm{gen}(\gamma(t))$ and $\gamma_\mathrm{min} \subseteq W_O[\gamma(t_\mathrm{max})\cap \Sigma$ has minimal $S_\mathrm{gen}$ among all surfaces in $\Sigma$ in the exterior of $\gamma(t_\mathrm{max})$.

It follows immediately from the definition that $\gamma_\mathrm{min} = \mathrm{maximin}(\gamma(t_\mathrm{max}), \gamma_{\rm aptz})$. Moreover, when the ratio $A/B$ is sufficiently large, the surface $\gamma(t_\mathrm{max})$ must approach the maximinimax surface $\gamma_{\rm main}$. By standard arguments, since $\gamma(t_\mathrm{max})$ and $\gamma_\mathrm{aptz}$ are extremal, the surface $\gamma_\mathrm{min}$ is the smallest extremal surface between them. Hence $\gamma_\mathrm{min}$ and $\gamma_\mathrm{aptz}$ are distinct and $S_\mathrm{gen}(\gamma_\mathrm{min}) < S_\mathrm{gen}(\gamma_\mathrm{aptz})$ if and only if there exists a QES $\gamma_\mathrm{ext}$ satisfying the conditions above. We assume this is the case and prove a contradiction .

Let $\gamma(t)$ be a minmax foliation of $\Sigma$ (maximized at $t_{\rm max}$) between $\gamma_{\rm dessert}$ and $\gamma_\mathrm{aptz}$ and let $\tilde\gamma(t)$ and $\dbtilde\gamma(t)$ be defined respectively by $W_O[\tilde\gamma(t)] = (W_O[\gamma(t)]' \cap W_O[\gamma_\mathrm{min}]')'$ and $W_O[\dbtilde\gamma(t)] = W_O[\gamma(t)] \cap W_O[\gamma_\mathrm{min}]$. For $t \leq t_\mathrm{max}$, we have $\dbtilde\gamma(t) = \gamma_\mathrm{min}$, and hence $S_\mathrm{gen}(\dbtilde\gamma(t)) < S_\mathrm{gen}(\gamma_{\rm main})$. For $t > t_\mathrm{max}$, we have
\begin{align}
    S_\mathrm{gen}(\gamma(t)) + S_\mathrm{gen}(\gamma_\mathrm{min}) \geq S_\mathrm{gen}(\tilde\gamma(t)) + S_\mathrm{gen}(\dbtilde\gamma(t))
\end{align}
by strong subadditivity. Since $S_\mathrm{gen}(\tilde\gamma(t)) \geq S_\mathrm{gen}(\gamma_\mathrm{min})$ and $S_\mathrm{gen}(\gamma(t)) <S_\mathrm{gen}(\gamma_{\rm main})$ for $t > t_\mathrm{max}$, we again obtain $S_\mathrm{gen}(\dbtilde\gamma(t)) < S_\mathrm{gen}(\gamma_{\rm main})$.

Now suppose the minmax surface between $\gamma_{\rm dessert}$ and $\gamma_\mathrm{min}$ in $\Sigma$ had generalized entropy strictly smaller than $S_\mathrm{gen}(\gamma_{\rm main})$. We could then combine a minmax foliation between $\gamma_{\rm dessert}$ and $\gamma_\mathrm{min}$ with $\dbtilde\gamma(t)$ to obtain a foliation of $\Sigma$ between $\gamma_{\rm dessert}$ and $\gamma_\mathrm{aptz}$ with maximal generalized entropy strictly smaller than $S_\mathrm{gen}(\gamma_{\rm main})$, contradicting the definition of $\gamma_{\rm main}$. But, on the other hand, if the minmax surface between $\gamma_{\rm dessert}$ and $\gamma_\mathrm{min}$ has generalized entropy at least as large as $S_\mathrm{gen}(\gamma_{\rm main})$, then 
\begin{align}
S_\mathrm{gen}(\mathrm{maximinimax}(&\gamma_{\rm dessert}, \gamma_\mathrm{min})) - S_\mathrm{gen}(\gamma_\mathrm{min}) > \\&S_\mathrm{gen}(\mathrm{maximinimax}(\gamma_{\rm dessert}, \gamma_\mathrm{aptz})) - S_\mathrm{gen}(\gamma_\mathrm{aptz}),
\end{align}
which contradicts the maximality of $\gamma_\mathrm{aptz}$ in \eqref{eq:PLtotallgeneral} if $S_\mathrm{gen}(\gamma_\mathrm{min}) < S_\mathrm{gen}(\gamma_\mathrm{aptz})$. 
\end{proof}

\begin{thm}
The surface $\gamma_\mathrm{aptz}$ is maximin between $\gamma_{\rm main}$ and $B$.
\end{thm}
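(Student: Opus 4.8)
The plan is to reduce the statement to the two lemmas immediately above together with the strong-subadditivity Lemma~\ref{lem:ssa}. First I would record the relevant inclusions: since $\gamma_{\rm main} = \mathrm{maximinimax}(\gamma_{\rm dessert},\gamma_{\rm aptz})$ lies in the interior of the wedge between $\gamma_{\rm dessert}$ and $\gamma_{\rm aptz}$, we have $W_O[\gamma_{\rm aptz}] \subseteq W_O[\gamma_{\rm main}] \subseteq W_O[\gamma_{\rm dessert}]$, so $\gamma_{\rm aptz}$ is in particular a QES contained in $W_O[\gamma_{\rm main}]$. Because $\gamma_{\rm main}$ is a QES its null quantum expansions towards $B$ vanish, so $\mathrm{maximin}(\gamma_{\rm main},B)$ is well defined, and by the standard characterization used above it is the QES of smallest $S_\mathrm{gen}$ contained in $W_O[\gamma_{\rm main}]$. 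Hence it is enough to show that $\gamma_{\rm aptz}$ is that surface, i.e. that every QES $\gamma_\mathrm{ext} \subseteq W_O[\gamma_{\rm main}]$ has $S_\mathrm{gen}(\gamma_\mathrm{ext}) \geq S_\mathrm{gen}(\gamma_{\rm aptz})$, with equality only for $\gamma_\mathrm{ext} = \gamma_{\rm aptz}$.

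The easy case is when $\gamma_\mathrm{ext} \subseteq W_O[\gamma_{\rm aptz}]$: the first lemma above then immediately gives $S_\mathrm{gen}(\gamma_\mathrm{ext}) > S_\mathrm{gen}(\gamma_{\rm aptz})$ unless $\gamma_\mathrm{ext} = \gamma_{\rm aptz}$. The real work is the case where $\gamma_\mathrm{ext}$ ``straddles'' $\gamma_{\rm aptz}$, i.e. is contained in neither $W_O[\gamma_{\rm aptz}]$ nor $W_O[\gamma_{\rm aptz}]'$. There I would apply Lemma~\ref{lem:ssa} to the pair $(\gamma_\mathrm{ext},\gamma_{\rm aptz})$ to produce QESs $\gamma_c \subseteq W_O[\gamma_\mathrm{ext}] \cap W_O[\gamma_{\rm aptz}]$ and $\gamma_d$ with $\gamma_\mathrm{ext},\gamma_{\rm aptz} \subseteq W_O[\gamma_d]$ and $S_\mathrm{gen}(\gamma_c) + S_\mathrm{gen}(\gamma_d) \leq S_\mathrm{gen}(\gamma_\mathrm{ext}) + S_\mathrm{gen}(\gamma_{\rm aptz})$. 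Since $\gamma_c \subseteq W_O[\gamma_{\rm aptz}]$ the first lemma gives $S_\mathrm{gen}(\gamma_c) \geq S_\mathrm{gen}(\gamma_{\rm aptz})$, and since $\gamma_\mathrm{ext},\gamma_{\rm aptz} \subseteq W_O[\gamma_{\rm dessert}]$ the construction of $\gamma_d$ in Lemma~\ref{lem:ssa} --- namely $\gamma_d \subseteq (W_O[\gamma_\mathrm{ext}]' \cap W_O[\gamma_{\rm aptz}]')'$, which lies inside $W_O[\gamma_{\rm dessert}]$ and cannot penetrate the interior of $W_O[\gamma_{\rm aptz}]$ --- places $\gamma_d$ in $W_O[\gamma_{\rm aptz}]' \cap W_O[\gamma_{\rm dessert}]$, so the second lemma gives $S_\mathrm{gen}(\gamma_d) \geq S_\mathrm{gen}(\gamma_{\rm aptz})$. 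Combining the three inequalities yields $S_\mathrm{gen}(\gamma_\mathrm{ext}) \geq S_\mathrm{gen}(\gamma_c) + S_\mathrm{gen}(\gamma_d) - S_\mathrm{gen}(\gamma_{\rm aptz}) \geq S_\mathrm{gen}(\gamma_{\rm aptz})$.

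Finally I would settle the boundary case $S_\mathrm{gen}(\gamma_\mathrm{ext}) = S_\mathrm{gen}(\gamma_{\rm aptz})$ with $\gamma_\mathrm{ext} \neq \gamma_{\rm aptz}$: chasing the equalities back, strictness in the first lemma forces $\gamma_c = \gamma_{\rm aptz}$, whence $\gamma_{\rm aptz} \subseteq W_O[\gamma_\mathrm{ext}] \subseteq W_O[\gamma_{\rm main}]$; one then checks that such a $\gamma_\mathrm{ext}$ gives a lunch $W_O[\gamma_{\rm dessert}] \cap W_O[\gamma_\mathrm{ext}]'$ at least as large as that of $\gamma_{\rm aptz}$ at equal $S_\mathrm{gen}$, so $\gamma_\mathrm{ext}$ could have been chosen in place of $\gamma_{\rm aptz}$ in \eqref{eq:PLtotallgeneral}, and under the genericity assumptions in force this degenerate possibility does not occur. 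This establishes $\mathrm{maximin}(\gamma_{\rm main},B) = \gamma_{\rm aptz}$. I expect the main obstacle to be the careful bookkeeping of the wedge inclusions in the straddling case --- in particular verifying that the (null-completed, re-extremized) surface $\gamma_d$ supplied by Lemma~\ref{lem:ssa} genuinely lands in $W_O[\gamma_{\rm aptz}]' \cap W_O[\gamma_{\rm dessert}]$ rather than slipping into the interior of $W_O[\gamma_{\rm aptz}]$ --- together with pinning down the degenerate equality case; everything else is a direct orchestration of the two lemmas above and Lemma~\ref{lem:ssa}.
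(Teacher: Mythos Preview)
Your proposal is correct and follows essentially the same route as the paper: reduce to showing that no QES in $W_O[\gamma_{\rm main}]$ has smaller generalized entropy than $\gamma_{\rm aptz}$, apply Lemma~\ref{lem:ssa} to the pair $(\gamma_\mathrm{ext},\gamma_{\rm aptz})$, and then invoke the two preceding lemmas on the resulting $\gamma_c$ and $\gamma_d$. The paper's version is just terser---it argues by contradiction and skips your case split (note that your Lemma~\ref{lem:ssa} step already handles all cases uniformly, including the one you did not list, $\gamma_\mathrm{ext}\subseteq W_O[\gamma_{\rm aptz}]'\cap W_O[\gamma_{\rm main}]$)---and does not separately discuss the equality case.
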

\begin{proof}
The surface $\gamma_\mathrm{aptz}$ is maximin if there exists no smaller QES in $W_O[\gamma_{\rm main}]$. By Lemma \ref{lem:ssa}, if there exists a QES $\gamma_\mathrm{ext} \subset W_O[\gamma_{\rm main}]$ with $S_\mathrm{gen}(\gamma_\mathrm{ext}) < S_\mathrm{gen}(\gamma_\mathrm{aptz})$ then there must also exist a QES $\gamma_\mathrm{ext}'$ satisfying the same condition with either $\gamma_\mathrm{ext}' \subseteq W_O[\gamma_\mathrm{aptz}]$ or $\gamma_\mathrm{aptz} \subseteq W_O[\gamma_\mathrm{ext}']$. Both possibilities are ruled out by the preceding lemmas.
\end{proof}

\begin{cor}
The surface $\gamma_\mathrm{aptz}$ is outer minimal (and therefore a throat).
\end{cor}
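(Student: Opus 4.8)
The plan is to read off the corollary directly from the theorem just established together with the first lemma of this subsection, so the proof should be very short. The one thing to keep in mind is that ``outer minimal'' is, by definition, a property of a \emph{throat}: it asks that no \emph{other} QES with strictly smaller generalized entropy lie in $W_O[\gamma]$. So the first step is to verify that $\gamma_\mathrm{aptz}$ is a throat. This is immediate: the preceding theorem gives $\gamma_\mathrm{aptz} = \mathrm{maximin}(\gamma_{\rm main}, B)$, and by Theorem~\ref{thm33} any maximin QES is a throat.

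With that in hand, I would invoke the first lemma of this subsection, which states that there is no QES $\gamma_\mathrm{ext} \subseteq W_O[\gamma_\mathrm{aptz}]$ with $S_\mathrm{gen}(\gamma_\mathrm{ext}) \leq S_\mathrm{gen}(\gamma_\mathrm{aptz})$. In particular there is no \emph{other} QES contained in $W_O[\gamma_\mathrm{aptz}]$ whose generalized entropy is strictly smaller than that of $\gamma_\mathrm{aptz}$ (indeed, the $\leq$ form of the lemma even rules out a tie, so $\gamma_\mathrm{aptz}$ is the unique minimal-$S_\mathrm{gen}$ QES in its own outer wedge, which is more than enough). This is precisely the defining condition of an outer-minimal throat, so $\gamma_\mathrm{aptz}$ is outer minimal. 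The parenthetical ``(and therefore a throat)'' is then just a restatement of the step already carried out, and can equivalently be obtained from the earlier lemma that every outer-minimal QES is a throat.

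I do not expect any genuine obstacle: all the substantive work — controlling the QESs in the exterior of $\gamma_\mathrm{aptz}$ and the exterior of $\gamma_{\rm main}$ — was done in the two lemmas and the theorem preceding the corollary. The only point needing (minor) care is the logical ordering: one must first know $\gamma_\mathrm{aptz}$ is a throat before the phrase ``outer minimal'' is even meaningful, and for that the maximin characterization from the preceding theorem (fed into Theorem~\ref{thm33}) is what is used; everything after that is an unpacking of the definition of outer minimality against the first lemma.
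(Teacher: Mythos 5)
Your proof is correct and essentially reproduces what the paper intends: the corollary is an immediate unpacking of the definition of outer minimality against the first lemma of the subsection, with throat-hood following either from the earlier ``outer-minimal QES is a throat'' lemma or, as you do it, from the maximin characterization in the preceding theorem fed through Theorem~\ref{thm33}. The only stylistic difference is the order: the parenthetical ``(and therefore a throat)'' indicates the paper's intended logic is outer-minimal $\Rightarrow$ throat (via the earlier lemma in Section~\ref{sec:outerminimal}), whereas you establish throat-hood first so that the definition of outer minimality is literally applicable; this is a reasonable reading of the definition as stated, and both orderings use the same ingredients and reach the same conclusion.
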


\begin{cor}
The fully general Python's lunch prescription reduces to the prescription for multiple spacelike-separated extremal surfaces when all QESs are spacelike-separated and nested and a Cauchy slice $\Sigma$ with appropriate properties exists.
\end{cor}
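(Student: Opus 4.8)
The plan is to show that, under the corollary's hypotheses (all QESs homologous to $B$ mutually spacelike-separated and nested, with a distinguished Cauchy slice $\Sigma$ that is simultaneously a maximal slice for every relevant maximin/maximinimax construction, and the usual alternating throat/bulge structure), the two nested optimizations in \eqref{eq:PLtotallgeneral} single out exactly the pair $(\gamma_{\rm main},\gamma_{\rm aptz})$ appearing in \eqref{eq:multilunch}, so the formulas coincide. I would treat the inner maximization over $\gamma_1$ and the outer minimization over $\gamma_0$ in turn, taking $\gamma_{\rm dessert}$ to be the innermost outer-minimal QES whose outer wedge contains $p$; this is the surface playing the analogous role in \eqref{eq:multilunch}, since there $p$ is assumed to lie outside $\gamma_{\rm dessert}$ but outside no more deeply nested outer-minimal QES.

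\textbf{Inner maximization.} Fix $\gamma_0=\gamma_{\rm dessert}$. By the corollary that the maximizing surface in \eqref{eq:PLtotallgeneral} is outer minimal, and hence a throat, it suffices to maximize over throats $\gamma_1\subseteq\mathcal{W}_O[\gamma_{\rm dessert}]$. For such a throat, the argument used above to establish $\gamma_{\rm main}=\mathrm{maximinimax}(\gamma_{\rm dessert},\gamma_{\rm aptz})$ — namely that the maximinimax of two spacelike-separated nested throats is the intermediate QES of largest generalized entropy, which relies precisely on the slice $\Sigma$ — shows that $S_\mathrm{gen}(\mathrm{maximinimax}(\gamma_{\rm dessert},\gamma_1))$ equals the largest value of $S_\mathrm{gen}$ among QESs between $\gamma_{\rm dessert}$ and $\gamma_1$, a maximum realized by a bulge. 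Hence $\max_{\gamma_1}\big[S_\mathrm{gen}(\mathrm{maximinimax}(\gamma_{\rm dessert},\gamma_1))-S_\mathrm{gen}(\gamma_1)\big]$ equals the maximum of $S_\mathrm{gen}(\gamma_b)-S_\mathrm{gen}(\gamma_t)$ over all pairs of a bulge $\gamma_b$ and a throat $\gamma_t$ with $\mathcal{W}_O[\gamma_t]\subseteq\mathcal{W}_O[\gamma_b]\subseteq\mathcal{W}_O[\gamma_{\rm dessert}]$. The $\max_{j>i}$ in \eqref{eq:multilunch} ranges over all pairs of QESs in $\mathcal{W}_O[\gamma_{\rm dessert}]$ but is attained at exactly such a bulge--throat pair (the observation, recalled below \eqref{eq:multilunch}, that $\gamma_{\rm main}$ is always a bulge and $\gamma_{\rm aptz}$ always a throat), so the two maxima are equal and are realized by the same surfaces; in particular the $\gamma_{\rm main}$ and $\gamma_{\rm aptz}$ of \eqref{eq:PLtotallgeneral} agree with those of \eqref{eq:multilunch}.

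\textbf{Outer minimization.} It remains to show the minimum over outer-minimal $\gamma_0$ with $p\in\mathcal{W}_O[\gamma_0]$ is attained at $\gamma_{\rm dessert}$. By the nesting hypothesis every admissible $\gamma_0$ is comparable to $\gamma_{\rm dessert}$; the more deeply nested ones are excluded by the assumption on $p$, so we may assume $\mathcal{W}_O[\gamma_{\rm dessert}]\subseteq\mathcal{W}_O[\gamma_0]$. Such a $\gamma_0$ is again a throat (being outer minimal), and $\gamma_{\rm aptz}\subseteq\mathcal{W}_O[\gamma_{\rm dessert}]\subseteq\mathcal{W}_O[\gamma_0]$ is an admissible trial for its inner maximization. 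Since the QESs between $\gamma_{\rm dessert}$ and $\gamma_{\rm aptz}$ form a subset of those between $\gamma_0$ and $\gamma_{\rm aptz}$, the same maximinimax-equals-largest-intermediate-QES fact gives $S_\mathrm{gen}(\mathrm{maximinimax}(\gamma_0,\gamma_{\rm aptz}))\geq S_\mathrm{gen}(\gamma_{\rm main})$; hence the bracket in \eqref{eq:PLtotallgeneral} evaluated at $\gamma_0$ is at least $S_\mathrm{gen}(\gamma_{\rm main})-S_\mathrm{gen}(\gamma_{\rm aptz})$, which is its value at $\gamma_{\rm dessert}$. So the minimum is attained at $\gamma_{\rm dessert}$, and \eqref{eq:PLtotallgeneral} collapses to \eqref{eq:multilunch}.

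\textbf{Main obstacle.} The step carrying the real weight is ensuring that the lemma identifying maximinimax with the largest intermediate QES genuinely applies to every pair that arises — in particular to the pairs $(\gamma_0,\gamma_{\rm aptz})$ with the enlarged outer wedges in the last step. This needs the distinguished slice $\Sigma$ (or a suitable enlargement) to be simultaneously a good maximinimax slice for all these pairs and for all the intermediate QESs, and all of those QESs to remain mutually spacelike-separated and nested — exactly the structural input supplied by the corollary's hypotheses and absent in the general timelike-separated setting. Everything else (the wedge inclusions, the interchange of the two maxima, and the fact that a bulge always dominates its neighbouring throats in $S_\mathrm{gen}$) is routine bookkeeping.
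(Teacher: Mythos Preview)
Your argument is correct and is exactly the kind of detailed verification the paper leaves implicit: in the paper this corollary is stated without proof, as an immediate consequence of the preceding Theorem and Corollary (that the general $\gamma_{\rm aptz}$ is maximin outside $\gamma_{\rm main}$ and hence an outer-minimal throat) together with the two earlier Lemmas characterizing the spacelike prescription's $\gamma_{\rm aptz}$ and $\gamma_{\rm main}$ by the same maximin/maximinimax relations. Your two-step decomposition (inner max, then outer min) and your use of the ``maximinimax equals largest intermediate QES'' lemma to control both steps is precisely how those ingredients assemble.

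One small point worth making explicit: when you invoke ``the maximizing surface in \eqref{eq:PLtotallgeneral} is outer minimal'' at a \emph{fixed} $\gamma_0=\gamma_{\rm dessert}$, you are using the preceding Lemmas/Theorem at a $\gamma_0$ not yet shown to be the minimizer. This is fine because those proofs only use that $\gamma_{\rm aptz}$ maximizes the bracket for the given $\gamma_0$ and never use the outer minimization, but it would be cleaner to say so rather than citing the corollary as stated.
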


Since the true Python's lunch requires a somewhat complicated procedure to compute, the following simpler definition can be helpful

\begin{defn}
    The \emph{full-course Python's lunch} is defined by setting $\gamma_{\rm dessert}$ to be the minimal QES $\gamma_\mathrm{min}$ and $\gamma_\mathrm{aptz}$ to be the outermost QES $\gamma_\mathrm{outer}$. Its size is given by
    \begin{align}
        S_\mathrm{gen}(\gamma_{\rm main}) - S_\mathrm{gen}(\gamma_\mathrm{outer}),
    \end{align}
    where $\gamma_{\rm main} = \mathrm{maximinimax}(\gamma_\mathrm{min},\gamma_\mathrm{outer})$.
\end{defn}
\begin{cor}
    Assuming the minimal QES is isolated, the size of the Python's lunch for a point $p$ in a sufficiently small neighbourhood of $\gamma_\mathrm{min}$ (but within the entanglement wedge) is lower bounded by the size of the full-course Python's lunch. 
\end{cor}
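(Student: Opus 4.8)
The plan is to reduce the general Python's lunch formula \eqref{eq:PLtotallgeneral}, applied to a point $p$ close to $\gamma_\mathrm{min}$, to the full-course expression by showing that the outer minimization is forced to be attained at $\gamma_\mathrm{dessert}=\gamma_\mathrm{min}$. Once that is in hand the bound is essentially immediate: substituting $\gamma_\mathrm{dessert}=\gamma_\mathrm{min}$ into \eqref{eq:PLtotallgeneral} leaves
\begin{align}
\log C = \max_{\gamma_1 \subseteq W_O[\gamma_\mathrm{min}]}\left[S_\mathrm{gen}(\mathrm{maximinimax}(\gamma_\mathrm{min},\gamma_1)) - S_\mathrm{gen}(\gamma_1)\right] + O(1),
\end{align}
and since the outermost QES $\gamma_\mathrm{outer}$ lies in the entanglement wedge $W_O[\gamma_\mathrm{min}]$ it is an admissible $\gamma_1$, so $\log C \geq S_\mathrm{gen}(\mathrm{maximinimax}(\gamma_\mathrm{min},\gamma_\mathrm{outer})) - S_\mathrm{gen}(\gamma_\mathrm{outer}) + O(1)$, which is precisely the size of the full-course Python's lunch.

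The substantive step is pinning $\gamma_\mathrm{dessert}=\gamma_\mathrm{min}$, and this is where the isolation hypothesis enters. First, $\gamma_\mathrm{min}$ is itself outer minimal and, since $p\in W_O[\gamma_\mathrm{min}]$ by hypothesis, it is always an admissible $\gamma_0$; so it suffices to show that no \emph{other} outer-minimal QES is admissible once $p$ is close enough to $\gamma_\mathrm{min}$. I would use isolation to fix a tubular neighbourhood $T$ of $\gamma_\mathrm{min}$ containing no QES homologous to $B$ other than $\gamma_\mathrm{min}$, and Theorem~\ref{thm:outermininEW} to conclude that any other outer-minimal QES $\gamma_0$ lies in the entanglement wedge; choosing its homology slice inside $W_O[\gamma_\mathrm{min}]$ then gives $W_O[\gamma_0]\subseteq W_O[\gamma_\mathrm{min}]$. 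Finally I would argue that a point $p$ arbitrarily close to $\gamma_\mathrm{min}$ cannot lie in $W_O[\gamma_0]$: for a sequence $p_n\to q\in\gamma_\mathrm{min}$ with $p_n\in W_O[\gamma_0]$, the limit $q$ cannot be interior to $W_O[\gamma_0]$ (every neighbourhood of a point of $\gamma_\mathrm{min}$ meets the complement wedge of $\gamma_\mathrm{min}$, which is disjoint from $W_O[\gamma_0]$), and it cannot lie on $\partial W_O[\gamma_0]$ either, since that set is $\gamma_0$ (disjoint from $\gamma_\mathrm{min}$ by genericity, and since $\gamma_0\not\subseteq T$), the conformal boundary (irrelevant for a bulk point), and the null horizons fired from $\gamma_0$ (which the spacelike $\gamma_\mathrm{min}$ cannot touch tangentially from outside at an interior point without entering $W_O[\gamma_0]$, already excluded). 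Hence for $p$ in a small enough neighbourhood of $\gamma_\mathrm{min}$ the minimization collapses to $\gamma_\mathrm{dessert}=\gamma_\mathrm{min}$.

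I expect the causal/topological bookkeeping in the previous paragraph to be the main obstacle at a genuinely rigorous level — ruling out the marginal possibility that $\gamma_\mathrm{min}$ limits onto the horizon of $W_O[\gamma_0]$, and that $\gamma_0$ meets $\gamma_\mathrm{min}$, both nongeneric and dismissable via the genericity assumptions already in force, but requiring care to control how $W_O[\gamma_0]$ can approach $\gamma_\mathrm{min}$. It is worth stressing that isolation of $\gamma_\mathrm{min}$ does essential work: without it (e.g.\ the de Sitter bifurcation surface) a family of outer-minimal competitors could accumulate onto $\gamma_\mathrm{min}$, any of which might lower the right-hand side of \eqref{eq:PLtotallgeneral} and break the bound. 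Given the pinning step, the comparison in the first paragraph is a one-line application of the definitions.
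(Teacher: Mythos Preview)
The paper states this corollary without proof, so there is no explicit argument to compare against. Your approach --- forcing $\gamma_{\rm dessert}=\gamma_{\rm min}$ for $p$ near $\gamma_{\rm min}$ via isolation and Theorem~\ref{thm:outermininEW}, then bounding the inner maximization from below by the admissible choice $\gamma_1=\gamma_{\rm outer}$ --- is the natural reading of the corollary and is correct in outline.

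One point beyond the caveats you already flag: your contradiction argument fixes a single competitor $\gamma_0$ and runs a sequence $p_n\to q\in\gamma_{\rm min}$, which yields a neighborhood excluding that particular $\gamma_0$. To obtain a single neighborhood of $\gamma_{\rm min}$ that works against \emph{all} outer-minimal competitors simultaneously you need either finiteness of the set of outer-minimal QESs (reasonable at the paper's level of rigor, but not stated) or a compactness argument letting $\gamma_0$ vary along the sequence. This is the main gap between your sketch and a fully rigorous proof; the rest of your bookkeeping (ruling out $q\in\gamma_0$ and $q$ on the null horizons of $W_O[\gamma_0]$) is at the same physics level of rigor as the surrounding arguments in the paper.
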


\subsection{Where is the tensor network?}
The general Python's Lunch proposal above is satisfyingly clean and has a number of additional nice features as we have just described. However it also leads to some somewhat surprising conclusions about the relationship between tensor network (TN) models and quantum gravity. 
In general, the holographic duality requires the existence of a linear map from the bulk Hilbert space to the boundary Hilbert space. Though this map in general has not yet been derived ab initio, TN toy models have very successfully reproduced features of the map that are expected from basic gravitational path integral computations, e.g., the path integral derivation of the QES prescription~\cite{LewMal13, FauLew13, DonLew17, AlmHar19, PenShe19}. These setups construct a TN which represents the spatial geometry of a given bulk slice and asymptotes to the boundary slice of choice. For generic time-dependent spacetimes, it is not always obvious which Cauchy slice is meant to be analogous to a TN description of the boundary state. It cannot be any Cauchy slice, at least if we want the logarithm of the dimension of a cut through the tensor network to be proportional to the generalized entropy of the corresponding surface. In fact, consistency with the QES prescription requires suggests that the ``TN slice'' be a maximin slice, with the minimal QES the minimal generalized entropy surface within the slice. Consistency with the Python's lunch conjecture  constrains any candidate ``TN slice'' even further, forcing it to include $\gamma_{\rm dessert}$, $\gamma_{\rm main}$ and $\gamma_{\rm aptz}$. Indeed, in a spacetime where all the QESs homologous to a given region $B$ are spacelike, it is natural to think of the TN slice as containing \emph{all} of the QESs, as was assumed in the version of the Python's lunch proposed in \cite{EngPen21b}.

However, in the presence of timelike-separated QESs, no Cauchy slice can contain every QES of a given boundary (or boundary region). Our fully general Python's lunch conjecture suggests that a hypothetical TN slice should definitely contain $\gamma_\mathrm{dessert}$, $\gamma_{\rm main}$ and $\gamma_{\rm aptz}$, along with the minimal QES. Since all of these extremal surfaces are spacelike separated, this is certainly possible. However different bulk operators (corresponding to bulk points $p$) will in general have different surfaces $\gamma_\mathrm{dessert}$, $\gamma_{\rm main}$ and $\gamma_{\rm aptz}$. And there is no reason that \emph{all} of those surfaces need to lie within the same Cauchy slice.

Issues with finding a consistent ``tensor network slice'' in time-dependent spacetimes are not new: they show up even when merely demanding consistency with the QES prescription once one considers multiple overlapping boundary regions. However, such issues were previously thought to be avoided in static or time-reflection symmetric spacetimes. In those case, it prima facie appears obvious that ``TN slice'' should be the time-reflection symmetric Cauchy slice. All minimal QESs for all time-reflection symmetric boundary regions are contained in  this slice. And in fact \textit{if} all QESs were supposed to be spacelike-separated, then they would have to be contained in the time-reflection symmetric slice (otherwise they would be timelike-separated from their image under the time reflection symmetry). So, naively, the Python's lunch conjecture for spacelike-separated surface is still completely consistent with a TN describing the time-reflection symmetric slice.

However, here we come up against the examples of Section~\ref{sec:egs}: time-reflection symmetric slices that nevertheless do \textit{not} contain the QESs relevant for the general Python's Lunch conjecture proposed above. We appear to have found a clash between two principles supported by numerous evidence from AdS/CFT: first, the maximin prescription and its various descendants as the identifiers of relevant QESs in generic spacetimes, and two, modeling the bulk-to-boundary map as a tensor network whenever the bulk has time reflection symmetry. If the former is correct, then all of the previous insights about tensor networks in AdS/CFT appear to be suspect -- which is highly surprising given the significant progress derived from them. If the latter is correct, then it is possible that the reconstruction complexity is much smaller than suggested by a maximin-supported proposal: implementing Grover search on the much-smaller bounce in between two bulges will need to undo significantly less postselection.

Given the stark difference between these conclusions, a first principles derivation of bulk reconstruction complexity is desirable. We conclude by offering a concrete approach towards that end, whose implementation is left to future work. In the python's lunch proposal, the source of complexity is the post-selection happening in the TN bulk-to-boundary map as demonstrated by the $S_{\rm gen}$ drop between the relevant bulge and throat in the geometry. Therefore, one way to check the python's lunch proposal is via a direct gravitational computation of the correct amount of post-selection in the bulk-to-boundary map. Indeed, this post-selection imprints itself in the small overlaps in the inner product of boundary states dual to orthogonal bulk states. One can calculate such overlaps (or, more precisely, their norm-squared average) using the gravitational path integral. Doing so will bypass the TN picture -- which we argued is not reliable --  and let gravity itself determine the salient features of the bulk-to-boundary map such as its complexity.

\section*{Acknowledgments} It is a pleasure to thank A. Folkestad, D. Harlow, A. Levine, and Z. Yang for valuable discussions. NE is supported in part by NSF grant no. PHY-2011905, by the U.S. Department of Energy under Early Career Award DE-SC0021886, by the John Templeton Foundation via the Black Hole Initiative, by the Sloan Foundation, by the Heising-Simons Foundation, and by funds from the MIT physics department. GP was supported by the University of California, Berkeley; by the Department of Energy through DE-SC0019380 and DE-FOA-0002563; by AFOSR award FA9550-22-1-0098; and by an IBM Einstein Fellowship at the Institute for Advanced Study. ASM is supported by the National Science Foundation under Award Number 2014215.

\appendix

\section{Obtaining the solution of Sec.~\ref{sec:3-1} using Green's function}\label{app-1}

This is particularly simple for the scalar field using AdS$_2$ Greens functions. here we use AdS$_2$ coordinates
\begin{align}
ds^2 = \frac{1}{\cos^2 \rho}(-d\tau^2 + d\rho^2),
\end{align}
where $t=\tau$ and $x= \tan(\rho)$.

We will find the retarded Green's function:
\begin{align}\label{eq-greendelta}
(\Box - m^2) G_R(x,x') = \frac{1}{\sqrt{-g}}\delta^{(2)}(x-x')
\end{align}
Using the following ansatz:
\begin{align}
    G_R(x,x') = f_R(\sigma(x;x'))\theta(\tau-\tau')\theta(\tau-\tau' - | \rho-\rho'|)
\end{align}
with:
\begin{align}
    \sigma(\tau,\rho;\tau',\rho') = \frac{\cos(\tau-\tau')-\sin\rho \sin \rho'}{\cos \rho \cos \rho'}
\end{align}
Plugging into Eq. \eqref{eq-greendelta}, we get:
\begin{align}
    (\sigma^2-1)f_R''+2\sigma f_R'-\Delta(\Delta-1) f_R=0
\end{align}
where $\Delta = 1/2 + \sqrt{1/4+m^2 R^2}$. The solution is given by a Legendre function. Putting everything together we get:
\begin{align}
    G_R = -\frac{1}{2} P_{\Delta-1}(\sigma)\theta(\tau-\tau')\theta(\tau-\tau' - | \rho-\rho'|)
\end{align}

Given some initial time symmetric conditions $\psi_0(\tau=0,\rho)$ and $\partial_\tau\psi_0(\tau,\rho)\rvert_{\tau=0}=0$, for  $\tau >0$ we have:
\begin{align}
\psi (\tau,\rho=0) = -\frac{1}{2} \int_{-\infty}^{\infty} d\rho'~\psi_0(0,\rho') \left(P'_{\Delta-1}(\sigma) \frac{\sin(\tau)}{\cos\rho \cos\rho'} \theta(\tau-|\rho-\rho'|)+ P_{\Delta-1}(\sigma) \delta (\tau-|\rho-\rho'|)\right)
\end{align}
and
\begin{align}
    \sigma(\tau,\rho;\tau',\rho') = \frac{\cos(\tau-\tau')-\sin\rho \sin \rho'}{\cos \rho \cos \rho'}
\end{align}
We can now use the following dilaton equation of motion to find the dilaton profile along the $\rho=0$ line.
\begin{align}
\phi(\tau, \rho=0)+\partial_\tau^2 \phi(\tau, \rho=0) = \frac{\kappa}{2} \left(m^2 \psi(\tau, \rho=0)^2 - (\partial_\tau \psi(\tau, \rho=0))^2 \right)
\end{align}

The resulting value of $\phi(t,x=0)$ is plotted in Fig.~\ref{fig-analyticphi} below, in agreement with Fig.~\ref{fig-Phi11}.

\begin{figure}
\begin{center}
\includegraphics[width=0.5\textwidth]{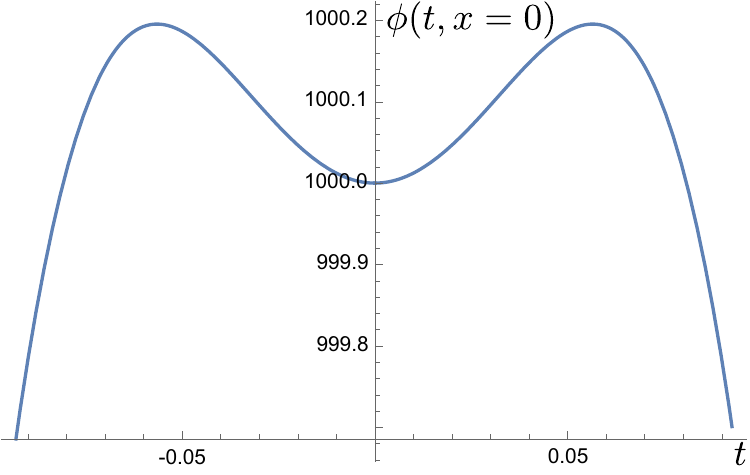}
\caption{Analytic solution to the dilaton profile on the $x=0$ axis for the construction of Sec.~\ref{sec:3-1}.}\label{fig-analyticphi}
\end{center}
\end{figure}

\bibliographystyle{jhep}
\bibliography{all}

\end{document}